\newcommand{\Nesetril}{Ne\v set\v ril}
\newcommand{\numP}{\ensuremath{\#\mathrm{P}}} 
\newcommand{\HomsTo}[1]{\ensuremath{\oplus \prb{HomsTo}{#1}}}
\newcommand{\parp}{\ensuremath{\oplus\mathrm{P}}}
\newcommand{\FP}{\ensuremath{\mathrm{FP}}}
\newcommand{\prb}[1]{\textsc{#1}}
\newcommand{\parhcol}[1]{\ensuremath{\oplus \prb{HomsTo}{#1}}}
\newcommand{\pinnedparhcol}{\ensuremath{\oplus \prb{PinnedHomsTo}H}}
\newcommand{\rpinnedparhcol}{\ensuremath{\oplus r\prb{-PinnedHomsTo}H}}
\newcommand{\paris}{\ensuremath{\oplus\prb{IS}}}
\newcommand{\calI}{\ensuremath{\mathcal{I}}}
\newcommand{\posints}{\ensuremath{\mathbb{N}_{>0}}}
\newcommand{\dist}{\ensuremath{d}}
\newcommand{\indicator}{\mathbbm{1}}
\newcommand{\GFtwo}{\ensuremath{\mathrm{GF}(2)}}
\DeclareMathOperator{\Hom}{Hom}
\DeclareMathOperator{\HomPin}{HomPin}
\DeclareMathOperator{\Aut}{Aut}
\DeclareMathOperator{\Orb}{Orb}
\newcommand{\bj}{\mathbf{j}}
\newcommand{\bv}{\mathbf{v}}
  \newcommand\calJ{\mathcal J}
  \newcommand\pin{p} 
 \newcommand{\id}{\mathrm{id}}
\newcommand{\ceil}[1]{\lceil {#1}\rceil}
\newcommand{\floor}[1]{\lfloor {#1}\rfloor}
\newtheorem{theorem}{Theorem} 
\newtheorem{lemma}[theorem]{Lemma}
\newtheorem{corollary}[theorem]{Corollary}
\newtheorem{conjecture}[theorem]{Conjecture}
\theoremstyle{definition}
\newtheorem{definition}[theorem]{Definition}
\newtheorem{observation}[theorem]{Observation}
\def\imod#1{\allowbreak\mkern10mu({\operator@font mod}\,\,#1)}
\date{25 April, 2014}                                        
\title{The Complexity of Counting\\ Homomorphisms to Cactus Graphs Modulo~2%
\thanks{
The research leading to these results has received funding from 
the European Research Council under the European Union's Seventh Framework Programme (FP7/2007-2013) ERC grant agreement no.\ 334828. The paper 
reflects only the authors' views and not the views of the ERC or the European Commission. The European Union is not liable for any use that may be made of the information contained therein.
Some of the initial research was supported by the EPSRC grant EP/I011528/1.
An extended abstract of this paper  appeared in the proceedings of STACS 2014.
 Authors' address: Department of Computer Science, University of Oxford, Wolfson Building, Parks Road, Oxford, OX1~3QD, UK.}}
\author{Andreas G\"obel \and Leslie Ann Goldberg \and David Richerby}
\begin{document}


\maketitle 

\begin{abstract}
A homomorphism from a graph~$G$ to a graph~$H$ 
is a function from~$V(G)$ to~$V(H)$ that preserves edges.
Many combinatorial structures that
arise in mathematics and in computer science
can be represented naturally as graph homomorphisms and as weighted sums of
graph homomorphisms.
In this paper, we study the
complexity of counting homomorphisms modulo~$2$. 
The complexity of modular counting  
was introduced by Papadimitriou and Zachos and it has been pioneered by
Valiant who famously introduced a problem for which counting modulo~$7$
is easy where counting modulo~$2$ is intractable. Modular counting
provides a rich setting in which to study the structure of homomorphism 
problems. In this case,
the structure of the graph~$H$
has a big influence on the complexity of the problem. Thus,
our approach is
graph-theoretic. We give a complete solution for the class of cactus graphs,
which are connected graphs in which every edge belongs to at most one cycle.
Cactus graphs arise in many applications such as the modelling of wireless sensor networks
and the comparison of genomes. We show that, for some cactus graphs~$H$,
counting homomorphisms to~$H$ modulo~$2$
can be done in polynomial time. For every other fixed cactus graph~$H$, the problem
is complete in the complexity class~$\parp$ which is a wide complexity class
to which every problem in the polynomial hierarchy can be reduced (using randomised reductions).
Determining which $H$ lead to tractable problems can be done in polynomial time.
Our result builds upon the work of Faben and Jerrum, who gave a dichotomy for the case in which $H$ is a tree.
\end{abstract}


\section{Introduction}

A homomorphism from a graph~$G$ to a graph~$H$ 
is a function from~$V(G)$ to~$V(H)$ that preserves edges. That is,
the function maps every edge of~$G$ to an edge of~$H$.
Many combinatorial structures that
arise in mathematics and in computer science
can be represented naturally as graph homomorphisms. 
For example, the proper $q$-colourings of a graph~$G$
correspond to the homomorphisms from~$G$ 
to the $q$-clique. The independent sets of~$G$
correspond to the homomorphisms from~$G$
the $2$-vertex connected graph with one self-loop
(vertices of~$G$ which are mapped to the self-loop are out
of the corresponding independent set; vertices which are
mapped to the other vertex are in it).
Partition functions in statistical physics 
such as the Ising model, the Potts model, and the
hard-core model arise naturally as weighted
sums of homomorphisms. See, for example, \cite{BG, GGJT}.

In this paper, we study the
complexity of counting homomorphisms modulo~$2$. 
For graphs $G$ and $H$, let $\Hom(G,H)$ denote 
the set of homomorphisms from~$G$ to~$H$. 
For each fixed~$H$,
we study the computational problem~$\HomsTo{H}$,
which is the problem of computing~${\left|\Hom(G,H)\right|} \bmod 2$,
given an input graph~$G$.
 
The structure of the graph~$H$
has a big influence on the complexity of~$\HomsTo{H}$. 
For example, consider the graphs~$H_1$, $H_2$ and~$H_3$
depicted in Figure~\ref{fig:hard-easy-hard}.
Our result implies that $\HomsTo{H_1}$ is 
complete for the class $\parp$ 
(with respect to polynomial-time Turing reductions).
 The graph~$H_2$ is   constructed by moving the top right ``bristle''
from~$H_1$ down to the bottom right.
Under the standard assumption
that $\parp \neq \FP$ (which will be explained below),
moving this bristle
makes the problem easier --- our result implies that $\HomsTo{H_2}$ is  
solvable in polynomial time.
$H_3$ is constructed by moving the top bristle
from left to right in~$H_2$. This again makes the problem more difficult ---
$\HomsTo{H_3}$ is $\parp$-compete.
\begin{figure}[t]
\begin{center}
\begin{tikzpicture}[scale=.4,node distance = 1.5cm]
\tikzstyle{vertex}=[fill=black, draw=black, circle, inner sep=1.5pt]

\foreach \x in {4,6} \node[vertex] (a\x) at (\x,6) {};
\foreach \x in {2,4,6} \node[vertex] (b\x) at (\x,4) {};
\foreach \x in {0,2,4,6} \node[vertex] (c\x) at (\x,2) {};
\foreach \x in {0,2,4,6} \node[vertex] (d\x) at (\x,0) {};

\draw (a4)--(a6)--(b6)--(b4)--(c4)--(c6)--(d6)--(d4)--(c4)--(c2)--(d2)--(d0)--(c0)--(c2)--(b2)--(b4)--(a4);

\node[vertex] (1) at (-1,3) {};
\node[vertex] (2) at (3,-1) {};
\node[vertex] (4) at (3,7) {};
\node[vertex] (3) at (7,7) {};

\draw (c0)--(1);
\draw (d4)--(2);
\draw (a6)--(3);
\draw (a4)--(4);

\node at (3,-3) {$H_1$};
\end{tikzpicture}
\hspace{1.2cm}
\begin{tikzpicture}[scale=.4,node distance = 1.5cm]
\tikzstyle{vertex}=[fill=black, draw=black, circle, inner sep=1.5pt]

\foreach \x in {4,6} \node[vertex] (a\x) at (\x,6) {};
\foreach \x in {2,4,6} \node[vertex] (b\x) at (\x,4) {};
\foreach \x in {0,2,4,6} \node[vertex] (c\x) at (\x,2) {};
\foreach \x in {0,2,4,6} \node[vertex] (d\x) at (\x,0) {};

\draw (a4)--(a6)--(b6)--(b4)--(c4)--(c6)--(d6)--(d4)--(c4)--(c2)--(d2)--(d0)--(c0)--(c2)--(b2)--(b4)--(a4);

\node[vertex] (1) at (-1,3) {};
\node[vertex] (2) at (3,-1) {};
\node[vertex] (3) at (7,-1) {};
\node[vertex] (4) at (3,7) {};

\draw (c0)--(1);
\draw (d4)--(2);
\draw (d6)--(3);
\draw (a4)--(4);

\node at (3,-3) {$H_2$};
\end{tikzpicture}
\hspace{1.2cm}
\begin{tikzpicture}[scale=.4,node distance = 1.5cm]
        \tikzstyle{vertex}=[fill=black, draw=black, circle, inner sep=1.5pt]

\foreach \x in {4,6} \node[vertex] (a\x) at (\x,6) {};
\foreach \x in {2,4,6} \node[vertex] (b\x) at (\x,4) {};
\foreach \x in {0,2,4,6} \node[vertex] (c\x) at (\x,2) {};
\foreach \x in {0,2,4,6} \node[vertex] (d\x) at (\x,0) {};

\draw (a4)--(a6)--(b6)--(b4)--(c4)--(c6)--(d6)--(d4)--(c4)--(c2)--(d2)--(d0)--(c0)--(c2)--(b2)--(b4)--(a4);

\node[vertex] (1) at (-1,3) {};
\node[vertex] (2) at (3,-1) {};
\node[vertex] (3) at (7,-1) {};
\node[vertex] (4) at (7,7) {};

\draw (c0)--(1);
\draw (d4)--(2);
\draw (d6)--(3);
\draw (a6)--(4);

\node at (3,-3) {$H_3$};
\end{tikzpicture}
\caption{$\HomsTo{H_1}$ and $\HomsTo{H_3}$ are $\parp$-complete, but $\HomsTo{H_2}$ is in $\FP$.\label{fig:hard-easy-hard}}
\end{center}
\end{figure}
The goal of this research is to study the complexity of
$\HomsTo{H}$ for every fixed graph~$H$ and
to determine for which graphs~$H$ the problem is
in~$\FP$, for which graphs it is $\parp$-complete, and whether
there are any graphs~$H$ for which the problem has intermediate complexity.
In this paper, we give a complete solution to this problem for 
the class of \emph{cactus graphs}.

A cactus graph is a connected graph  
in which every edge belongs to at most one cycle.
Cactus graphs were first defined by Harary and Uhlenbeck~\cite{HU}
who attributed them to the physicist Husimi and therefore called them
\emph{Husimi Trees}. Cactus graphs  
arise, for example, in the modelling of 
wireless sensor networks~\cite{wireless}
and in the comparison of genomes~\cite{genome}.
Some NP-hard graph problems
can be solved in polynomial time on cactus graphs~\cite{cactusalg}.

\subsection{The complexity of modular counting}

The complexity of modular counting is an interesting topic
with some surprising results and
we only mention a few highlights here.
First, it is important to understand that~$\parp$ 
(which was first studied in~\cite{GP86, PZ83})
is a very large complexity class. 
We will think of this class from the 
point of view of function computation, so~$\parp$
consists of all problems of the form 
``compute $f(x) \bmod 2$'' where computing~$f(x)$
is a problem in~$\numP$.
$\parp$ is sufficiently powerful that there is randomised
polynomial-time reduction \cite{T91} 
from every problem in the polynomial hierarchy to some problem in~$\parp$.
Thus, subject to the natural hypothesis that
problems in the higher levels of the polynomial hierarchy
are not solvable in (randomised) polynomial time, 
we conclude that $\parp$-complete problems are much harder
than problems in~$\FP$,
which
is the class of of function-computation problems that are solvable in polynomial
time.

The complexity of counting modulo~$2$ 
is different from the complexity of 
decision problems or counting problems.  
First, consider an NP-complete decision problem.
The mod-$2$ counting version of this problem can be intractable, 
as you might expect (for example, counting vertex covers
or independent sets modulo~$2$ is $\parp$-complete~\cite{Val06})
but it can also be
tractable.
As an example, consider the problem of 
counting proper $3$-colourings of a graph modulo~$2$.
There are an even number of $3$-colourings
that use exactly $3$ colours, since there are $6$ permutations
of these colours. There are also an even number of $3$-colourings
that use exactly $2$ colours, since 
the colours can be swapped.
But it is easy to count $1$-colourings. Thus, it is easy 
to count all proper colourings modulo~$2$.
Next, consider a $\numP$-complete counting problem.
The mod-$2$ counting version of this problem
can be intractable or tractable, as the examples 
given above illustrate. As another example where the
mod-$2$ counting version is tractable, consider the problem of computing the permanent
of a matrix
modulo~$2$. Since $-1\equiv 1\bmod 2$, the
permanent is equal to the determinant modulo~$2$, so it can 
readily be computed in polynomial time.

Another interesting aspect of modular counting is the fact that the
value of the modulus can affect the tractability of the
problem. As an example, consider 
the well-known work of Valiant~\cite{Val06}
which identified a certain satisfiability problem
where satisfying assignments are easy to compute modulo~7
and difficult to compute modulo~2.

\subsection{Dichotomies for graph homomorphism problems}

Determining the border between tractability and intractability for large classes of modular counting problems is an
important step towards understanding the structure of the problems themselves.
In this paper we work within the context of graph homomorphism problems
because graph homomorphisms are general enough to
capture a wide variety of combinatorial problems,  
yet they exhibit sufficient structure that dichotomies exist.
Hell and \Nesetril~\cite{HN} pioneered this direction
by completely classifying undirected graphs according to the difficulty
of the graph homomorphism decision problem.
They showed if a fixed graph~$H$
has a self-loop, or is bipartite then the problem of determining whether
an input graph has a homomorphism to~$H$ is in P.
For every other fixed graph~$H$, the decision problem is NP-complete.

Over recent years, dichotomy
theorems have also been established for the problem of counting
graph homomorphisms and computing weighted sums of homomorphisms.
Dyer and Greenhill~\cite{DG}
showed that the problem of counting homomorphisms to~$H$
is solvable in polynomial time if every component of~$H$ is
an isolated vertex,
a complete graph with all self-loops present, or
a complete bipartite graph with no self-loops. For every other~$H$,
it is~$\numP$-complete. In particular, there are no graphs~$H$
for which the problem has intermediate complexity.
This dichotomy was extended to the problem of
computing weighted sums of homomorphisms to~$H$.
A dichotomy was given by Bulatov and Grohe \cite{BG}
for the case where the weights are positive,
by Goldberg, Grohe, Jerrum and Thurley~\cite{GGJT}
for the case where the weights are real,
and by Cai, Chen and Lu~\cite{CL} for the case where the weights are complex.

\subsection{The complexity of counting graph homomorphisms modulo~$2$}

The first results on the complexity of counting graph homomorphisms modulo~$2$
were obtained by Faben and Jerrum~\cite{Faben,FJ} who
made some important structural discoveries which we also use.

An \emph{involution} of a graph is an automorphism of order~$2$.
If $\sigma$ is an automorphism of a graph~$H$
then $H^\sigma$ denotes the  subgraph of~$H$ induced by the fixed points of~$\sigma$.

\begin{lemma} (\cite[Lemma  3.3]{FJ})
\label{lem:same}
If $H$ is a graph and $\sigma$ is an involution of~$H$
then, for any graph $G$, 
${\left| \Hom(G,H) \right|} \equiv {\left| \Hom(G,H^\sigma) \right|} \imod 2$.
\end{lemma}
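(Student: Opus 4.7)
The plan is to exploit the fact that $\sigma$, being an involution, induces a $\mathbb{Z}/2\mathbb{Z}$-action on the set $\Hom(G,H)$ by post-composition, and then to count fixed points of this action modulo~$2$.

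More concretely, I would first define the map $\Phi \colon \Hom(G,H) \to \Hom(G,H)$ by $\Phi(h) = \sigma \circ h$. I would check two routine things: that $\sigma \circ h$ is actually a homomorphism (it is, because $\sigma$ is a graph automorphism and composition of homomorphisms is a homomorphism), and that $\Phi$ has order~$2$ (since $\Phi^2(h) = \sigma \circ \sigma \circ h = \id \circ h = h$, using that $\sigma$ is an involution). Consequently the orbits of $\Phi$ on $\Hom(G,H)$ all have size $1$ or $2$, so
\[
|\Hom(G,H)| \equiv |\mathrm{Fix}(\Phi)| \imod 2,
\]
where $\mathrm{Fix}(\Phi) = \{h \in \Hom(G,H) : \sigma \circ h = h\}$.

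The second step is to identify $\mathrm{Fix}(\Phi)$ with $\Hom(G,H^\sigma)$. A homomorphism $h$ lies in $\mathrm{Fix}(\Phi)$ iff for every $v \in V(G)$, $\sigma(h(v)) = h(v)$, i.e.\ iff $h(v)$ is a fixed point of $\sigma$. Since $H^\sigma$ is by definition the subgraph of $H$ induced by the fixed points of $\sigma$, such an $h$ is exactly a homomorphism from $G$ to $H^\sigma$ (the edge condition in $H$ restricts to the edge condition in the induced subgraph $H^\sigma$, using that any edge $h(u)h(v)$ with both endpoints fixed by $\sigma$ is an edge of $H^\sigma$ by definition of induced subgraph). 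Conversely every homomorphism $G \to H^\sigma$ extends trivially to a homomorphism $G \to H$ fixed by $\Phi$. Combining the two steps yields the claimed congruence.

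There is no real obstacle here: the only point requiring any care is the identification of fixed points of the action with homomorphisms into the induced subgraph~$H^\sigma$, where one must be explicit that $H^\sigma$ is \emph{induced} (so that being a homomorphism into $H^\sigma$ is precisely the same constraint as being a homomorphism into~$H$ whose image lies in the fixed-point set). Everything else is a direct application of the pairing argument that underlies, e.g., McKay's proof of Cauchy's theorem~\cite{McKay}.
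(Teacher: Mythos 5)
Your proof is correct and is precisely the orbit-counting argument that Faben and Jerrum use for this lemma; the present paper simply cites their result without reproducing the proof. The two steps you isolate (pairing off non-fixed homomorphisms under post-composition by $\sigma$, then identifying the fixed homomorphisms with $\Hom(G,H^\sigma)$ via the fact that $H^\sigma$ is the \emph{induced} subgraph on $\mathrm{Fix}(\sigma)$) are exactly the content of the original argument.
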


The lemma is useful because it enables us to reduce the problem of
counting homomorphisms to~$H$ modulo~$2$
to the problem of counting  homomorphisms to~$H^\sigma$ modulo~$2$. 
This leads naturally
to the idea of reduction by involutions. 
 Let $\rightarrow$ be the relation on graphs  where
$H \rightarrow H'$ if and only if there is an involution~$\sigma$ of~$H$
such that $H'=H^\sigma$. Let
$\rightarrow^*$ be the transitive closure of  $\rightarrow$.
Faben and Jerrum showed that repeatedly applying $\rightarrow$ to a graph~$H$
reduces~$H$ to a unique involution-free graph, up to isomorphism.

\begin{lemma} \label{lem:unique}
(\cite[Theorem  3.7]{FJ})
Let $H$ be any graph.
If $H\rightarrow^* H'$ and $H\rightarrow^* H''$ and $H'$ and $H''$ are involution-free
then $H'$ is isomorphic to $H''\!$.
\end{lemma}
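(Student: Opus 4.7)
The plan is to reduce the statement to a modular analogue of Lov\'asz's homomorphism counting theorem~\cite{lovasz}. Iterating Lemma~\ref{lem:same} along the two reduction chains $H\rightarrow^* H'$ and $H\rightarrow^* H''$ yields $|\Hom(G,H')|\equiv|\Hom(G,H'')|\imod 2$ for every graph~$G$. It then suffices to prove the following modular Lov\'asz statement: any two involution-free graphs satisfying this congruence for every~$G$ are isomorphic.

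To move from arbitrary to injective homomorphism counts I would use the classical factorisation identity $|\Hom(G,K)|=\sum_{\theta}|\mathrm{Inj}(G/\theta,K)|$, where $\theta$ ranges over partitions of $V(G)$ and $\mathrm{Inj}$ counts injective homomorphisms. The identity comes from factoring every homomorphism $G\to K$ uniquely as the canonical quotient onto its image followed by an injection into~$K$. Indexing both sides by the isomorphism type of the quotient gives a triangular system with unit diagonal (ordered by number of vertices), so it is invertible over~$\mathbb{Z}$ and therefore over $\GFtwo$. The congruence on homomorphism counts thus transfers to $|\mathrm{Inj}(G,H')|\equiv|\mathrm{Inj}(G,H'')|\imod 2$ for every~$G$.

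Setting $G=H'$ gives $|\Aut(H')|=|\mathrm{Inj}(H',H')|\equiv|\mathrm{Inj}(H',H'')|\imod 2$, using that any injective endomorphism of a finite graph is an automorphism. This is where the involution-freeness hypothesis bites: by Cauchy's theorem at the prime~$2$ (as in McKay's short proof~\cite{McKay}), a finite group has odd order if and only if it contains no element of order~$2$, so involution-freeness of~$H'$ forces $|\Aut(H')|$ to be odd. Hence $|\mathrm{Inj}(H',H'')|$ is odd and in particular nonzero, yielding an injective homomorphism $H'\hookrightarrow H''$; symmetrically there is an injection $H''\hookrightarrow H'$. Equality of vertex and edge counts then promotes either injection to an isomorphism. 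I expect the main technical obstacle to be setting up the M\"obius-type inversion cleanly---tracking multiplicities when distinct partitions give isomorphic quotients, and handling the possible appearance of self-loops in quotient graphs when~$H$ itself has loops---while the genuinely novel ingredient is the appeal to Cauchy's theorem to extract odd parity of $|\Aut(H')|$ from the involution-freeness hypothesis.
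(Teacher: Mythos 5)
This lemma is cited by the paper from Faben and Jerrum~\cite{FJ} without proof, so there is no in-paper argument to compare against. Your proof is correct and, as far as I can tell, follows the same route as the original: iterate Lemma~\ref{lem:same} along both reduction chains to obtain $|\Hom(G,H')|\equiv|\Hom(G,H'')|\imod 2$ for every~$G$, invert the partition-sum identity to transfer this congruence to injective homomorphism counts, and then apply Cauchy's theorem (as in~\cite{McKay}) to deduce that $|\mathrm{Inj}(H',H'')|\equiv|\Aut(H')|\imod 2$ is odd because $\Aut(H')$ contains no element of order~$2$; the symmetric statement for~$H''$ gives injections in both directions, which finiteness promotes to an isomorphism. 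Note that this Cauchy observation is exactly the one the present paper itself invokes in the proof of Theorem~\ref{thm:gadget-hard}. The one place I would tighten the write-up is the inversion step: rather than speaking of a ``triangular system with unit diagonal'' indexed by isomorphism types, it is cleaner to argue directly by induction on $|V(G)|$ from $|\mathrm{Inj}(G,K)|=|\Hom(G,K)|-\sum_{\theta\ne\mathrm{triv}}|\mathrm{Inj}(G/\theta,K)|$, which avoids the multiplicity bookkeeping you flag and handles the fact that quotients may acquire self-loops even when $G$ is simple (so the induction must range over all graphs, loops allowed). With that adjustment your outline is complete and correct.
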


Finally,  they showed that,
to classify the complexity of counting homomorphisms to~$H$,
one only needs to study the complexity of counting homomorphisms
to its connected components. 

\begin{lemma} (\cite[Theorem 6.1]{FJ})
\label{lem:components}
Let $H$ be an involution-free graph. If $H$ has a connected
component $H_1$
such that $\parhcol{H_1}$ is   $\parp$-hard with respect to polynomial-time
Turing reductions, then $\parhcol{H}$ is also $\parp$-hard.
\end{lemma}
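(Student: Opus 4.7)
Plan: The goal is to exhibit a polynomial-time Turing reduction $\parhcol{H_1} \leq_T \parhcol{H}$. First observe that, since $H$ is involution-free, its components $H_1,\ldots,H_k$ are pairwise non-isomorphic (swapping two isomorphic components would be an involution of $H$) and each $H_i$ is itself involution-free (any involution of $H_i$ extends to an involution of $H$ by the identity on the other components). Because $|\Hom(\cdot,H_1)|$ is multiplicative over the connected components of the input, I may reduce to the case that the input $G$ is connected, by processing each connected component separately and multiplying the resulting parities.

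When $G$ is connected, every homomorphism $\phi:G\to H$ has image in a single component of $H$; in particular, fixing any $v\in V(G)$ and pinning it to some $u\in V(H_1)$ forces the entire image of $\phi$ into $H_1$. Hence
\begin{equation*}
|\Hom(G,H_1)| \;=\; \sum_{u\in V(H_1)}\#\{\phi:G\to H : \phi(v)=u\}.
\end{equation*}
So it is enough to compute each pinned count $\#\{\phi:G\to H : \phi(v)=u\} \bmod 2$ using the $\parhcol{H}$ oracle; each such query is an instance of $\pinnedparhcol$, and summing the $|V(H_1)|$ resulting parities yields $|\Hom(G,H_1)|\bmod 2$.

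The crux is therefore the reduction $\pinnedparhcol \leq_T \parhcol{H}$. This is where involution-freeness is essential: by Cauchy's theorem~\cite{McKay}, $|\Aut(H)|$ is odd (an involution would be an order-$2$ element), so every $\Aut(H)$-orbit on $V(H)$ has odd size, and two vertices in the same orbit always yield equal pinned hom counts via the bijection $\phi\mapsto\sigma\circ\phi$ for $\sigma\in\Aut(H)$. One then constructs ``orbit-pinning'' gadgets, attached to $G$ at $v$, whose mod-$2$ pinned hom counts span the orbit indicator vectors over $\GFtwo$; a small $\GFtwo$-linear system over the orbits then recovers each individual pinned count using $O(|V(H)|)$ queries to $\parhcol{H}$. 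I expect this pinning reduction to be the main technical obstacle; the appearance of the notation $\pinnedparhcol$ in the preamble suggests that a preceding section of the paper develops exactly this tool, which I would invoke as a black box, with the present lemma following cleanly from it together with the component decomposition in the first paragraph.
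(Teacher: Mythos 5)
The paper itself does not prove this lemma---it is cited directly from Faben and Jerrum~\cite[Theorem~6.1]{FJ}---so strictly speaking there is no in-text proof to compare against. Your reconstruction is nonetheless correct in its essentials and invokes exactly the pinning machinery that the present paper develops (Theorem~\ref{thm:pinning}). Your opening observations are right: the components of an involution-free graph are pairwise non-isomorphic (swapping two isomorphic ones would be an involution), each is involution-free, and multiplicativity of $\left|\Hom(\cdot,H_1)\right|$ over components of the input reduces to connected~$G$. The crux, as you note, is that for connected~$G$ and any fixed $v\in V(G)$ one has $\left|\Hom(G,H_1)\right| = \#\{\phi\in\Hom(G,H) : \phi(v)\in V(H_1)\}$. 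This suggests a slightly cleaner finish than the one you sketch: because the components of~$H$ are pairwise non-isomorphic, no automorphism of~$H$ moves a vertex of~$H_1$ into another component, so $V(H_1)$ is itself a union of $\Aut(H)$-orbits. Hence the pinning function $\pin$ with $\pin(v)=V(H_1)$ and $\pin(w)=V(H)$ for all other~$w$ is $1$-restrictive, and a single application of Theorem~\ref{thm:pinning} computes $\left|\Hom(G,H_1)\right|\bmod 2$ per component of~$G$. Your detour through the individual pinned counts $\#\{\phi : \phi(v)=u\}$ also works, but note that these are never directly queryable---only orbit-level pins are---and you recover them only because orbit sizes are odd (Cauchy); the ``$\GFtwo$-linear system'' you mention is internal to the proof of Lemma~\ref{lemma:implement} rather than something one performs at the oracle-query level.
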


To complement Lemma~\ref{lem:components}, it is easy
to see that if $\parhcol{H_i}$ is solvable in polynomial time for  
every connected component~$H_i$ of~$H$
then $\parhcol{H}$ is also solvable in polynomial time. 
Faben and Jerrum used the structural results to
give a  dichotomy for the complexity of $\parhcol{H}$
when $H$ is a tree. Define the ``involution-free reduction'' $H'$ of a graph~$H$
to be the lexicographically-minimal involution-free graph
such that $H\rightarrow^* H'$.
We can then state  their result as follows.

\begin{theorem}(\cite[Theorem  3.8]{FJ})
\label{thm:faben}
If $H$ is a tree then $\parhcol{H}$ is $\parp$-complete
if the involution-free reduction of $H$ has more than one vertex.
Otherwise, it is in~\FP.
\end{theorem}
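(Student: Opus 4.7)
The plan is to handle the tractable and hard directions of the dichotomy separately, making essential use of the structural lemmas above.

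For the tractable direction, suppose the involution-free reduction $H'$ is a single vertex. Because $H$ is a tree, each graph in the reduction chain $H \rightarrow H_1 \rightarrow \cdots \rightarrow H'$ is itself a tree: the fixed-point subgraph $H_i^\sigma$ of an involution $\sigma$ of a tree $H_i$ is acyclic (as a subgraph of a tree) and connected (the centre of $H_i$ is fixed by $\sigma$, and the fixed-point set of a tree automorphism is connected). Hence $H'$ is a single isolated vertex with no self-loop. Iterating Lemma~\ref{lem:same} along the chain yields
\[
|\Hom(G,H)| \equiv |\Hom(G,H')| \pmod{2},
\]
and $|\Hom(G,H')|$ equals $1$ if $G$ has no edges and $0$ otherwise, which is decidable in polynomial time. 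Hence $\parhcol{H}\in\FP$.

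For the hardness direction, suppose $H'$ has more than one vertex. Iterating Lemma~\ref{lem:same} gives a polynomial-time Turing reduction from $\parhcol{H}$ to $\parhcol{H'}$, so it suffices to prove that $\parhcol{H'}$ is $\parp$-hard. I may therefore assume that $H$ is itself an involution-free tree with at least two vertices and prove hardness directly.

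The main technical step is a \emph{pinning lemma}: one constructs, for each vertex $v$ of $H$, a gadget (a tree with a distinguished vertex) such that specifying that a vertex of the input $G$ must map to $v$ can be simulated, modulo~$2$, by attaching the gadget at that vertex. Because $H$ is involution-free, distinct vertices of $H$ can be told apart by suitable parity invariants, and standard product and substitution arguments turn these distinguishing invariants into the required gadgets. Once pinning is available, $\parp$-hardness follows by reducing from a known $\parp$-hard problem such as $\paris$: an edge $uv$ of $H$ can be used as an ``in/out'' template, with the two pinned endpoints playing the roles of the two vertices of the reflexive edge in the independent-set target graph.

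The chief obstacle is the pinning step. For an arbitrary involution-free tree $H$ with $|V(H)|\geq 2$, one must exhibit gadgets that separate every vertex of $H$ modulo~$2$ using only oracle access to $\parhcol{H}$. This rests on Jordan's theorem~\cite{Jordan} on tree centres: any tree has a unique centre that is either a vertex or an edge, and an involution-free tree must have a vertex-centre (an edge-centre would admit the flip involution swapping its two halves). An inductive argument on the rooted subtrees hanging off this centre, exploiting the fact that involution-freeness forces these rooted subtrees to be pairwise non-isomorphic and individually involution-free, then produces gadgets whose parities encode each vertex uniquely. This asymmetry, ultimately traceable to Cauchy's theorem for the automorphism group, is precisely what makes pinning, and hence $\parp$-hardness, possible.
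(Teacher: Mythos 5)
Your tractable direction is essentially right (modulo a small over-claim: the fixed-point set of an involution of a tree can be empty, as it is for a single edge, so the involution-free reduction of a tree can be the empty graph rather than a single vertex; but this only helps you), and the reduction to the case of an involution-free~$H$ via Lemma~\ref{lem:same} is the standard and correct first step.

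However, your hardness direction contains a genuine error. You claim that an involution-free tree must have a vertex-centre, on the grounds that ``an edge-centre would admit the flip involution swapping its two halves.'' This is false: the two rooted halves of an edge-centred tree must have equal height (else the centre would be a single vertex), but they need not be isomorphic, and if they are not isomorphic there is no flip. For example, take the path $x_3x_2x_1c_1$, join $c_1$ by an edge to $c_2$, and attach to $c_2$ the rooted tree with $c_2\text{--}y_1\text{--}y_2\text{--}y_3$ and $y_1\text{--}y_4$. Both $c_1$ and $c_2$ have eccentricity~$4$, which is minimal, so the centre is the edge $(c_1,c_2)$; yet the two halves have $4$ and $5$ vertices respectively and the whole tree is asymmetric and thus involution-free. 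The correct structural fact, and the one Faben and Jerrum actually use, is simply that every involution-free tree is \emph{asymmetric}: if $\Aut(T)$ were non-trivial it would contain an involution (an easy induction on $|V(T)|$, or Cauchy's theorem since $\Aut(T)$ is an iterated wreath product of symmetric groups). Asymmetry then means every orbit is a singleton, which is exactly what the Lov\'asz-style implementability argument (Lemma~\ref{lemma:implement} here, Lemmas 4.14--4.16 in \cite{FJ}) needs; no centre argument is required, and building the pinning machinery from scratch via centres as you propose would not go through.

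The last step is also too vague to stand. Saying ``an edge $uv$ of $H$ can be used as an in/out template'' does not describe a working reduction from $\paris$: which edge works depends crucially on parity conditions on the local neighbourhood (the cardinality of a set of ``out'' neighbours must be odd, and counts of certain walks must have specified parities), and this is precisely what the ``hardness gadget'' concept (Definition~\ref{def:hgad} in this paper, or Faben and Jerrum's analogous structure for trees) packages up. Without exhibiting such a gadget and verifying the mod-$2$ walk counts, the reduction does not close. So while your high-level outline (reduce to involution-free $H$, pin, reduce from $\paris$) matches the known proof, the argument for why pinning is possible is wrong as stated, and the core counting reduction is asserted rather than established.
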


Every involution-free tree is asymmetric, which means that
it has no non-trivial automorphisms. Thus, the technical work
of proving Theorem~\ref{thm:faben}
is to consider every asymmetric tree~$H$
with more than one vertex and show that $\parhcol{H}$ is $\parp$-hard.
Fortunately, this can be done without too much technical complexity.

Developing a dichotomy to cover all graphs 
seems to be much more difficult
and even the dichotomy for cactus graphs requires a 
substantial technical effort, as we will see.
Nevertheless,  there is a general conjecture
as to what the outcome would be.

\begin{conjecture}[Faben and Jerrum]
\label{conj:faben}
Let $H$ be a (not necessarily simple) graph.
$\parhcol{H}$ is in~\FP\ if the involution-free reduction of~$H$
is the empty graph, a singleton vertex (with or without a self-loop)
or a graph with two isolated vertices, exactly one of which has a self-loop.
Otherwise, it is $\parp$-complete.
\end{conjecture}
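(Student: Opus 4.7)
The plan decomposes the conjecture into its tractable and hard directions. For the tractable side, Lemma~\ref{lem:same} reduces the computation of $|\Hom(G,H)| \bmod 2$ to $|\Hom(G,H')| \bmod 2$, where $H'$ is the involution-free reduction, which is well-defined by Lemma~\ref{lem:unique}. For each listed $H'$, the modular count is computable in polynomial time by inspection: the empty graph contributes $0$ unless $G$ is empty; a looped singleton gives $1$ always; an unlooped singleton gives $1$ iff $G$ is edgeless; and the two-vertex case with exactly one loop reduces to counting modulo~$2$ the edgeless connected components of~$G$.

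For the hardness side, Lemma~\ref{lem:components} lets us restrict attention to connected, involution-free $H$ whose involution-free reduction is none of the listed trivial graphs. The backbone of the attack is a structural induction on~$H$, coupled with the following toolkit: (i)~\emph{pinning}, i.e.\ gadgetry that effectively fixes the image of a designated vertex of~$G$, which is the central technical device already used by Faben and Jerrum for trees and, as the present paper develops at length, for cacti; (ii)~\emph{involution gadgets} that, via Lemma~\ref{lem:same}, allow one to replace subgraphs by involution-equivalent ones without changing the parity count; (iii)~\emph{reduction from canonical hard problems} such as $\paris$, parity counting of homomorphisms to a hard subgraph of~$H$, or, when $H$ is bipartite, to a base case on a small bipartite graph.

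A practical route is to split on the structure of~$H$: (a)~$H$ contains a self-loop, which we handle via independent-set-style reductions familiar from Dyer and Greenhill~\cite{DG}; (b)~$H$ is loopless and bipartite, where we induct on the bipartite skeleton; and (c)~$H$ is loopless and non-bipartite, where we exploit an odd cycle to embed parity-hard structure. In each case one builds vertex-assignment gadgets whose partition function, taken modulo~$2$, is an indicator supported on a single orbit of $\Aut(H)$; since $H$ is involution-free, an application of McKay's proof~\cite{McKay} of Cauchy's theorem (as exploited by Faben and Jerrum) guarantees that every orbit has odd size, so pinning a vertex is equivalent, modulo~$2$, to pinning an entire orbit.

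The main obstacle, and the reason the conjecture remains open beyond trees and cacti, is that these gadgets must simultaneously tolerate the full combinatorial variety of involution-free connected graphs. For cacti the block structure yields a canonical decomposition along cut-vertices in which every block is an edge or a cycle, giving a tractable induction; for general~$H$ the block-cut tree may contain arbitrary $2$-connected components, and natural pinning gadgets can themselves fail to be involution-free, causing the modular count that the gadget is supposed to implement to collapse. Overcoming this likely requires either a new invariant that controls parity under gadget composition or a completely different approach, such as a Holant-style reduction along the lines of Guo, Lu and Valiant~\cite{GLV13}, that bypasses explicit pinning altogether.
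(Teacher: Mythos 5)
The statement you were asked to prove is Conjecture~\ref{conj:faben}, and it is genuinely a conjecture: the paper does not prove it, and neither does anyone else at the time of writing. What the paper establishes is only the special case of cactus graphs (Theorem~\ref{thm:main}), via the machinery of hardness gadgets, partial hardness gadgets, mosaics (Theorem~\ref{thm:hardness-gadget}), pinning (Theorem~\ref{thm:pinning}), and the key fact that orbit-preserving endomorphisms of involution-free cactus graphs are automorphisms (Lemma~\ref{lemma:cactus-pin-whole}). Your submission, as you yourself concede in the final paragraph, is not a proof either: the entire hardness direction --- $\parp$-completeness of $\parhcol{H}$ for every $H$ whose involution-free reduction lies outside the short tractable list --- is left as a programme (case split on loops/bipartiteness, ``vertex-assignment gadgets whose partition function modulo~2 is an indicator supported on a single orbit'', induction on the bipartite skeleton) whose key constructions are never carried out and are exactly what is not known how to do beyond trees and cacti. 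One concrete place where the plan breaks: oddness of $|\Aut(H)|$ (via McKay's proof of Cauchy's theorem) does give that every orbit has odd size, but that only lets you pin a vertex of $G$ to an \emph{orbit}, not to a chosen vertex; the paper gets from orbits to effective single-vertex pinning only by attaching a copy of $H$ and invoking Lemma~\ref{lemma:cactus-pin-whole}, and no analogue of that lemma is available for general involution-free $H$ (even for cacti it fails without involution-freeness, as the even-cycle example before that lemma shows). So naming the toolkit does not close the gap; the gap is the theorem itself.

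The tractable direction of your write-up is essentially correct and matches the standard argument (Lemma~\ref{lem:same} to pass to the involution-free reduction, which is well defined by Lemma~\ref{lem:unique}, then direct computation for each reduced target, with Lemma~\ref{lem:components} handling disconnected targets in the hard direction), but note a small slip in the two-vertex case: if $H'$ consists of two isolated vertices of which exactly one has a self-loop, then every vertex of $G$ incident to an edge must map to the looped vertex, so $|\Hom(G,H')|=2^{c}$ where $c$ is the number of \emph{isolated vertices} of $G$; hence the answer modulo~2 is $1$ if and only if $G$ has no isolated vertex, which is not ``the parity of the number of edgeless components''. This does not affect polynomial-time solvability, but the formula as you state it is wrong.
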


\subsection{Our result}

Recall that a cactus graph is a connected, simple graph
in which every edge belongs to at most one cycle.
Our main result gives a proof of Faben and Jerrum's conjecture
for cactus graphs.

\begin{theorem}
\label{thm:main}
Let $H$ be a simple graph in which every edge belongs to at most one cycle.
If the involution-free reduction of~$H$ has at 
most one vertex then $\parhcol{H}$ is solvable in polynomial time.
Otherwise, $\parhcol{H}$ is complete for $\parp$ with respect to polynomial-time Turing reductions.
\end{theorem}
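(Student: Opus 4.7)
For the tractable direction, suppose the involution-free reduction $H^*$ of $H$ has at most one vertex. Since $H$ is fixed, $H^*$ is well-defined by Lemma~\ref{lem:unique} and may be precomputed. Iterating Lemma~\ref{lem:same} gives $|\Hom(G,H)| \equiv |\Hom(G,H^*)| \pmod{2}$. Because $H$ is simple, $H^*$ is either the empty graph or a single vertex without a self-loop, and in both cases $|\Hom(G,H^*)| \bmod 2$ depends only on the number of vertices and the existence of edges in~$G$. Hence $\parhcol{H}\in\FP$.

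For the hard direction, suppose $|V(H^*)| \geq 2$. Any graph on more than one vertex with no edges has an involution swapping two vertices, so $H^*$ has at least one edge, and that edge lies in some connected component $C$ with $|V(C)|\geq 2$. The graph $C$ is a connected cactus (every edge of $H^*$, hence of $C$, still lies in at most one cycle), and $C$ is involution-free, since any non-trivial involution of $C$ would extend by the identity on $V(H^*)\setminus V(C)$ to a non-trivial involution of $H^*$. By Lemma~\ref{lem:same}, $\parhcol{H^*}$ reduces to $\parhcol{H}$ in polynomial time, and by Lemma~\ref{lem:components}, $\parhcol{C}$ reduces to $\parhcol{H^*}$. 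So it suffices to prove that $\parhcol{C}$ is $\parp$-hard for every connected involution-free cactus $C$ with $|V(C)|\geq 2$.

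If $C$ is a tree, the hardness is immediate from Theorem~\ref{thm:faben}. The substantive new work covers the case where $C$ contains at least one cycle. My plan is to pass through a pinned variant: define $\pinnedparhcol$ so that the input additionally specifies the images of a bounded number of vertices of~$G$. Because $C$ is involution-free, every orbit of $\Aut(C)$ has odd size (by a Cauchy/McKay-style argument of the type underlying Lemma~\ref{lem:same}), so summing over orbits of pinned images gives a parity-preserving relation between $\parhcol{C}$ and $\pinnedparhcol$, making the two problems equivalent modulo~$2$. With pinning in hand, the goal is to reduce a known $\parp$-hard problem, such as $\paris$, to $\pinnedparhcol$ by gadgets constructed inside~$C$.

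The main obstacle is producing such gadgets uniformly for all involution-free connected cacti containing a cycle. I would analyse the block--cut decomposition of~$C$: every such graph is built from bridges and induced cycles meeting at cut-vertices, and asymmetry must surface somewhere in this decomposition --- either in the arrangement of pendant sub-cacti on some cycle, in the length of an individual cycle, or in the relative position of blocks in the rooted block tree. Rooting $C$ canonically (for instance at a Jordan centre~\cite{Jordan} of its block tree) and proceeding case by case, the aim is to exhibit in each situation a pinned neighbourhood of $C$ that acts as a hard target: an asymmetric subtree from which Theorem~\ref{thm:faben} applies, an odd cycle from which one reduces $\parhcol{K_3}$ modulo~$2$, or a bespoke cactus gadget encoding $\paris$. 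Verifying that this block-level case analysis is exhaustive and that every case admits a parity-preserving implementation of the gadget is the principal technical burden the remainder of the paper must discharge.
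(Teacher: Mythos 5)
Your top-level reduction is exactly the paper's: the tractable direction by iterating Lemma~\ref{lem:same} to reach the involution-free reduction, and the hard direction by observing that the reduction has an involution-free connected component~$C$ on $\geq 2$ vertices (itself a cactus), then chaining Lemma~\ref{lem:same} and Lemma~\ref{lem:components} to reduce $\parhcol{C}$ to $\parhcol{H}$. Up to the point where you declare ``it suffices to prove $\parhcol{C}$ is $\parp$-hard for every connected involution-free cactus $C$ with $|V(C)|\geq 2$,'' your argument is correct and identical in substance to the paper's proof of Theorem~\ref{thm:main}.

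The remainder, however, is a plan rather than a proof, and it contains a genuine gap at the exact place the paper expends most of its effort. The paper closes the argument by citing two theorems it has already proved: Theorem~\ref{thm:hardness-gadget}, that every involution-free cactus with more than one vertex contains a \emph{hardness gadget} (established via an inductive decomposition at cut vertices using partial hardness gadgets, mosaics, 2,3-paths, and shortcuts), and Theorem~\ref{thm:gadget-hard}, that a hardness gadget in an involution-free graph yields $\parp$-hardness via a reduction from $\paris$. Neither of these is established by your sketch. Moreover, your one concrete claim about pinning---that odd orbit sizes make $\parhcol{C}$ and the pinned variant ``equivalent modulo~2''---is too quick: the orbit-pinning machinery (Theorem~\ref{thm:pinning}) only lets you pin a vertex to an \emph{orbit} of $\Aut(C)$, and when $C$ is involution-free but not asymmetric (e.g.\ $H_4$ in Figure~\ref{fig:easy-hard}) an orbit can have size~$3$, so you cannot directly pin to a specific vertex of your gadget and you risk counting ``inconsistent'' homomorphisms that mix up the automorphic images of $s$ and~$t$. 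The paper sidesteps this by augmenting $G_\Gamma$ with a disjoint copy of~$H$, pinning each copied vertex to its own orbit, and invoking Lemma~\ref{lemma:cactus-pin-whole} (orbit-preserving endomorphisms of an involution-free cactus are automorphisms) to force all gadget vertices to move coherently. Without an analogue of that mechanism, your pinning step does not go through for non-asymmetric~$C$, and so the tree case handled via Theorem~\ref{thm:faben} is the only part of your hardness argument that is actually complete.
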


To prove Theorem~\ref{thm:main} 
we have to investigate all involution-free cactus graphs, and
not just those that happen to be asymmetric.
The reason for this is that, unlike the situation for trees, there
are involution-free cactus graphs that have non-trivial automorphisms.
An example is the graph~$H_4$ in Figure~\ref{fig:easy-hard}.
This graph has an automorphism of order~$3$ which rotates the cycle.
It has no automorphism of order~$2$.
Incidentally, it is easy to see that the graph~$H_5$ in the figure
has an involution that moves all vertices. Thus,
$\parhcol{H_5}$ is in~\FP. Our result implies that $\parhcol{H_4}$ is
$\parp$-complete.
\begin{figure}[t]
\begin{center}
\begin{tikzpicture}[scale=.4,node distance = 1.5cm]
        \tikzstyle{vertex}=[fill=black, draw=black, circle, inner sep=1.5pt]

\node (c1) at (0,0) {$H_4$};
\draw (c1) circle (2);

\foreach \x in {0,40,80,120,160,200,240,280,320} \node[vertex] (\x) at (\x:2) {};
\foreach \x in {40,80,160,200,280,320} \node[vertex] (b\x) at (\x:3) {};
\foreach \x in {80,200,320} \node[vertex] (c\x) at (\x:4) {};

\foreach \x in {40,80,160,200,280,320} \draw (\x)--(b\x);
\foreach \x in {80,200,320} \draw (b\x)--(c\x);
\node at (240:4) {};
\end{tikzpicture}\hspace{1.5cm}
\begin{tikzpicture}[scale=.4,node distance = 1.5cm]
        \tikzstyle{vertex}=[fill=black, draw=black, circle, inner sep=1.5pt]
\node (c2) at (0,0) {$H_5$};
\draw (c2) circle (2);

\foreach \x in {0,30,60,90,120,150,180,210,240,270,300,330} \node[vertex] (\x) at (\x:2) {};
\foreach \x in {30,60,120,150,210,240,300,330} \node[vertex] (b\x) at (\x:3) {};
\foreach \x in {60,150,240,330} \node[vertex] (c\x) at (\x:4) {};

\foreach \x in {30,60,120,150,210,240,300,330} \draw (\x)--(b\x);
\foreach \x in {60,150,240,330} \draw (b\x)--(c\x);
\node at (80:4) {};
\end{tikzpicture}
\caption{\parhcol{H_4} is $\parp$-complete but $\parhcol{H_5}$ is in~\FP.\label{fig:easy-hard}}
\end{center}
\end{figure}

In order to prove 
the hardness result in Theorem~\ref{thm:main},  we introduce three graph-theoretic notions: hardness gadgets, partial hardness
gadgets, and mosaics. Hardness gadgets and partial hardness gadgets
are, as the name suggests, structures for proving $\parp$-hardness.
Mosaics are graphs built on unions of $4$-cycles. They are what is left in 
inductive cases where hardness gadgets don't exist and we use them inductively
to prove overall hardness.
Our approach is therefore recursive: we show how to decompose involution-free
cactus graphs at cut vertices
in such a way that every  component contains at least
one of these three induced structures.
We then show how to combine these structures to obtain hardness
gadgets in the original graph. 
If an asymmetric graph~$H$ contains a hardness gadget,
then it is relatively easy to show that  $\parhcol{H}$ is
$\parp$-compete --- the proof is by reduction from the problem
of counting weighted independent sets modulo~$2$, generalising  the argument for trees.
We will discuss the situation in which $H$ 
is not asymmetric presently.

Even when $H$ is asymmetric,
the most difficult part of the argument is showing that every non-trivial
involution-free cactus graph does actually contain a hardness
gadget. The presence of cycles in the  graph 
greatly
complicates the
structure of the argument, hence the need to 
define hardness gadgets, partial hardness gadgets and mosaics and
to 
decompose cactus graphs into components with these three
different structures, which can then be combined to form hardness gadgets.

When the graph has non-trivial automorphisms,
there is a further complication.  
Suppose that~$G$ and~$H$ are graphs and
that $p$ is a function from $V(G)$ to $2^{V(H)}$.
A homomorphism~$f$ from~$G$ to~$H$ is said to satisfy
the ``pinning'' function~$p$ if, for every $v\in V(G)$, 
we have $f(v)\in p(v)$. 
Now suppose that $H$ is an involution-free graph containing a hardness-gadget.
The high-level strategy for proving that $\HomsTo{H}$ is $\parp$-hard
is to first reduce
the problem of counting weighted independent sets
modulo~$2$ to the
problem of counting pinned homomorphisms from~$G$ to~$H$
(modulo~$2$) and then to reduce the latter problem to $\HomsTo{H}$.
This pinning approach 
has been used successfully in dichotomy theorems
in related domains~\cite{BG, CH,  wbool}.
When $H$ is asymmetric, the application
of pinning works smoothly.
Building on work of Lov\'asz~\cite{lovasz},
Faben and Jerrum reduced the pinned problem
to the unpinned one for the case in which the pinning function
pins some vertex to an orbit in the automorphism group of~$H$.
When $H$ is asymmetric (as it is when $H$ is a tree), the orbit is just
a single vertex, and this is just what is required.
If $H$ is not asymmetric then we do not know how to
pin a vertex of~$G$ to a particular vertex in~$H$.
To get around this,
we augment~$G$ with a copy of~$H$ and we pin every
vertex in the copy to its own orbit in the automorphism group of~$H$.
We show that every  homomorphism 
from  an involution-free cactus graph 
to itself
that respects the orbits of all of its vertices
is in fact an automorphism of~$H$, and this enables us to solve the problem.

Theorem~\ref{thm:main} gives a dichotomy for cactus graphs.
If the involution-free reduction of~$H$
has at most one vertex then $\parhcol{H}$ is in~\FP. Otherwise, it is
$\parp$-complete. Furthermore,
the meta-problem of determining which is the case, given input~$H$,
is  computationally
easy.  Finding an involution of~$H$ reduces
in polynomial time to computing the size of $H$'s automorphism group modulo~$2$.
The latter problem is in~\FP\ for cactus graphs because, for example, cactus graphs are planar and have tree-width at most~2.

All of the graph-theoretic work in this paper depends heavily on the structure of cactus graphs. For example,
the construction of hardness gadgets relies on the fact that removing the edges of a cycle splits the vertices of that cycle into different components. New ideas would be needed to generalise our result to a wider class of graphs.

\subsection{Related Work}
 
Modular counting has also been considered
in the context of Boolean 
constraint satisfaction problems (CSPs).
Graph homomorphism problems are CSPs but they
are not known to be representable as Boolean CSPs.
The known results are as follows. Faben  
\cite{Fab08, Faben} provided a modular counting dichotomy for 
unweighted Boolean CSPs. This was extended by
Guo et al.\ \cite{GHLX11} 
to the weighted case.
Guo et al.\ \cite{GLV13} also provided a dichotomy for the solution of
Boolean Holant problems modulo~2.


\section{Preliminaries}
   
   Our analysis is based primarily on decomposing graphs, particularly
by splitting them at cut vertices, as follows.

\begin{definition}
Given a graph $H$ with a cut vertex $v$, let $H'_1,\dots,H'_\kappa$ be the connected components of $H-\{v\}$.
The \emph{split} of $H$ at $v$ is the set of graphs $\{H_1,\dots,H_\kappa\}$, where $H_j=H[V(H'_j)\cup\{v\}]$.
 \end{definition}

Our proofs of \parp{}-completeness are based on a weighted generalisation of the independent set problem.  

\begin{definition} 
Let $\calI(G)$ denote the set of independent sets of a graph~$G$.
The \emph{Parity Generalised Independent Set} problem 
$\paris(\lambda,\mu)$
is defined as follows, where the parameters~$\lambda$ and~$\mu$ are natural numbers.\\
\emph{Name:} $\paris(\lambda,\mu)$.\\
\emph{Input:} A graph $G$.\\
\emph{Output:} 
$\displaystyle Z_{\lambda,\mu}(G) = \sum_{J\in \calI(G)} {\lambda}^{|J|} {\mu}^{|V(G)\setminus J|} \imod 2$.
\end{definition}

The following observation follows from the fact that counting independent
sets modulo~$2$ is $\parp$-complete~\cite{Val06}.
\begin{observation}
\label{obs:paris-complexity}
If $\lambda$ or~$\mu$ is even, then
$\paris(\lambda,\mu)$ is in $\FP$.
Otherwise, for every graph,
$Z_{\lambda,\mu}(G) \equiv |\calI(G)| \imod 2$, so   $\paris(\lambda,\mu)$ is $\parp$-complete.\end{observation}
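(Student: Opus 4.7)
The plan is to split into the three cases suggested by the statement and handle each by direct manipulation of the defining sum modulo~$2$.

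First, suppose $\lambda$ is even. Then $\lambda^{|J|} \equiv 0 \imod 2$ whenever $|J| \geq 1$, so every term in the defining sum vanishes modulo~$2$ except the one corresponding to $J = \emptyset$ (which is always an independent set). This gives $Z_{\lambda,\mu}(G) \equiv \mu^{|V(G)|} \imod 2$, which is computable in polynomial time. Symmetrically, if $\mu$ is even, then $\mu^{|V(G)\setminus J|} \equiv 0 \imod 2$ unless $J = V(G)$, and $V(G)$ is an independent set only when $G$ has no edges. Hence $Z_{\lambda,\mu}(G) \equiv \lambda^{|V(G)|} \imod 2$ if $G$ is edgeless and $Z_{\lambda,\mu}(G) \equiv 0 \imod 2$ otherwise; again polynomial-time computable.

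Next, suppose both $\lambda$ and $\mu$ are odd. Then $\lambda^{|J|} \equiv 1 \imod 2$ and $\mu^{|V(G) \setminus J|} \equiv 1 \imod 2$ for every $J \in \calI(G)$, so every term in the sum contributes~$1$ modulo~$2$, giving $Z_{\lambda,\mu}(G) \equiv |\calI(G)| \imod 2$. Since computing $|\calI(G)| \bmod 2$ is $\parp$-complete by Valiant~\cite{Val06}, this establishes $\parp$-completeness of $\paris(\lambda,\mu)$ via a trivial reduction (an instance $G$ of $\oplus\prb{IS}$ is already an instance of $\paris(\lambda,\mu)$ with the same answer), and conversely $\paris(\lambda,\mu)$ is clearly in $\parp$ since the summation defines a $\numP$ computation. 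There is no real obstacle here; the statement is essentially a parity computation and its proof is a one-line calculation in each case, so the only care needed is to check the boundary cases $J = \emptyset$ and $J = V(G)$ correctly.
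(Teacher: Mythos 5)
Your proof is correct, and since the paper gives no explicit argument for this observation (it is stated as following immediately from Valiant's $\parp$-completeness of counting independent sets mod~2), your direct mod-2 calculation is exactly the intended one. The case analysis and the boundary checks at $J=\emptyset$ and $J=V(G)$ are all handled properly.
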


We use the following notation. 
$[n]$ denotes the set $\{1,\ldots,n\}$.
Given two graphs $G$ and $H$ (not necessarily vertex-disjoint), let $G\cup H$ denote the graph
$(V(G)\cup V(H), E(G)\cup E(H))$.  
If $E$ is a set of edges, let $V(E)$ denote the set of endpoints of edges in~$E$
and let
$G \cup E$ denote the graph $G \cup (V(E),E)$.
Given sets
$V'\subseteq V(G)$ and $E'\subseteq E(G)$,
let $G-V' = G[V(G)\setminus V']$ and
let $G-E' = (V(G),E(G)\setminus E')$. We use the phrase ``$j$-walk'' in a graph to 
refer to a walk of length~$j$.

In drawings of graphs, a wavy line between two vertices indicates an
omitted portion of the graph whose structure is unimportant, except
that it contains a unique shortest path between the endpoints of the
wavy line.

Graphs are simple: they do not have multiple edges or self-loops. 
We use $\dist_H(u,v)$ to denote the length of a shortest path from~$u$ to~$v$ in~$H$. If $S\subseteq V(H)$ then $\dist_H(u,S)$ denotes 
$\min\{\dist_H(u,v) \mid v \in S\}$.  
We use $\Gamma_H(v)$ to denote the set of neighbours of vertex~$v$ in~$H$
and $\deg_H(v)$ to denote the degree of~$v$ in $H$.
 We normally view a path~$P$ as a graph with vertex set $V(P)$ and edge set $E(P)$.
However, where convenient, we specify~$P$ by simply listing the vertices of
the path, in order.  As such, if $P=x_1\dots x_\ell$ and $P'=y_1\dots
y_{\ell'}$, we write $Pz$ for the path $P\cup \{(x_\ell,z)\}$, $PP'$
for the path $P\cup \{(x_\ell,y_1)\}\cup P'$ and so on.
We also use $\ell(P)$ to denote $|E(P)|$, the length of the path~$P$.
Similarly, we view a cycle~$C$ as a graph, 
but we sometimes specify~$C$ by listing its vertices in order.
The length of a cycles is $\ell(C)=|E(C)|$.
 
For $v_1,\ldots,v_r\in V(H)$, we use the
tuple $(H,v_1,\ldots,v_r)$ to denote the graph~$H$ together with the (ordered) list of distinguished vertices.
A graph with a single distinguished vertex is often called a
\emph{rooted graph}, where the distinguished vertex is the
\emph{root}. 
$(H,v_1,\ldots,v_r)$ is said to be \emph{involution-free} if 
$H$ has no
involution
that fixes each of the distinguished vertices.
We use $\Aut(H)$ to denote the automorphism group of~$H$
and, for $v\in V(H)$, we use $\Orb_H(v)$ to denote the
set of vertices of~$H$ in the orbit of~$v$
under the action of~$\Aut(H)$.


\section{Pinning}
 
As we discussed in the introduction,
pinning is a useful technique for obtaining hardness
results related to graph homomorphisms and constraint satisfaction problems.
Where it can be achieved, pinning allows 
a reduction from the ``list'' version of the problem,  
which considers homomorphisms from~$G$ to~$H$
in which the images of the vertices of~$G$ are constrained by lists,
to the ordinary version of the problem.
The list version is referred to as the ``conservative case'' in 
the constraint satisfaction community.

As in the introduction, let $\pin$ be a function  from $V(G)$
to $2^{V(H)}$.
A homomorphism $f\in \Hom(G,H)$ satisfies
the pinning function~$\pin$
if, for every $v\in V(G)$, 
$f(v)\in \pin(v)$.   
Let $\HomPin(G,H,\pin)$ be the set of homomorphisms
from~$G$ to~$H$ that satisfy the pinning function~$\pin$.
It will be useful to consider the following computational problem, which is parameterised by
a graph~$H$.
\begin{description}
\item  \emph{Name:} $\pinnedparhcol$.
\item \emph{Input:} A graph $G$ and a pinning function $\pin\colon V(G) \to 2^{V(H)}$. 
\item \emph{Output:} ${\left|\HomPin(G,H,\pin)\right|} \imod 2$.
\end{description} 

Faben and Jerrum
give a polynomial-time Turing 
reduction from $\pinnedparhcol$ to $\parhcol{H}$
\cite[Cor 4.18]{FJ}
for the following special case: 
for vertices $v_1,v_2\in V(G)$ and $h_1,h_2\in V(H)$,
$p(v_1)=\Orb_H(h_1)$ and $p(v_2)=\Orb_H(h_2)$;
for 
every other vertex~$w$ of~$G$, $\pin(w)=V(H)$, 
so vertex~$w$ 
is unconstrained
by the pinning. 
We require a generalisation in which more vertices of~$G$ 
are constrained by the pinning.
We start by introducing the relevant concepts and notation.

Given rooted graphs $(G,x)$ and $(G'\!,x')$, the graph $(G,x)\cdot
(G'\!,x')$ is formed by taking the union of disjoint copies of $G$ and
$G'$ and identifying $x$ and~$x'\!$.
For each $y\in H$, we would like to produce a rooted graph
$(\Theta_{H,y},z)$ such that, for any rooted graph $(G,x)$, the number
of homomorphisms from~$G$ to~$H$  in which~$x$ is pinned to~$y$ is congruent
modulo~2 to the number of 
homomorphisms 
from $(G,x)\cdot
(\Theta_{H,y},z)$ to~$H$.  
However, it is not clear that such graphs always
exist so Faben and Jerrum adopts  more subtle approach.

Suppose that $(G,x)$ is   a rooted graph   and 
that~$H$ is an involution-free graph with $V(H) =
\{h_1, \dots, h_m\}$.
For $i\in[m]$,
let $\pin_i$ be the pinning function
\begin{equation*}
    \pin_i(y) = \begin{cases}
        \ \{h_i\} & \text{if $y=x$} \\
        \ V(H)    & \text{otherwise.}
    \end{cases}
\end{equation*}
Let $ v_i = {\left| \HomPin(G,H,\pin_i) \right| } \imod 2$.
Finally, let $\bv_H(G,x)$ be the vector
$(v_1,\dots, v_m)$.
We say that a vector 
$(u_1,\ldots,u_m)$
in $\GFtwo^m$ is \emph{consistent for~$H$}
if, for every pair of vertices $h_i$ and $h_j$ in
the same orbit of $\Aut(H)$, 
$u_i=u_j$.
Faben and Jerrum observe that $\bv_H(G,x)$ is consistent for~$H$.

Let $+$ and $*$ be component-wise addition and multiplication of
vectors in $\GFtwo^m\!$.   Then we have~\cite[Lemma 4.11]{FJ}  
\begin{equation*}
    \bv_H((G,x)\cdot(G'\!,x'),x) = \bv_H(G,x) * \bv_H(G'\!,x').
\end{equation*}

Now, given a graph~$H$ with $V(H) =
\{h_1, \dots, h_m\}$ 
and a vector ${\mathbf v}\in \GFtwo^m$,
we say that ${\mathbf v}$ is  
\emph{implementable for $H$} if there is a set $\{(\Theta_1,z_1),
\dots, (\Theta_k,z_k)\}$ of graphs such that ${\bf v} = \sum_{j=1}^k
\bv_H(\Theta_j,z_j)$.   The key result allowing pinning is the
following~\cite[Lemmas~4.14--4.16]{FJ}

\begin{lemma}
\label{lemma:implement}
    Let $H$ be an involution-free graph with $V(H) =
\{h_1, \dots, h_m\}$. 
    Every vector in
    $\GFtwo^{m}$ that is consistent for~$H$ is implementable for~$H$.
\end{lemma}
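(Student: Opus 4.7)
The plan is to prove that the set $I\subseteq\GFtwo^m$ of implementable vectors is a unital subring of $\GFtwo^m$ which coincides with the consistent subring~$C$. Closure of $I$ under $+$ is immediate: the disjoint union of two implementing families implements the sum. Closure under componentwise product~$*$ follows from the displayed identity $\bv_H((G,x)\cdot(G',x'),x)=\bv_H(G,x)*\bv_H(G',x')$ together with bilinearity of~$*$, since the product of two sums of basic implementable vectors is the sum of all pairwise products and each pairwise product is implemented by the corresponding root-identified gadget. The single-vertex rooted graph implements $\mathbf{1}=(1,\ldots,1)$, because there is exactly one map from one vertex to each~$h_i$. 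The Faben--Jerrum observation recorded in the excerpt gives $I\subseteq C$.

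Next, as a ring under $+$ and $*$, $C$ is isomorphic to $\GFtwo^k$, where $k$ is the number of orbits $O_1,\ldots,O_k$ of $\Aut(H)$ on $V(H)$; the orbit-indicators $\mathbf{1}_{O_1},\ldots,\mathbf{1}_{O_k}$ form a complete set of orthogonal primitive idempotents summing to $\mathbf{1}$. Since $\GFtwo^k$ is a Boolean ring, every unital subring is determined by the equivalence relation it induces on coordinates: the subring consists precisely of the functions that are constant on each equivalence class. Therefore $I=C$ is equivalent to showing that each orbit-indicator $\mathbf{1}_{O_j}$ lies in~$I$, or equivalently that for every pair of distinct orbits there is a vector in~$I$ that takes different values on the two corresponding coordinates.

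The main step, and the principal obstacle, is producing the orbit-indicators as implementable vectors. Our primary candidate is the rooted graph $(H,h)$ itself: its $j$th coordinate equals $|\{\phi\in\Hom(H,H):\phi(h)=h_j\}|\bmod 2$. Automorphisms of~$H$ contribute, by the orbit--stabiliser theorem, exactly $|\Aut(H)_h|\cdot\mathbf{1}_{\Orb_H(h)}$; since $H$ is involution-free, the stabiliser $\Aut(H)_h$ contains no element of order~$2$, so by Cauchy's theorem (via the McKay argument~\cite{McKay}) $|\Aut(H)_h|$ is odd, and the automorphism contribution equals $\mathbf{1}_{\Orb_H(h)}$ modulo~$2$. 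The remaining task is to cancel, modulo~$2$, the contributions from non-automorphism endomorphisms, possibly only after combining $(H,h)$ with auxiliary rooted subgraphs via the ring operations in~$I$. The plan for that cancellation is a McKay-style pairing of such endomorphisms, grouped by their image, using Lemma~\ref{lem:same} applied to proper subgraphs of~$H$, together with products $(H,h)\cdot(H,h')$ and sums in~$I$ to refine the output until every pair of orbits is separated. This cancellation argument is the step that demands the most care and is where the bulk of the technical work will lie.
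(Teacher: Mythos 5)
Your ring-theoretic framing is correct and is a clean way to reduce the lemma to a single task: the set $I$ of implementable vectors is indeed a unital subring of $\GFtwo^m$ (closure under $+$ by symmetric difference of implementing families, closure under $*$ via the displayed product identity, unit from $K_1$), $I\subseteq C$, and a unital subring of the Boolean ring $C\cong\GFtwo^k$ equals $C$ iff it separates every pair of coordinates, equivalently contains every orbit indicator $\mathbf{1}_{O_j}$. That reduction is sound (for the unproved assertion about Boolean subrings: given separation of $O_a$ from each $O_b$, take for each $b\neq a$ some $u_b\in R$ with $(u_b)_{O_a}=1$, $(u_b)_{O_b}=0$ --- possible since $R$ contains $\mathbf 1$ and is closed under $+$ --- and multiply; the $*$-product is $\mathbf 1_{O_a}$). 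Your observation that the automorphism contribution to $\bv_H(H,h)$ is $|\Aut(H)_h|\cdot\mathbf 1_{\Orb_H(h)}$, which reduces mod $2$ to $\mathbf 1_{\Orb_H(h)}$ by Cauchy's theorem, is also correct: injective endomorphisms of a finite graph are exactly its automorphisms.

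However, the step you yourself flag as ``where the bulk of the technical work will lie'' is in fact not done, and the pointers you give do not obviously lead anywhere. You propose cancelling the non-injective endomorphisms by a ``McKay-style pairing, grouped by their image, using Lemma~\ref{lem:same} applied to proper subgraphs of~$H$''. Lemma~\ref{lem:same} relates $|\Hom(G,H)|$ to $|\Hom(G,H^\sigma)|$ for an involution $\sigma$ of the \emph{target}; it says nothing about counting endomorphisms of $H$ that land in a given subgraph, and the proper induced subgraphs of $H$ are not generally involution-free in any case, so it is unclear how it would even be applied. More to the point, the cancellation that actually works (and is the route taken by Faben and Jerrum, building on Lov\'asz~\cite{lovasz}) is not a pairing and does not go through subgraphs of~$H$: one writes the number of \emph{injective} rooted homomorphisms from $(H,h)$ to $H$ as a M\"obius-inversion alternating sum over the lattice of partitions $\theta$ of $V(H)$,
\[
\mathrm{inj}(H,H,h\to h_j)\;=\;\sum_{\theta}\mu(\hat 0,\theta)\,\bigl|\HomPin(H/\theta,H,\pin_{\theta,j})\bigr|,
\]
where $H/\theta$ is the quotient graph and $\pin_{\theta,j}$ pins the image of $h$ to $h_j$. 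Reduced mod $2$, the left side is $\mathbf 1_{\Orb_H(h)}(j)$ as you computed, and the right side is a $\GFtwo$-sum of vectors $\bv_H(H/\theta,\,[h]_\theta)$ over those $\theta$ with $\mu(\hat 0,\theta)$ odd, hence witnesses implementability. The gadgets that appear are \emph{quotients} of $H$, not subgraphs, and the argument is an inclusion--exclusion, not a fixed-point-free pairing. So while your framework is right and would carry the proof once the orbit indicators are produced, the central implementing construction is missing, and the sketch you give for it points in the wrong direction.
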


Suppose that~$G$ and~$H$ are graphs. 
We say that a pinning function $\pin\colon V(G) \to 2^{V(H)}$
is \emph{$r$-restrictive}
if there is a set $\{x_1,\ldots,x_r\}\subseteq V(G)$ 
such that 
\begin{itemize}
\item  for every $i\in[r]$ there is 
subset $W_i \subseteq V(H)$
such that $\pin(x_i) = \bigcup_{h\in W_i} \Orb_H(h)$, and
\item for every $w\in V(G)\setminus \{x_1,\ldots,x_r\}$, $\pin(w)=V(H)$.
\end{itemize}
Consider the following computational problem, which is parameterised
by a graph~$H$ and a natural number~$r$.
\begin{description}
\item  \emph{Name:} $\rpinnedparhcol$.
\item \emph{Input:} A graph $G$ and a $r$-restrictive pinning function $\pin\colon V(G) \to 2^{V(H)}$. 
\item \emph{Output:} ${\left|\HomPin(G,H,\pin)\right|} \imod 2$.
\end{description} 

The following theorem suffices for our purposes.
The same proof would establish a slightly more general result.
Instead of fixing the parameter~$r$, we could allow $r$ to depend on
$|V(G)|$, as long as $r = {\mathrm O}(\log |V(G)|)$.
However, we do not need this generalisation.

\begin{theorem}
\label{thm:pinning}
Let $H$ be an involution-free graph and let $r$~be a positive integer.
There is a polynomial-time Turing reduction from \rpinnedparhcol{} to 
$\parhcol{H}$.
\end{theorem}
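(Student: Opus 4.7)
The plan is to iterate the single-pinning reduction of Faben and Jerrum, invoking Lemma~\ref{lemma:implement} once per pinned vertex. Given an input $(G,\pin)$ with pinned vertices $x_1,\ldots,x_r$ and associated sets $W_i\subseteq V(H)$ satisfying $\pin(x_i)=\bigcup_{h\in W_i}\Orb_H(h)$, define $\chi_i\in\GFtwo^m$ to be the characteristic vector of $\pin(x_i)$, so that $(\chi_i)_j=1$ iff $h_j\in\pin(x_i)$. The key observation is that each $\chi_i$ is consistent for $H$: if $h_j$ and $h_{j'}$ lie in the same orbit of $\Aut(H)$, then either both belong to $\pin(x_i)$ or neither does. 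Hence Lemma~\ref{lemma:implement} provides, for each $i$, a finite family of rooted graphs $(\Theta_{i,1},z_{i,1}),\ldots,(\Theta_{i,k_i},z_{i,k_i})$ with $\chi_i=\sum_{j=1}^{k_i}\bv_H(\Theta_{i,j},z_{i,j})$. Since $H$ is fixed and there are at most $2^m$ consistent vectors, these gadgets can be treated as constants of the reduction, and both $k_i$ and each $|V(\Theta_{i,j})|$ are bounded by a constant depending only on $H$.

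For each tuple $\bj=(j_1,\ldots,j_r)\in[k_1]\times\cdots\times[k_r]$, I would construct the graph $G^{\bj}$ by amalgamating a disjoint copy of each $(\Theta_{i,j_i},z_{i,j_i})$ with $G$, identifying $z_{i,j_i}$ with $x_i$. I would then establish the central identity
$$\bigl|\HomPin(G,H,\pin)\bigr|\equiv\sum_{\bj}\bigl|\Hom(G^{\bj},H)\bigr|\imod 2.$$
Writing $N_{h_1,\ldots,h_r}$ for the number of $f\in\Hom(G,H)$ with $f(x_i)=h_i$ for each $i$, we have
$$\bigl|\HomPin(G,H,\pin)\bigr|=\sum_{h_1,\ldots,h_r\in V(H)}\Bigl(\prod_{i=1}^r(\chi_i)_{h_i}\Bigr)N_{h_1,\ldots,h_r}.$$
Substituting each $\chi_i=\sum_{j_i}\bv_H(\Theta_{i,j_i},z_{i,j_i})$ and interchanging summations reduces the claim to the observation that, for fixed $\bj$, the quantity $\sum_{h_1,\ldots,h_r}\prod_i\bv_H(\Theta_{i,j_i},z_{i,j_i})_{h_i}\,N_{h_1,\ldots,h_r}$ counts modulo~$2$ the tuples $(f,g_1,\ldots,g_r)$ with $f\colon G\to H$, $g_i\colon\Theta_{i,j_i}\to H$, and $g_i(z_{i,j_i})=f(x_i)$ for each $i$---which are exactly the homomorphisms $G^{\bj}\to H$. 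Since $\prod_i k_i=O(1)$, the algorithm makes only a constant number of oracle calls, each on a graph of size $|V(G)|+O(1)$, and so runs in polynomial time.

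The main obstacle is the bookkeeping in the central identity, in particular verifying that attaching gadgets at $r$ different roots does not cause unintended interference. This follows because the $x_i$ are distinct vertices of $G$ and the attached copies of the $\Theta_{i,j_i}$ are otherwise disjoint: any homomorphism from $G^{\bj}$ to $H$ factors uniquely as a homomorphism $f\colon G\to H$ together with independent homomorphisms from each $\Theta_{i,j_i}$, subject only to the local agreement $g_i(z_{i,j_i})=f(x_i)$. A minor technical point is that Lemma~\ref{lemma:implement} is only an existential statement, but since $H$ is fixed and there are only finitely many consistent vectors, the required gadgets may be assumed precomputed.
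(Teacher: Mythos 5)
Your proposal is correct and follows essentially the same approach as the paper's proof: invoke Lemma~\ref{lemma:implement} on the characteristic vector of each $\pin(x_i)$ (which is consistent because $\pin(x_i)$ is a union of orbits), attach the resulting gadgets at the pinned vertices in all combinations, and sum the oracle answers, with the factorization of homomorphisms through the cut vertices giving the central congruence. The only cosmetic difference is that you organise the bookkeeping through the counts $N_{h_1,\ldots,h_r}$ rather than summing over $\phi\in\Hom(G,H)$ directly.
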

\begin{proof}

Consider an instance $(G,\pin)$ of \rpinnedparhcol{} with $|V(H)|=m$ and
let $\{x_1,\ldots,x_r\}$ be the set of vertices restricted by the $r$-restrictive pinning function~$\pin$.
     For $i\in [r]$, let $\bv_i\in
    \{0,1\}^m$ be the characteristic vector of $\pin(x_i)$ (i.e., the vector
    that has 1s in positions corresponding to elements of $\pin(x_i)$ and 0s
    in every other position).  By Lemma~\ref{lemma:implement}, for
    each~$i$, there are gadgets $(\Theta_{i,1},z_{i,1}), \dots,
    (\Theta_{i,k_i}, z_{i,k_i})$ that implement $\bv_i$ for~$H$.  
 We
 assume that~$G$, $H$ and all the graphs $\Theta_{i,j}$ are
 pairwise disjoint. 

    Let $J = [k_1]\times \dots \times [k_r]$.  For any $\bj=(j_1,
    \dots, j_r)\in J$, let $G(\bj)$ be the graph formed from $G\cup
    \Theta_{1,j_1} \cup \cdots \cup \Theta_{r,j_r}$ by identifying
    $x_i$ with $z_{i,j_i}$ for each $i\in [r]$.

    We have
    \begin{align*}
        \sum_{\bj\in J} |\Hom(G(\bj),H)|
            &= \sum_{\bj\in J}
                  \sum_{\phi\in\Hom(G,H)}
                      \prod_{i\in [r]} \left|\left\{
                          \psi\in\Hom(\Theta_{i,j_i},H)\mid
                              \psi(z_{i,j_i}) = \phi(x_i)
                      \right\}\right|                         \\
            &= \sum_{\phi\in\Hom(G,H)}
                   \sum_{\bj\in J}
                       \prod_{i\in [r]} \left|\left\{
                           \psi\in\Hom(\Theta_{i,j_i},H)\mid
                               \psi(z_{i,j_i}) = \phi(x_i)
                       \right\}\right|                         \\
            &= \sum_{\phi\in\Hom(G,H)}
                   \prod_{i\in [r]}
                       \sum_{j\in [k_i]} \left|\left\{
                           \psi\in\Hom(\Theta_{i,j},H)\mid
                               \psi(z_{i,j}) = \phi(x_i)
                       \right\}\right|\,.                         \\
\intertext{Let $\indicator_{a\in A}=1$ if $a\in A$ and equal~0,
otherwise.
Since the gadgets $(\Theta_{i,j},z_{i,j})$ implement the
characteristic vector $v_i$ of~$\pin(x_i)$, we have}
        \sum_{\bj\in J} |\Hom(G(\bj),H)|
            &\equiv \sum_{\phi\in\Hom(G,H)}
                        \prod_{i\in [r]}
                            \indicator_{\phi(x_i)\in \pin(x_i)}\imod{2} \\
            &\equiv \left|\left\{
                        \phi\in\Hom(G,H)
                            \mid \phi(x_i)\in \pin(x_i) \text{ for all $i\in[r]$}
                    \right\}\right|\imod{2}\\
& \equiv {\left|\HomPin(G,H,\pin)\right|} \imod{2}\,,                 
                        \end{align*}
So the output corresponding to $(G,p)$ is   $   \sum_{\bj\in J} |\Hom(G(\bj),H)|$,
and the reduction consists of computing $J$, and for each $\bj \in J$
constructing $G(\bj)$ and summing the oracle's answers.

    We now consider the complexity of  the reduction.  Let $b$~be
    the greatest number of gadgets required to implement a pin to any
union of orbits of vertices of~$H$     
    and let $c$ be the greatest size of all the
    gadgets used to implement any such pinning.  
    Since $H$ is a fixed
    parameter, $b$ and~$c$ are constants, independent of the input
    graph~$G$. 
Also, $\max_{i\in[r]}k_i \leq b$ and $\max_{i\in[r],j\in[k_i]} |V(\Theta_{i,j})|\leq c$.

    Computing $\sum_{\bj\in J} |\Hom(G(\bj),H)|$ requires $|J|\leq b^r $ oracle calls.
     Each of these oracle calls applies to a graph $G(\bj)$ with at most $n + r c$ 
    vertices, each of which can be trivially constructed in
    polynomial time.  Therefore, we have a polynomial-time Turing
    reduction, as claimed.
\end{proof}


\section{Gadgets} 

In this section, we define the gadgets we use to prove
\parp{}-hardness of $\parhcol{H}$ problems by reduction from \paris.
These gadgets should not be confused with the pinning gadgets of the
previous section though, in the rest of the paper, we only use the
pinning gadgets implicitly, through Theorem~\ref{thm:pinning}.

\begin{definition} 
\label{def:hgad}
A \emph{hardness gadget} in a graph $H$
is a tuple $(\beta,s,t,O,i,K,k,w)$ 
where $\beta$ is a positive integer,
$s$, $t$ and~$i$ are vertices of $H$,
$(O,\{i\},K)$ is a partition of $\Gamma_H(s)$, and
$k\colon K \to \posints$ and $w\colon K \to V(H)$
are functions.
The following conditions must be satisfied.
\begin{enumerate}
\item\label{hgad1} $|O|$ is odd.
\item\label{hgad2} For any $o\in O$ and $y\in O\cup \{i\}$, $s$~is the
  unique vertex that is adjacent to $o$ and~$y$ and has an odd number
  of $\beta$-walks to~$t$.
\item\label{hgad3} There are an even number of $(1+\beta)$-walks from
  $i$ to~$t$.
\item\label{hgad4} For all $u\in K$, $w(u)$ has an even number of $k(u)$-walks to~$u$ 
and an odd number of $k(u)$-walks to every vertex in $O\cup \{i\}$.
\end{enumerate} 
\end{definition}

These conditions simplify if $\beta=1$, since having an odd number of
$1$-walks to a vertex is the same as being adjacent to it.  In cases where
$\beta=1$, we will use this simplified condition without comment.

The construction used in our reduction from \paris{} is given formally
in Definition~\ref{def:G-Gamma}.  Given a graph~$G$ and a hardness
gadget~$\Gamma\!$, we will produce a graph $G_\Gamma$ that includes a
copy of~$V(G)$.  We call the vertices in this copy, ``$G$-vertices''.
We will consider homomorphisms from $G_\Gamma$ to $H$.  Using the
pinnings described in the previous section, we will restrict attention
to homomorphisms that map all $G$-vertices to neighbours of~$s$.
Part~\ref{hgad4} of the definition ensures that there will be an even
number of such homomorphisms that map any $G$-vertices to members
of~$K$.  These contribute nothing to the total modulo~2 so the effect
is to restrict to homomorphisms that map every $G$-vertex to
$O\cup\{i\}$.  Part~\ref{hgad3} of the definition will ensure that the
number of homomorphisms that map adjacent vertices in~$G$ to~$i$ is
even so these also do not contribute.  Thus, the homomorphisms that
remain are those in which an independent set of $G$-vertices mapped
to~$i$.  Our key technical result is that every
non-trivial, involution-free cactus graph contains a hardness gadget
(Theorem~\ref{thm:hardness-gadget}).

Our analysis is based primarily on decompositions of graphs into
their subgraphs, so we need conditions under which a hardness gadget in an
induced subgraph of~$H$ is a hardness gadget in~$H$.  The idea here is
that, if a hardness gadget satisfies the distance requirements for a
vertex~$v$, the structure of the graph ``beyond''~$v$ cannot
interfere with the gadget's paths.
 
\begin{definition}  
Consider a hardness gadget $(\beta,s,t,O,i,K,k,w)$ in~$H$ and a vertex $v\in V(H)$.
The \emph{primary distance requirement} of the gadget with respect to~$v$
is
$$\dist_H(v,O\cup \{i\}) + \dist_H(v,t) > \beta-1.$$
The \emph{secondary distance requirement} of the gadget with respect to~$v$
is that, for each $u\in K$,
$$\dist_H(v,w(u))+\dist_H(v,O\cup \{i,u\}) >k(u)-2.$$
\end{definition}

Suppose that $H_1, \dots, H_\kappa$ is a split of the graph~$H$ at some cut
vertex~$v$.  If there is a hardness gadget~$\Gamma$ in $H_1$ that
satisfies the distance restrictions for~$v$, it is easy to see that it
also satisfies the distance restrictions for all $x\in V'=V(H_2) \cup
\dots \cup V(H_\kappa)$, since any path from $H_1$ to~$V'$ must go
through~$v$.  This ensures that $\Gamma$~is also a hardness gadget
in~$H$, since the number of walks of various lengths required by the
definition of the hardness gadget cannot be affected by vertices
beyond~$v$.

In some cases, our decomposition might yield subgraphs that do not
contain hardness gadgets.  We are still able to make progress using
structures that can be combined with other parts of the graph to
produce a hardness gadget.  A partial hardness gadget is, essentially,
a simplified hardness gadget that has $K=\emptyset$ and that doesn't
yet have a ``$t$'' vertex: at a later point, we will find a vertex~$t$
with the properties necessary to produce a full hardness gadget.

\begin{definition} A \emph{partial hardness gadget} in a rooted graph $(H,x)$  
is a tuple $(s,i,O,P)$
where  
$s$ is a vertex of~$H$,
$(\{i\},O)$ is a partition of $\Gamma_H(s)$, and $P$ is a path in~$H$ 
satisfying the following conditions.
\begin{itemize}
\item $|O|$ is odd.
\item $P$ is the unique shortest path from $x$ to $i$ in $H$.
\item $Ps$ is the unique shortest path from $x$ to $s$ in $H$.
\item For each $o\in O$, $Pso$
is the unique shortest path from $x$ to $o$ in $H$.
\end{itemize}
\end{definition}


\section{Mosaics}

The final structure that can arise from our decompositions is a
subgraph made entirely from 4-cycles and from edges between vertices of
those cycles and additional vertices of degree~1.  Some mosaics (the ``shortcut
mosaics'' defined below) already contain hardness gadgets.  In the
other cases, we identify structures called ``2,3-paths'' in mosaics
and these will provide a ``$t$''~vertex for a partial hardness gadget
elsewhere in the decomposed graph.

\begin{definition} 
 A \emph{mosaic} is a rooted graph  defined as follows.  
\begin{itemize}
\item An unbristled mosaic is the one-vertex rooted graph or a rooted cactus graph that is a union of 4-cycles.
\item A \emph{mosaic}
is a rooted graph $(H,x)$ 
for which there is a 
partition $(V'\!,V'')$ of $V(H)$ such that $x$~is in~$V'\!$,
$(H[V'],x)$ is an unbristled mosaic,
and  $E(H)-E(H[V'])$ is a perfect matching between
$V''$ and a subset of~$V'\!$.
The edges of the matching are called \emph{bristles}.
\item A \emph{proper mosaic} is a
mosaic that contains at least one cycle.
\end{itemize}
\end{definition}

The graphs in Figure~\ref{fig:hard-easy-hard} would be mosaics if a
root were placed at any vertex on a cycle.
Note that every vertex of a mosaic is adjacent to at most one bristle.
Note also that the one-vertex rooted graph and a rooted edge are both
mosaics (but not proper mosaics).

\begin{definition}
    A \emph{2,3-path} in a rooted graph $(H,x)$ is a tuple
    $(P,v_2,v_3)$ such that $v_2$ and $v_3$ are in the same cycle
    of~$H$ and that, for $j\in\{2,3\}$, $\deg_H(v_j)=j$ and
    $Pv_j$ is the unique shortest $x$--$v_j$ path in~$H$.
\end{definition}

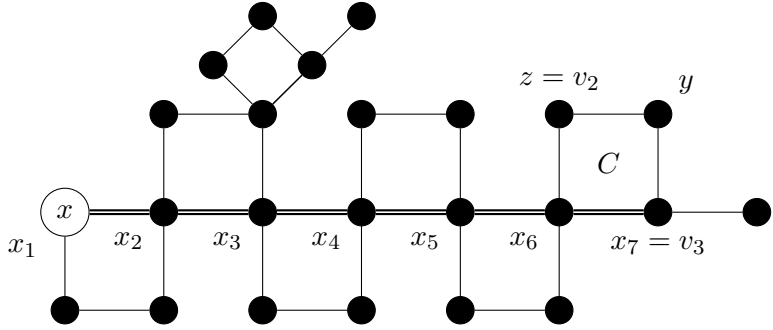
\begin{figure}[t]
\begin{center}
\begin{tikzpicture}[scale=0.65]
        \tikzstyle{vertex}=[fill=black, draw=black, circle]
	\tikzstyle{root}=[ draw=black, minimum size=1mm ,circle]

\node[root] (B1) at (0,0)[label=225:$x_1$] {$x$};

\foreach \x in {2,3,4,5} \node[vertex] (A\x) at (2*\x-2,2) {};
\node[vertex] (A6) at (10,2) [label=90:{$z=v_2$}] {};
\node[vertex] (A7) at (12,2) [label=45:$y$] {};

\foreach \x in {2,3,4,5,6} \node[vertex,label=225:$x_\x$] (B\x) at (2*\x-2,0) {};
\node[vertex] (B7) at (12,0) [label=-90:{$x_7=v_3$}] {};  
\node[vertex] (B8) at (14,0) {};
\node (C) at (11,1) {$C$};

\node[vertex] (e1) at (5,3) {};
\node[vertex] (e2) at (3,3) {};
\node[vertex] (e3) at (4,4) {};
\node[vertex] (e4) at (6,4) {};

\draw (e4)--(e1)--(A3)--(e2)--(e3)--(e1)--(A3);

\foreach \x in {1,2,3,4,5,6} \node[vertex] (C\x) at (2*\x-2,-2) {};

\foreach \x in {2,3,4,5,6} \draw (A\x)--(C\x);

\draw[thick, double] (B1)--(B2)--(B3)--(B4)--(B5)--(B6)--(B7);

\draw (B7)--(B8);
\draw (B1)--(C1);
\draw (A7)--(B7);

\draw (A2)--(A3);
\draw (A4)--(A5);
\draw (A6)--(A7);

\draw (C1)--(C2);
\draw (C3)--(C4);
\draw (C5)--(C6);

\end{tikzpicture}
\caption{A mosaic with the 2,3-path $(P, v_2, v_3)$ that can be found using Lemma~\ref{lemma:mosaic-deg23}. $P$ is drawn with a double line.\label{fig:deg23}}
\end{center}
\end{figure}

\begin{lemma}\label{lemma:mosaic-deg23} 
Every involution-free proper mosaic $(H,x)$ contains a 2,3-path.
 \end{lemma}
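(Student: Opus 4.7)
For each 4-cycle $C$ of $H$ the cactus structure supplies a unique vertex $e(C) \in V(C)$ closest to $x$---the cut vertex of $C$ on the $x$-side, or $x$ itself if $x \in V(C)$. I would call $C$ a \emph{USP-cycle} when there is a unique shortest $x$-$e(C)$ path in $H$. First I would show that the set $\mathcal{C}$ of USP-cycles is non-empty by selecting the cycle $C$ that minimises $\dist_H(x, e(C))$: if the shortest $x$-$e(C)$ path were not unique, the standard cactus fact that non-uniqueness forces some intermediate 4-cycle to be traversed between opposite vertices would supply a 4-cycle $C'$ whose entry $e(C')$ is closer to $x$ by at least~$2$, contradicting minimality.

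Next I would pick $C^* \in \mathcal{C}$ maximising $\dist_H(x, e^*)$, where $e^* = e(C^*)$, and let $\alpha, \beta$ be the two cycle-neighbours of $e^*$ in $C^*$ and $\omega$ the vertex opposite $e^*$. Since $e^*$ has USP and $\alpha, \beta$ are reached from it by a single edge, both $\alpha$ and $\beta$ inherit USP. If $\alpha$ lay on a second 4-cycle $C''$, then the tree structure of the block-cut tree (together with the fact that $x$ lives on the $e^*$-side of $C^*$) would force the entry $e(C'')$ to be $\alpha$ itself, placing $C''$ in $\mathcal{C}$ at entry-distance $\dist_H(x,e^*)+1$ and contradicting maximality. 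Hence $\alpha$ and $\beta$ lie on no cycle besides $C^*$, so their only neighbours are $e^*$, $\omega$, and possibly a bristle leaf; in particular $\deg_H(\alpha), \deg_H(\beta) \in \{2,3\}$.

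Finally I would invoke involution-freeness. Consider the map $\sigma \colon V(H) \to V(H)$ that transposes $\alpha \leftrightarrow \beta$ together with their respective bristle leaves (when these exist) and fixes every other vertex. Since $\alpha$ and $\beta$ attach only to $C^*$ and possibly to a bristle leaf, $\sigma$ is an automorphism of $H$ exactly when $\alpha$ and $\beta$ carry identical attachment structures, i.e. exactly when $\deg_H(\alpha) = \deg_H(\beta)$. A short case analysis excludes $x \in \{\alpha, \beta\}$ (which would give $e^* = x = \alpha$ or $\beta$, contradicting $\alpha, \beta \neq e^*$) and $x$ being a bristle leaf of $\alpha$ or $\beta$ (which would make $\alpha$ or $\beta$ the closest $C^*$-vertex to $x$, again forcing $e^* \in \{\alpha,\beta\}$), so $\sigma$ fixes $x$. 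Involution-freeness of $(H,x)$ therefore rules out any such nontrivial $\sigma$, forcing $\deg_H(\alpha) \neq \deg_H(\beta)$; labelling them $v_3, v_2$ with $\deg_H(v_j) = j$ and taking $P$ to be the unique shortest $x$-$e^*$ path, the tuple $(P, v_2, v_3)$ is a 2,3-path, since any competing shortest $x$-$v_j$ path would have to route via $\omega$ at distance $\dist_H(x,e^*)+2$ and so would have length $\dist_H(x,e^*)+3$ rather than $\dist_H(x,e^*)+1$.

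The delicate points I expect to justify most carefully are the non-cut claim in the second paragraph---which relies on the cactus fact that a 4-cycle $C''$ sharing $\alpha$ with $C^*$ must sit on the $\alpha$-side of $C^*$ in the block-cut tree, so that the block-cut path from $x$ to $C''$ is forced to enter at $\alpha$---and the cactus observation used in the first paragraph that non-unique shortest paths in a cactus arise only from traversing some 4-cycle between opposite vertices. Both are standard structural facts about cactus graphs but each needs a short careful proof to make the overall argument airtight.
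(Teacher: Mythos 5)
Your proof is correct, and it rests on the same structural kernel as the paper's: find a ``last reachable'' 4\nobreakdash-cycle, look at the two cycle-neighbours $\alpha,\beta$ of its entry vertex, and observe that involution-freeness of $(H,x)$ forbids the transposition $\alpha\leftrightarrow\beta$ (together with their bristles), forcing $\deg_H\alpha\ne\deg_H\beta$ so that a 2,3-path emerges. What differs is the selection mechanism. The paper simply takes a longest path $P=x_1\dots x_\ell$ from $x$ that uses only cycle edges and at most one edge from each cycle; maximality of $P$ alone bounds $\deg(x_\ell)$ and $\deg(z)$ above by~$3$, and the ``one edge per cycle'' constraint delivers the unique-shortest-path requirement of a 2,3-path in a line. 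You instead build a USP-cycle framework: you first show $\mathcal C\neq\emptyset$ by minimising entry distance, invoking the cactus fact that shortest-path ambiguity forces a detour across opposite vertices of an even cycle; then you maximise $\dist_H(x,e(C))$ over $\mathcal C$ and rule out further cycles through $\alpha,\beta$ by a separate block-cut-tree argument. Your route keeps the unique-shortest-path property explicitly in view throughout, which is pleasant, but pays for it with two extremal selections plus an auxiliary cactus lemma where the paper needs a single greedy choice. One small simplification is available: the sub-case ``$x$ is a bristle leaf of $\alpha$ or $\beta$'' can be dismissed outright, since the definition of a mosaic places the root $x$ in the unbristled part $V'\!$, so $x$ is never a bristle leaf.
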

\begin{proof}
See Figure~\ref{fig:deg23}.
Write $x_1=x$ and let $P = x_1\dots x_\ell$ be a longest path 
from~$x_1$
in $H$ that uses only edges from cycles and uses at most one
edge from each cycle. $P$ contains at least one edge because $x$ is on a cycle. Let $C=x_{\ell-1}x_\ell yzx_{\ell-1}$ be the
cycle containing $x_{\ell-1}$ and~$x_\ell$.

$x_\ell$ is on a cycle, so $\deg(x_\ell)\geq 2$. Also, $\deg(x_\ell)\leq 3$ since, otherwise, $x_\ell$ would have a neighbour
$x_{\ell+1}$ on a cycle other than~$C$ and the path $Px_{\ell+1}$ would contradict the choice of $P$. By the same argument,
$2\leq\deg(z)\leq 3$. Furthermore, $\deg (x_\ell)\neq \deg (z)$ or 
$H$ would have an
involution  exchanging these two
vertices. Note that $P'=x_1\dots x_{\ell-1}z$ is the unique shortest $x$--$z$ path in $H$, since any other $x$--$z$ path
must include edges from exactly the same cycles as $P'$ and must
include at least two edges from one of them.
Thus, either $(x_1\dots x_{\ell-1},x_\ell,z)$ or $(x_1,\dots,x_{\ell-1},z,x_\ell)$
is a 2,3-path.
\end{proof}

In several cases, we have a unique shortest path of length~$\ell$
between two vertices in a graph and we are interested in the number of
$(\ell+2)$ walks between those two vertices.  The following definition
helps us count such walks.

\begin{figure}[t]
\begin{center}
\begin{tikzpicture}[scale=0.85]  
        \tikzstyle{vertex}=[fill=black, draw=black, circle]
	\tikzstyle{root}=[ draw=black, minimum size=1mm ,circle]

\node[vertex] (1) at (2,0) [label=-90:$x_a$] {};
\node[vertex] (2) at (4,0) {};
\node[vertex] (3) at (6,0) [label=-90:$x_b$]{};
\foreach \x in {1,2,3} \node[vertex] (c\x) at (2*\x,2) {};

\node[vertex] (0) at (0,0) [label=-90:$x_1$] {};
\node[vertex] (4) at (8,0) [label=-90:$x_{\ell+1}$] {};

\draw (1)--(2)--(3);

\draw[decorate, decoration=snake, thick, double] (0)--(1);
\draw[decorate, decoration=snake, thick, double] (3)--(4);

\draw[thick, double] (1)--(c1)--(c2)--(c3)--(3);

\draw[decorate, decoration=brace, thick] (6,-1)--(2,-1);
\node at (4,-1.5) {\small{$r$}};
\end{tikzpicture}
\caption{An example of a walk in $T_1(P)$, shown with double lines.\label{fig:plustwo-cycle}}
\end{center}
\end{figure}
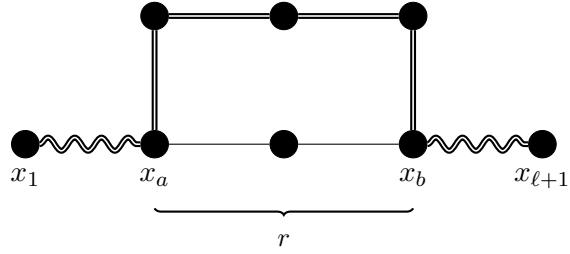

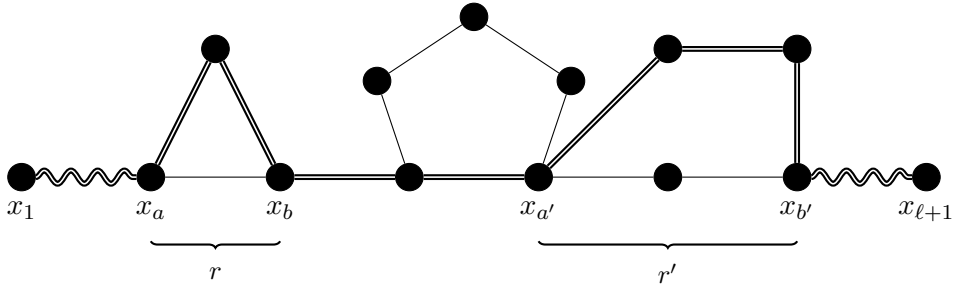
\begin{figure}[t]
\begin{center}
\begin{tikzpicture}[scale=0.85]  
        \tikzstyle{vertex}=[fill=black, draw=black, circle]
	\tikzstyle{root}=[ draw=black, minimum size=1mm ,circle]

\node[vertex] (0) at (0,0) [label=-90:$x_1$] {};
\node[vertex] (1) at (2,0) [label=-90:$x_a$] {};
\node[vertex] (2) at (4,0) [label=-90:$x_b$] {};
\node[vertex] (3) at (6,0) {};
\node[vertex] (4) at (8,0) [label=-90:$x_{a'}$] {};
\node[vertex] (5) at (10,0) {};
\node[vertex] (6) at (12,0) [label=-90:$x_{b'}$] {};
\node[vertex] (7) at (14,0) [label=-90:$x_{\ell+1}$] {};
\node[vertex] (c1) at (3,2) {};
\node[vertex] (c2) at (10,2) {};
\node[vertex] (c3) at (12,2) {};

\node[vertex] (a) at (8.5,1.5) {};
\node[vertex] (b) at (5.5,1.5) {};
\node[vertex] (c) at (7,2.5) {};

\draw (1)--(2);
\draw (4)--(5)--(6);
\draw (3)--(b)--(c)--(a)--(4);

\draw[decorate, decoration=snake, thick, double] (0)--(1);
\draw[decorate, decoration=snake, thick, double] (6)--(7);

\draw[thick, double] (1)--(c1)--(2)--(3)--(4)--(c2)--(c3)--(6);

\draw[decorate, decoration=brace, thick] (4,-1)--(2,-01);
\node at (3,-1.5) {\small{$r$}};
\draw[decorate, decoration=brace, thick] (12,-1)--(8,-1);
\node at (10,-1.5) {\small{$r'$}};
\end{tikzpicture}
\caption{An example of a walk in $T_2(P)$, shown with double lines.\label{fig:plusone-cycle}}
\end{center}
\end{figure}

\begin{definition}
\label{def:T-walks}
Let $P=x_1\dots x_{\ell+1}$ be a path in a graph~$H$.  We define the
following three sets of $(\ell+2)$-walks from $x_1$ to $x_{\ell+1}$.
\begin{enumerate}
\item $T_1(P)$ is the set of walks that differ from~$P$ by going the
  long way around a $(2r+2)$-cycle from which $P$ uses $r$~consecutive
  edges, as shown in Figure~\ref{fig:plustwo-cycle}.  Formally, these
  are the walks $x_1\dots x_aP'x_b\dots x_{\ell+1}$, where $1\leq
  a<b\leq \ell+1$ and $x_aP'x_b$ is a $(b-a+2)$-path in $H-\{x_{a+1},
  \dots, x_{b-1}\}$.
\item $T_2(P)$ is the set of walks that differ from~$P$ by going the
  long way around cycles of length $2r+1$ and $2r'+1$, from which
  $P$~uses $r$ and~$r'$ consecutive edges, respectively, as shown in
  Figure~\ref{fig:plusone-cycle}.  Formally, these are the $x_1\dots
  x_aP'x_b\dots x_{a'}P''x_{b'} \dots x_{\ell+1}$ where $1\leq a<b\leq
  a'<b'\leq \ell+1$, $x_aP'x_b$ is a $(b-a+1)$-path in $H-\{x_{a+1},
  \dots, x_{b-1}\}$ and $x_{a'}P''x_{b'}$ is a $(b'-a'+1)$-path in
  $H-\{x_{a'+1}, \dots, x_{b'-1}\}$.
\item $T_3(P)$ is the set of walks $x_1\dots x_azx_a\dots x_{\ell+1}$,
  where $1\leq a\leq \{\ell+1\}$ and $z\in\Gamma_H(x_a)$ (we allow the
  case $z=x_{a\pm 1}$).
\end{enumerate}
We refer to the cycles appearing in the definition of $T_1$ and~$T_2$
as \emph{detour cycles}.
\end{definition}

It is easy to see that, when $P$ is the unique shortest
$x_1$--$x_{\ell+1}$ path in a cactus graph, $T_1(P)$, $T_2(P)$ and
$T_3(P)$ is a partition of the set of all $(\ell(P)+2)$-walks from $x_1$
to~$x_{\ell+1}$.

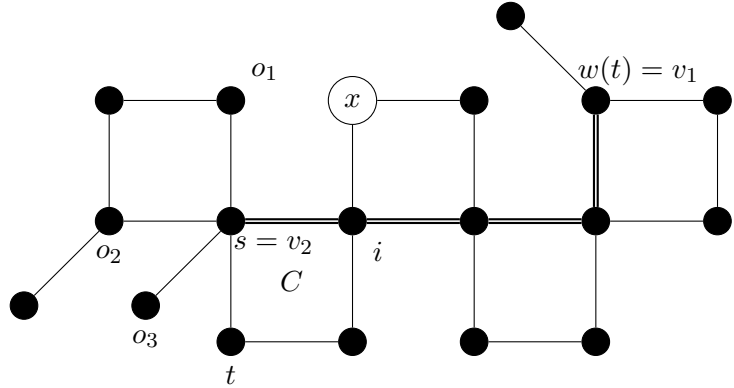
\begin{figure}[t]
\begin{center}

\begin{tikzpicture}[scale=.8]
        \tikzstyle{vertex}=[fill=black, draw=black, circle]
	\tikzstyle{root}=[ draw=black, minimum size=1mm ,circle]

\foreach \x in {6,7,8}  \node[vertex] (\x) at (2*\x-1,0) {};
\node[vertex] (3) at (5,0) [label=-90:$o_2$] {};
\node[vertex] (4) at (7,0) {};\node at (7.7,-.4) {$s=v_2$};
\node[vertex] (5) at (9,0) [label=-45:$i$] {};
\foreach \x in {3,6,8} \node[vertex] (a\x) at (2*\x-1,2) {};
\node[vertex] (a4) at (7,2) [label=45:$o_1$] {};
\node[root] (a5) at (9,2) {$x$};
\node[vertex] (a7) at (13,2) {}; \node at (13.7,2.5) {$w(t)=v_1$};
\foreach \x in {5,6,7}  \node[vertex] (b\x) at (2*\x-1,-2) {};
\node[vertex] (b4) at (7,-2) [label=-90:$t$] {};

\node[vertex] (a) at (3.6,-1.4) {};
\node[vertex] (b) at (11.6,3.4) {}; \draw (a7)--(b);
\node[vertex] (c) at (5.6,-1.4) [label=below:$o_3$] {}; \draw (4)--(c);

\draw (a)--(3);
\draw (3)--(a3)--(a4)--(4)--(b4)--(b5)--(5)--(a5)--(a6)--(6)--(b6)--(b7)--(7);
\draw (a7)--(a8)--(8)--(7);
\draw (3)--(4);
\draw[double, thick] (4)--(5)--(6)--(7)--(a7);
\node (C) at  (8,-1)  {$C$};
\end{tikzpicture}
\caption{A mosaic containing containing a hardness gadget as in Lemma~\ref{lemma:shortcut-hardness}.
The path $P$ is shown with a double line and $O=\{o_1,o_2,o_3\}$.\label{fig:mosaic-shortcut}}
\end{center}
\end{figure}

\begin{lemma}\label{lemma:shortcut-hardness} 
Let $H$ be a cactus graph containing distinct odd-degree
vertices $v_1$ and $v_2$, with a unique shortest path~$P$ between them.
Suppose that every edge of~$P$ is on a $4$-cycle of~$H$ and no
$4$-cycle of~$H$ contains two edges of~$P$.
Then $H$ contains a hardness gadget that satisfies the distance requirements
for every $v \in (V(H)\setminus V(P)) \cup \{v_2\}$. 
\end{lemma}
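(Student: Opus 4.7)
Label $P = u_0 u_1 \cdots u_\ell$ with $u_0 = v_2$ and $u_\ell = v_1$, and for each $j$ let $C_j = u_j u_{j+1} \alpha_j \beta_j u_j$ be the unique $4$-cycle containing the edge $(u_j,u_{j+1})$, with $\beta_j \sim u_j$ and $\alpha_j \sim u_{j+1}$. I would build the hardness gadget with $\beta = 1$, $s = v_2$, $i = u_1$, $t = \beta_0$, $O = \Gamma_H(s) \setminus \{i,t\}$, and $K = \{t\}$. Setting $\sigma := \sum_{j=1}^{\ell-1} \deg_H(u_j)$, the choice of $w(t)$ and $k(t)$ splits on parity: take $(w(t),k(t)) = (v_1,\ell+1)$ when $\sigma$ is even (this covers $\ell = 1$, where $\sigma = 0$), and $(w(t),k(t)) = (u_{\ell-1},\ell)$ when $\sigma$ is odd.

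Conditions~1--3 are clean consequences of the cactus property. We have $|O| = \deg_H(v_2) - 2$, odd. For condition~2 (with $\beta = 1$), any $v \ne s$ adjacent to both $o \in O$ and~$t$ would give $s$ and $v$ at least two common neighbours, forcing a second $4$-cycle through the edge $(s,t)$, which already lies on $C_0$. For condition~3, the only common neighbours of $i$ and $t$ are $s$ and $\alpha_0$ (both vertices of $C_0$), giving exactly $2$ two-walks. A parallel cactus argument also shows that no $o \in O$ can be adjacent to any vertex of $C_0$ other than $s$, which I use in the walk counts below. For condition~4 I apply the partition $T_1 \sqcup T_2 \sqcup T_3$ of Definition~\ref{def:T-walks} to the unique shortest $w(t)$--$i$ path~$P'$. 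Walks from $w(t)$ to~$t$ of length $k(t)$ are exactly the two shortest paths, one extending~$P'$ by $s \to t$ and the other by $\alpha_0 \to t$; walks to each $o \in O$ consist of the single shortest path through~$s$. For walks to~$i$, since every edge of $P'$ lies on a $4$-cycle and no $4$-cycle contains two edges of~$P$, we have $|T_1| = \ell(P')$, $|T_2| = 0$, and $|T_3| = \sum_{v \in V(P')} \deg_H(v) - \ell(P')$ after absorbing the one duplicate walk per edge of $P'$ (between the pairs $(a,z=x_{a+1})$ and $(a+1,z=x_a)$). The total $\sum_{v \in V(P')} \deg_H(v)$ equals $\deg_H(v_1) + \sigma$ in case~A (odd, since $\sigma$ is even and $\deg_H(v_1)$ is odd) and $\sigma$ in case~B (odd by hypothesis), giving the required odd count to~$i$.

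The primary distance requirement is immediate because $\beta - 1 = 0$. For the secondary requirement, $\dist_H(w(t), O \cup \{i,t\}) = k(t) - 1$, realised by a path contained in $V(P)$; the triangle inequality then gives $\dist_H(v,w(t)) + \dist_H(v, O \cup \{i,t\}) \ge k(t) - 1 > k(t) - 2$ whenever $v$ does not lie on this minimising path, which holds automatically for every $v \in V(H) \setminus V(P)$, and the single on-path case $v = v_2$ is checked directly. The main obstacle is the walk enumeration for condition~4: the natural choice $(w(t),k(t)) = (v_1,\ell+1)$ fails whenever $\sigma$ is odd, and it is precisely the alternative $(u_{\ell-1},\ell)$ that makes the parity argument uniform across all configurations allowed by the hypotheses.
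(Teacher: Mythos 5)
Your proposal is essentially correct and takes a genuinely different route from the paper. The paper's proof begins by \emph{re-choosing} the triple $(v_1,v_2,P)$ so that $\ell(P)$ is minimal, which forces every interior vertex of $P$ to have even degree (an odd-degree interior vertex would yield a shorter admissible pair). That makes $\sigma=\sum_{j=1}^{\ell-1}\deg_H(u_j)$ automatically even and allows a single choice $(w(t),k(t))=(v_1,\ell(P)+1)$; the walk count from $v_1$ to~$i$ is $\deg_H(v_1)+\sigma$, odd because each summand other than $\deg_H(v_1)$ is even. You avoid the re-choice entirely by splitting on the parity of $\sigma$ and moving $w(t)$ to $u_{\ell-1}$ when $\sigma$ is odd; your walk-count formula $\sum_{v\in V(P')}\deg_H(v)$ then evaluates to $\deg_H(v_1)+\sigma$ or to $\sigma$, odd in each case. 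The advantage of your route is that it works directly with the $P$ given in the statement, which makes the claimed set $(V(H)\setminus V(P))\cup\{v_2\}$ of valid $v$ immediate; the paper's re-choice implicitly replaces $P$ by a subpath, which is harmless but requires a sentence of justification it does not give. Conditions~1--3 and the walk analysis via $T_1\sqcup T_2\sqcup T_3$ are handled the same way in both proofs.

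One concrete slip in the secondary distance argument: you assert $\dist_H(w(t),O\cup\{i,t\})=k(t)-1$, but it is in fact $k(t)-2$, achieved by $y=i$ (e.g.\ in case~A, $\dist_H(v_1,i)=\ell-1=k(t)-2$ while $\dist_H(v_1,y)=\ell+1$ for $y\in O\cup\{t\}$). As a consequence, the plain triangle inequality only yields $\geq k(t)-2$, which is not strict enough. The correct argument splits on~$y$: for $y\in O\cup\{t\}$ the triangle inequality with $\dist_H(w(t),y)\geq k(t)$ already gives $>k(t)-2$, while for $y=i$ one must use that the shortest $w(t)$--$i$ path is unique and lies inside $V(P)$, so any $v\notin V(P)$ (and $v=v_2$, which also lies off that path) satisfies $\dist_H(v,w(t))+\dist_H(v,i)>\ell(P')=k(t)-2$ strictly. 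You gesture at exactly this uniqueness idea with ``whenever $v$ does not lie on this minimising path,'' but the numerical base and the $\geq$ vs.\ $>$ are garbled; fixing the base to $k(t)-2$ and making the strictness explicit for $y=i$ repairs the argument without changing its structure.
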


Note that the premise of the lemma is symmetric about $v_1$ and~$v_2$,
while the conclusion is not.  By symmetry, we could, of course, find a
different hardness gadget satisfying the distance requirements for
$v_1$ instead of~$v_2$.

\begin{proof}
See Figure~\ref{fig:mosaic-shortcut}.
Choose $v_1$, $v_2$ and $P$ satisfying the given conditions
so that $\ell(P)$ is as small as possible.
Note that $v_1$ and~$v_2$ have degree at least~$3$, since they have
odd degree and are on cycles.
  
Let $\beta=1$.
Let $s=v_2$, 
let $i$ be the neighbour of $s$ in $P$ and let $t$ be the neighbour
of $s$ that is in the same $4$-cycle as $i$; call this cycle~$C$.
Let $K=\{t\}$, and $O=\Gamma_H(s)\setminus (\{i\} \cup K)$.
Let $w(t)=v_1$ and $k(t) = \ell(P)+1$. 

To see that $\Gamma=(\beta,s,t,O,i,K,k,w)$ is a hardness gadget,
note first that $|O|$ is odd, since $\deg_H(s)$ is odd.
Consider any $o\in O$ and $y\in O \cup \{i\}$. Since $H$ is a cactus graph,
$s$~is the unique vertex adjacent to $o$, $y$ and~$t$.
However, there are two vertices
that are adjacent to~$i$ and~$t$:
$s$ and the 
fourth vertex on cycle~$C$. Therefore, there are two $(\beta+1)$-walks
from $i$ to~$t$.
Finally, consider vertex $w(t)=v_1$.
Vertex~$v_1$ has two $(\ell(P)+1)$-walks to~$t$ (going around the cycle~$C$ in either direction).
For every vertex $o\in O$, 
$Po$ is the unique $(\ell(P)+1)$-walk from $v_1$ to~$o$.

We will finish the proof that $\Gamma$ is a hardness gadget by showing
that $w(t)=v_1$ has an odd number of $(\ell(P)+1)$-walks to $i$.
Let $P_i$ be the length-($\ell(P)-1$) prefix of~$P$.
$P_i$ is the unique shortest path from~$v_1$ to~$i$.
Consider Definition~\ref{def:T-walks}, with $x_1=v_1$,
$\ell=\ell(P_i)$ and $x_{\ell+1}=i$.  Since $P_i$ only uses one edge
from each cycle it meets and each such cycle is a $4$-cycle,
$T_2(P_i)=\emptyset$.  
There are $\ell(P_i)$ walks in $T_1(P_i)$,
since every edge of~$P$ (hence, every edge of~$P_i$) is on a distinct $4$-cycle of~$H$.
The number of walks in $T_3(P_i)$ is
$\sum_{v \in P_i} \deg_H(v) - \ell(P_i)$
since every edge adjacent to~$P_i$ may be repeated, but
edges in $P_i$ should not be counted twice.
 The total number of walks
is therefore $\sum_{v \in P_i} \deg_H(v)$. This is odd since $\deg_H(v_1)$
is odd, and every vertex in~$P_i$ other than~$v_1$ has even degree
(otherwise the minimality of $\ell(P)$ would be contradicted).

The hardness gadget $\Gamma$ satisfies the primary distance requirement
$\dist_H(v,O\cup \{i\}) + \dist_H(v,t) > 0$ for any $v\in V(H)$
since $t\not\in O\cup \{i\}$, so at least one of the terms $\dist_H(v,O\cup \{i\})$ and $\dist_H(v,t)$ is positive.
Now consider any $v \in V(H)\setminus V(P_i)$.
We wish to show that the secondary distance requirement
$\dist_H(v,v_1)+\dist_H(v,O\cup \{i,t\})>\ell(P)-1$ is
satisfied.
We do this by establishing the following inequalities:
\begin{gather}
\label{eqone} \dist_H(v,v_1)+\dist_H(v,i)>\ell(P)-1\\
\label{eqtwo} \dist_H(v,v_1)+\dist_H(v,O\cup \{t\})>\ell(P)-1.
\end{gather}
Establishing \eqref{eqone} is easy.
Since $v\not\in P_i$ and $P_i$ is the unique shortest path from~$v_1$ to~$i$,
$\dist_H(v,v_1) + \dist_H(v,i) > \ell(P_i)=\ell(P)-1$.
Establishing \eqref{eqtwo} is similar.
For each $y\in O\cup \{t\}$,  
$\dist_H(v_1,y) = \ell(P)+1$.
Therefore, 
for any $v\in V(H)\setminus V(P_i)$,
$\dist_H(v,v_1)+\dist_H(v,y)\geq \ell(P)+1$.
\end{proof}

\begin{definition}  
A \emph{shortcut} in a mosaic $(H,x)$  
is a pair of odd-degree vertices $v_1,v_2$, with degree at least~3, that
have a unique shortest path $P$ between them, and this path does not
contain~$x$.
A \emph{shortcut mosaic} is a mosaic that contains a
shortcut.
\end{definition}

If $v_1,v_2$ is a shortcut in a mosaic $(H,x)$ then $v_1$ and $v_2$ are on cycles
(since their degrees are at least~$3$), so every edge of the unique shortest path~$P$ between them
is on a $4$-cycle. Since $P$ is unique, these edges are on distinct $4$-cycles. Thus,
Lemma~\ref{lemma:shortcut-hardness} has the following corollary.

\begin{corollary}\label{corollary:mosaic-hardness}
If $(H,x)$ is a shortcut mosaic then $H$ contains a hardness gadget that satisfies the
distance requirements for $x$.
\end{corollary}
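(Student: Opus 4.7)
The plan is to invoke Lemma \ref{lemma:shortcut-hardness} directly on the shortcut $v_1, v_2$ with its unique shortest path~$P$. A mosaic is automatically a cactus graph: the unbristled part is a union of 4-cycles glued at cut vertices, and the bristles are pendant edges which preserve the cactus property. So only the structural hypotheses of the lemma need to be checked.

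First I would verify that every edge of~$P$ lies on a 4-cycle of~$H$. The endpoints $v_1, v_2$ have odd degree at least~3 and therefore lie in~$V'$, since bristle endpoints have degree~1. In a mosaic the non-bristle edges are precisely the edges of the 4-cycles inside $H[V']$, and $P$ cannot traverse any bristle: following a bristle leads to a degree-1 vertex, which could only terminate~$P$, but $v_1$ and $v_2$ themselves are not bristle endpoints. Hence every edge of~$P$ is contained in a 4-cycle. Second, as the paragraph preceding the corollary observes, the uniqueness of~$P$ forces these 4-cycles to be distinct: if a single 4-cycle contained two edges of~$P$, then the two complementary edges of that cycle would supply an alternative path of the same length between the endpoints of the two-edge segment, contradicting uniqueness.

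With the hypotheses verified, I would apply Lemma \ref{lemma:shortcut-hardness} to obtain a hardness gadget satisfying the distance requirements for every vertex in $(V(H) \setminus V(P)) \cup \{v_2\}$. The definition of a shortcut demands $x \notin V(P)$, so $x \in V(H) \setminus V(P)$, and the gadget produced satisfies the distance requirements for~$x$, as required. The argument really amounts to a verification of hypotheses; the only mild subtlety is noting that $P$ cannot use a bristle, which is immediate from the degree condition on $v_1$ and $v_2$, and there is no real technical obstacle.
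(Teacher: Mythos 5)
Your proposal is correct and follows the same approach the paper uses (the paragraph immediately preceding the corollary): verify that the shortcut's unique shortest path $P$ lies entirely on distinct 4-cycles and then invoke Lemma~\ref{lemma:shortcut-hardness}, noting that the shortcut's definition forces $x\notin V(P)$. You fill in the hypothesis-checking details (bristles cannot appear on $P$, distinctness of the 4-cycles from uniqueness of $P$) somewhat more explicitly than the paper does, but the argument is the same.
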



\section{Combination lemmas}

We mostly proceed by splitting graphs at cut vertices and
investigating the resulting components.  In this section, we present a
number of technical lemmas that show how
to combine structures in the various parts of a graph split to obtain
hardness gadgets.
 
\begin{observation}
\label{obs:rootsplit}
  If $\{H_1, \dots, H_\kappa\}$ is the split of an
   involution-free graph~$H$ at a cut vertex~$v$ then, for each $j\in[\kappa]$, the rooted
graph $(H_j,v)$ is involution-free even though $H_j$ itself might not be involution-free.
To see that $(H_j,v)$ is involution-free, note that an involution of $H_j$ that fixes $v$ 
induces an involution of $H$.
 \end{observation}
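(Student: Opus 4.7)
The plan is to argue by contrapositive: I will show that if the rooted graph $(H_j,v)$ has a non-trivial involution fixing $v$, then $H$ itself has a non-trivial involution, contradicting the hypothesis that $H$ is involution-free. The author's hint already points directly at this strategy, so the work is just to spell out how the involution is extended from $H_j$ to $H$ and verify the three properties (automorphism, order two, non-trivial).

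Concretely, suppose $\sigma_j$ is an involution of $H_j$ with $\sigma_j(v)=v$ and $\sigma_j\neq\id$. Define $\sigma\colon V(H)\to V(H)$ by setting $\sigma(u)=\sigma_j(u)$ when $u\in V(H_j)$ and $\sigma(u)=u$ otherwise. The two clauses agree on the only overlap, namely $v$, because $\sigma_j$ fixes $v$, so $\sigma$ is well-defined. It is then a bijection because $\sigma_j$ is a bijection of $V(H_j)$ and the identity is a bijection of $V(H)\setminus V(H_j)$.

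To see that $\sigma$ preserves edges, I would use the fact that $v$ is a cut vertex and that $H_j=H[V(H'_j)\cup\{v\}]$, where $H'_j$ is a component of $H-v$. Any edge of $H$ with one endpoint in $V(H_j)\setminus\{v\}$ and the other in $V(H)\setminus V(H_j)$ would have to cross between distinct components of $H-v$ without going through $v$, which is impossible. Hence every edge of $H$ lies either entirely inside $H_j$ (and is preserved by $\sigma_j$) or entirely in the subgraph induced by $V(H)\setminus (V(H_j)\setminus\{v\})$ (and is fixed by the identity part of $\sigma$). So $\sigma\in\Aut(H)$. It is of order dividing two because $\sigma_j\circ\sigma_j=\id$ and $\id\circ\id=\id$, and it is non-trivial because $\sigma_j$ moves at least one vertex of $H_j$, which $\sigma$ then also moves. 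This contradicts involution-freeness of $H$, completing the proof.

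I do not expect any real obstacle here; the only place one has to be a little careful is the edge-preservation step, where the cut-vertex property of $v$ is the essential input. Everything else is routine bookkeeping.
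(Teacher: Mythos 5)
Your proof is correct and is exactly the paper's argument spelled out in detail: the paper's one-line justification ("an involution of $H_j$ that fixes $v$ induces an involution of $H$") is precisely the extension-by-identity construction you describe, with the cut-vertex property guaranteeing edge preservation.
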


\begin{lemma}\label{lemma:mosaic-mosaic-hardness}
Let $x$ be a cut vertex of an involution-free cactus graph $H$. If there exists a split of $H$ at $x$ into
$\{H_1,\dots,H_\kappa\}$ 
such that $(H_1,x)$ and $(H_2,x)$ are both proper mosaics then $H$ has a hardness gadget
which satisfies the distance requirements   
for every vertex $v\in V(H) \setminus (V(H_1)\cup V(H_2))$.
\end{lemma}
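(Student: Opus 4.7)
The plan is to locate odd-degree vertices in $H$, one inside each of $H_1$ and $H_2$, and then apply Lemma~\ref{lemma:shortcut-hardness} to the unique shortest path joining them through~$x$. By Observation~\ref{obs:rootsplit}, both $(H_1,x)$ and $(H_2,x)$ are involution-free rooted graphs, so Lemma~\ref{lemma:mosaic-deg23} supplies a 2,3-path $(Q_j,a_j,b_j)$ in each, with $\deg_{H_j}(b_j)=3$. Inspecting the construction in the proof of Lemma~\ref{lemma:mosaic-deg23}, the vertex $b_j$ lies on a 4-cycle of $H_j$ strictly beyond the last vertex of $Q_j$, so $b_j\neq x$. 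Because $x$ is the only vertex of $H_j$ with neighbours outside~$H_j$, this gives $\deg_H(b_j)=\deg_{H_j}(b_j)=3$, which is odd.

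Next I would form the candidate path $P$ in $H$ by concatenating the reverse of $Q_1b_1$, the vertex~$x$, and $Q_2b_2$. Because $x$ is a cut vertex separating $V(H_1)\setminus\{x\}$ from $V(H_2)\setminus\{x\}$, every $b_1$--$b_2$ path in $H$ must pass through~$x$, so the restriction of any shortest $b_1$--$b_2$ path to $H_j$ is a shortest $x$--$b_j$ path in $H_j$. The uniqueness clauses of the two 2,3-paths therefore imply that $P$ is the unique shortest $b_1$--$b_2$ path in~$H$.

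I then verify the remaining hypotheses of Lemma~\ref{lemma:shortcut-hardness}. Every edge of $Q_jb_j$ lies inside the unbristled mosaic portion of $H_j$ because the path constructed in Lemma~\ref{lemma:mosaic-deg23} uses only cycle edges; every such edge therefore lies on a 4-cycle of $H_j$, and hence of $H$. Moreover, that construction uses at most one edge per cycle, so no 4-cycle of $H_j$ contains two edges of $Q_jb_j$, and no 4-cycle of $H$ can contain edges from both $H_1$ and $H_2$, since any such cycle would visit the cut vertex~$x$ twice. Hence no 4-cycle of $H$ contains two edges of~$P$. Applying Lemma~\ref{lemma:shortcut-hardness} with $v_1=b_1$ and $v_2=b_2$ then produces a hardness gadget satisfying the distance requirements for every $v\in(V(H)\setminus V(P))\cup\{b_2\}$. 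Since $V(P)\subseteq V(H_1)\cup V(H_2)$, this set contains $V(H)\setminus(V(H_1)\cup V(H_2))$, which is exactly the conclusion sought.

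I expect the main technical obstacle to be confirming that $b_j\neq x$ and that the uniqueness of shortest paths inside each mosaic promotes to uniqueness of the concatenated path in all of~$H$. Both facts follow from the cut-vertex property together with a close reading of the 2,3-path construction, but they are the linchpins of the reduction to Lemma~\ref{lemma:shortcut-hardness}.
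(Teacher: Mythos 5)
Your proof is correct and uses essentially the same argument as the paper: extract 2,3-paths from both proper mosaics via Lemma~\ref{lemma:mosaic-deg23}, concatenate them through the cut vertex~$x$, check that the resulting path satisfies the hypotheses of Lemma~\ref{lemma:shortcut-hardness}, and apply it. The one minor difference is that the paper first disposes of the case where one of $(H_1,x)$, $(H_2,x)$ is a shortcut mosaic by invoking Corollary~\ref{corollary:mosaic-hardness}, whereas you treat all proper mosaics uniformly; this is also valid, since Lemma~\ref{lemma:mosaic-deg23} applies to every involution-free proper mosaic whether or not it has a shortcut.
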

\begin{proof}
If, for $j=1$ or $j=2$, $(H_j,x)$ is a shortcut mosaic then, by  Corollary~\ref{corollary:mosaic-hardness},
$H_j$ contains a hardness gadget~$\Gamma$ that satisfies the
distance requirements for~$x$.  Since $x$~is a cut vertex, $\Gamma$~is
a hardness gadget in~$H$ and satisfies the distance restrictions for
every vertex outside $H_1$ and $H_2$.

Suppose that neither of $(H_1,x)$ and $(H_2,x)$ is a shortcut mosaic.
For $j\in\{1,2\}$, apply Lemma~\ref{lemma:mosaic-deg23} to $(H_j,x)$
to obtain a 2,3-path $(P_j, y_j, z_j)$.  Since $z_j\neq x$ and $x$~is
a cut vertex, $\deg_H(z_j)=\deg_{H_j}(z_j)=3$.  $z_1P_1P_2z_2$ is the unique
shortest $z_1$--$z_2$ path in~$H$ and each edge of this path is in a
different 4-cycle.  
By Lemma~\ref{lemma:shortcut-hardness}, $H$ contains a hardness gadget
which satisfies the distance requirements for every $v \in ( V(H) \setminus V(P)) \cup \{z_2\}$.
\end{proof}

\begin{lemma} \label{lemma:mosaic-oddroot}    
Let $H$ be a cactus graph with a cut vertex $x$ of odd degree.
 If there exists a split of
$H$ at $x$ into $\{H_1,\dots,H_\kappa\}$ such that $(H_1,x)$ is an involution-free proper mosaic, 
then $H$ has a hardness gadget
that satisfies the distance requirements for  every $v \in V(H)\setminus V(H_1)$.
\end{lemma}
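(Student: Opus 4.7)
The plan is to reduce directly to Lemma~\ref{lemma:shortcut-hardness} by using $x$ itself as one odd-degree endpoint and the degree-$3$ vertex supplied by a $2,3$-path in $(H_1,x)$ as the other. Since the hypothesis gives an involution-free proper mosaic $(H_1,x)$, Lemma~\ref{lemma:mosaic-deg23} immediately produces a $2,3$-path $(P,v_2,v_3)$; I take $Q = Pv_3$ as the candidate path between the two odd-degree vertices, and I will show that $Q$, viewed inside the whole graph~$H$, satisfies all the premises of Lemma~\ref{lemma:shortcut-hardness}.

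The central technical point is the transfer of information from the subgraph $H_1$ to the whole graph $H$, and this is where cut-vertex-ness is used repeatedly. First, the uniqueness of $Q$ as a shortest $x$--$v_3$ path promotes from $H_1$ to~$H$: because $x$ is a cut vertex and $v_3 \in V(H_1)\setminus\{x\}$, any $x$--$v_3$ path in $H$ is forced to stay inside $H_1$, since leaving would require re-entering through $x$, which is forbidden for a path. Second, the same cut-vertex argument gives $\deg_H(v_3) = \deg_{H_1}(v_3) = 3$, because $v_3 \neq x$ can only acquire neighbours in~$H_1$. The inequality $v_3 \neq x$ itself holds because $(H_1,x)$ is proper, so $x$ lies on a $4$-cycle, forcing $\ell(Q)\geq 1$ in the construction of Lemma~\ref{lemma:mosaic-deg23}. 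Finally, $\deg_H(x)$ is odd by hypothesis and at least $3$ since $x$ is a cut vertex.

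Inspecting the proof of Lemma~\ref{lemma:mosaic-deg23}, the path $Q$ is built entirely from edges lying on cycles of~$H_1$, using at most one edge from each cycle, and in a mosaic each of those cycles is a $4$-cycle. Hence every edge of $Q$ is on a $4$-cycle of $H$ and no $4$-cycle of $H$ contains two edges of $Q$, which are precisely the structural premises required by Lemma~\ref{lemma:shortcut-hardness}.

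Feeding $v_1 = x$, $v_2 = v_3$, and the path $Q$ into Lemma~\ref{lemma:shortcut-hardness} then produces a hardness gadget in $H$ whose distance requirements are satisfied for every $v \in (V(H)\setminus V(Q)) \cup \{v_3\}$. Since $V(Q) \subseteq V(H_1)$, every $v \in V(H) \setminus V(H_1)$ automatically lies in $V(H) \setminus V(Q)$, so the distance requirements hold for every such $v$, giving exactly the conclusion claimed. The only genuine obstacle is the bookkeeping in the previous paragraph: making sure that uniqueness of the shortest path, the computation of $\deg_H(v_3)$, and the containment $V(Q) \subseteq V(H_1)$ all transfer from the rooted mosaic $(H_1,x)$ to the ambient graph $H$, each of which rests on the fact that $x$ is a cut vertex.
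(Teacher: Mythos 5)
Your proof is correct and follows essentially the same route as the paper: obtain a $2{,}3$-path from Lemma~\ref{lemma:mosaic-deg23}, then feed it into Lemma~\ref{lemma:shortcut-hardness} using the cut-vertex property to transfer degrees, shortest-path uniqueness, and the $4$-cycle structure from $(H_1,x)$ to $H$. The only organizational difference is that the paper first checks whether $(H_1,x)$ is a shortcut mosaic and in that case invokes Corollary~\ref{corollary:mosaic-hardness} directly, applying Lemma~\ref{lemma:mosaic-deg23} only to the non-shortcut case; you unify both cases by noting that Lemma~\ref{lemma:mosaic-deg23} applies to every involution-free proper mosaic, which is a slight simplification. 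You also assign the roles of $v_1,v_2$ in Lemma~\ref{lemma:shortcut-hardness} the opposite way from the paper (the paper takes $v_2=x$, so its conclusion also covers $x$), but since $x\in V(H_1)$ the conclusion for $v\in V(H)\setminus V(H_1)$ does not require the requirements to hold at $x$, so your weaker version suffices.
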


\begin{proof} 
If $(H_1,x)$ is a shortcut mosaic then Corollary~\ref{corollary:mosaic-hardness}
gives a hardness gadget in $H_1$. As in the previous lemma, this is a
hardness gadget in~$H$ and satisfies the distance requirements.
If $(H_1,x)$ is not a shortcut mosaic,
apply Lemma~\ref{lemma:mosaic-deg23} to
$(H_1,x)$ to obtain a 2,3-path $(P,v_2,v_3)$.
Since $v_3\neq x$ and $x$~is a cut vertex,
$\deg_H(v_3)=\deg_{H_1}(v_3)=3$ and $P$ is the unique shortest path in~$H$ between~$v_3$ and $x$. 
Since~$(H_1,x)$ is a proper mosaic, every edge of~$P$ is on a distinct $4$-cycle of~$H$.
By Lemma~\ref{lemma:shortcut-hardness}, $H$ contains a hardness gadget
that satisfies the distance requirements for every $v \in ( V(H) \setminus V(P)) \cup \{x\}$, which proves the lemma.
\end{proof}

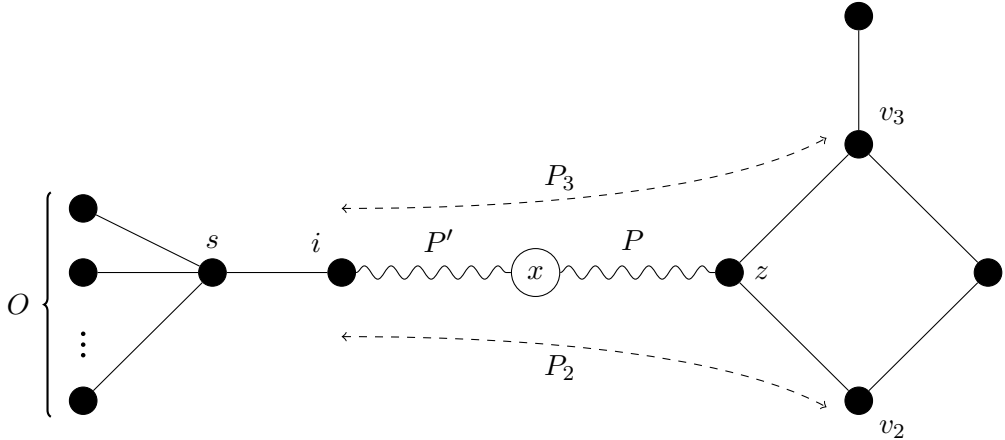
\begin{figure}[t]
\begin{center}
\begin{tikzpicture}[scale=.85]
        \tikzstyle{vertex}=[fill=black, draw=black, circle]
	\tikzstyle{root}=[ draw=black, minimum size=1mm ,circle]
	\node[vertex] at (-2, 0) (0) {}; 
        \node[vertex] at (0, 0) (3) [label=90:$s$] {};
	\node[vertex] at (2, 0) (4)[label=135:$i$]{};
        \node[root]   at (5, 0) (5) {$x$};
        \node[vertex] at (8, 0) (6) [label=0:$z$] {};
        \node[vertex] at (10,-2) (7) [label=-45:$v_2$] {};
        \node[vertex] at (12, 0) (8) {};
        \node[vertex] at (10, 2) (9) [label=45:$v_3$] {};
	\node[vertex] at (10, 4) (10) {};
	\node[vertex] at (-2, 1) (1) {}; 
	\node[vertex] at (-2, -2) (2) {}; 

\draw[decorate, decoration=brace, thick] (-2.5,-2.25)--(-2.5,1.25);
\node at (-3,-0.5) {$O$};

\draw (9)--(10);
\draw (3)--(4);
\draw (6)--(7)--(8)--(9)--(6);

\foreach \y in {0,1,2} \path[-] (\y) edge (3);
\draw (-2,-1) node {\huge{$\vdots$}};

\draw[decorate, decoration=snake] (4)--(5);
\draw[decorate, decoration=snake] (5)--(6);

\draw[<->, dashed] (2,-1) .. controls (3,-1) and (7.5,-1) .. node[below] {$P_2$} (9.5,-2.1);

\draw[<->, dashed] (2,1) .. controls (3,1) and (7.5,1) .. node[above] {$P_3$} (9.5,2.1);
\node (P) at (3.5,0.5) {$P'$};
\node (Pd) at (6.5,0.5) {$P$};
\end{tikzpicture}
\caption{The paths appearing in the proof of Lemma~\ref{lem:phg-23path}.\label{fig:mosaic-phg-hardness}}
\end{center}
\end{figure} 

\begin{lemma} \label{lem:phg-23path}
Let $H$ be a cactus graph. 
Suppose that $\{H_1,\ldots,H_\kappa\}$ is the split of~$H$ at~$x$, that
$(H_1,x)$ contains a 2,3-path $(P, v_2, v_3)$
and that $(H_2,x)$ contains a partial hardness gadget $(s,i,O,P')$.
Then 
$H$ has
a hardness gadget
that satisfies the distance requirements for very vertex 
$v\in V(H) \setminus ( V(PP') \cup \{v_2,v_3\})$.
\end{lemma}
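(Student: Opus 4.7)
The plan is to construct a hardness gadget $\Gamma=(\beta,s,t,O,i,K,k,w)$ in~$H$ by reusing $s$, $i$ and $O$ from the partial hardness gadget in $(H_2,x)$ and choosing $t$ using the 2,3-path in $(H_1,x)$. Writing $z$ for the last vertex of~$P$ (so that $v_2$ and~$v_3$ are both in $\Gamma_H(z)$), the natural first attempt takes $K=\emptyset$, $t=v_2$, and $\beta=\dist_H(s,v_2)=\ell(P')+\ell(P)+2$. Since $P'$ is a unique shortest $x$--$i$ path and $Pv_2$ is a unique shortest $x$--$v_2$ path (by the partial hardness gadget and the 2,3-path, respectively), and since $x$~is a cut vertex so the two halves cannot be shuffled, the walk $s,i,P',x,P,z,v_2$ is the unique shortest $s$--$v_2$ path in~$H$; hence $s$ has exactly one $\beta$-walk to~$t$, which is odd. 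Condition~\ref{hgad1} is inherited directly from the partial hardness gadget.

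For condition~\ref{hgad2}, I~would show that $s$ is the unique common neighbour of any $o\in O$ and $y\in O\cup\{i\}$ with odd-many $\beta$-walks to~$t$. Any $u\in \Gamma_H(o)\setminus\{s\}$ satisfies $\dist_H(x,u)\geq\ell(P')+2$: otherwise, extending a shortest $x$--$u$ path by the edge~$uo$ would give an $x$--$o$ path of length at most $\ell(P')+2=\dist_H(x,o)$ distinct from $P'so$, contradicting the uniqueness built into the partial hardness gadget. This forces $\dist_H(u,v_2)\geq\ell(P')+\ell(P)+3>\beta$ (since any $u$--$v_2$ walk must cross the cut vertex~$x$), so $u$ contributes no $\beta$-walk to~$t$.

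The delicate part is condition~\ref{hgad3}: an even number of $(\beta+1)$-walks from~$i$ to~$v_2$. Because $\deg_H(v_2)=2$, such walks split by last edge into $\beta$-walks from~$i$ to~$z$ and $\beta$-walks from~$i$ to the other cycle-neighbour of~$v_2$. For the latter, the two symmetric ways of routing through the cycle (via $v_2$ and via $v_3$) from~$z$ contribute in pairs, giving an even partial count. For the former, of length $\dist_H(i,z)+2$, I~would use the $T_1/T_2/T_3$ partition of Definition~\ref{def:T-walks} and control parity via degree parities along the unique shortest $i$--$z$ path together with the cycle-length parities inside $H_1$ and~$H_2$.

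The main obstacle is condition~\ref{hgad3}, because the walk-count from~$i$ to~$z$ depends on internal cycle structure of~$H_1$ and~$H_2$ that is not directly constrained by the partial hardness gadget or the 2,3-path. If the naive choice $(t,K)=(v_2,\emptyset)$ is insufficient, the construction will shift to $t=v_3$ or to the third neighbour of~$v_3$ (exploiting $\deg_H(v_3)=3$ to reshape parity), or promote a neighbour of~$s$ into~$K$ with $w(u),k(u)$ drawn from the 2,3-path so that condition~\ref{hgad4} absorbs any stray odd contributions. The distance requirements for $v\in V(H)\setminus(V(PP')\cup\{v_2,v_3\})$ are then routine: every path used by~$\Gamma$ lies inside $V(PP')\cup\{v_2,v_3\}$, and any~$v$ outside must cross the cut vertex~$x$ to reach these paths.
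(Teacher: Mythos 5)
Your overall skeleton matches the paper's: reuse $s$, $i$, $O$ from the partial hardness gadget, set $K=\emptyset$, and draw $t$ from the tip of the 2,3-path so that $\beta=\dist_H(s,t)$. Your handling of conditions~\ref{hgad1} and~\ref{hgad2} is sound (your distance argument for~\ref{hgad2} is a slightly more hands-on version of the paper's ``otherwise there would be two shortest $o$--$t$ paths''), and the distance-requirement check is indeed routine. But you have left a genuine gap exactly where you flag ``the main obstacle'': condition~\ref{hgad3}.

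Two concrete problems. First, your intermediate claim is not justified: when you split the $(\beta+1)$-walks from~$i$ to~$v_2$ by last edge, you assert that the walks ending at the other cycle-neighbour of~$v_2$ ``contribute in pairs, giving an even partial count.'' That is not proven and is not controllable from the hypotheses: the cycle through~$v_2$ and~$v_3$ need not be a $4$-cycle, that neighbour may lie on further cycles, and nothing in the partial-hardness-gadget or 2,3-path definitions constrains what you would need here. Second, and more fundamentally, you never establish that \emph{some} choice of~$t$ produces an even $(1+\beta)$-walk count; you only note that if $t=v_2$ fails one might ``shift to $t=v_3$.'' The paper closes this gap with a clean parity argument you did not find: let $P_j$ be the shortest $i$--$v_j$ path (same length for $j=2,3$, sharing all but the final edge $z v_j$). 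Partition the $(\ell(P_2)+2)$-walks using $T_1,T_2,T_3$ from Definition~\ref{def:T-walks}. The detour walks $T_1,T_2$ are in bijection between $P_2$ and $P_3$ (every detour cycle usable before $z$ is shared, and any detour on the final cycle is available for both), while $|T_3(P_3)| = |T_3(P_2)| + 1$ because $\deg_H(v_3)=\deg_H(v_2)+1$. Hence the two total walk counts have opposite parities, so exactly one of $t=v_2$, $t=v_3$ makes condition~\ref{hgad3} hold, and no enlargement of~$K$ is needed. This parity-comparison between the two endpoints of the 2,3-path is the missing idea; without it the construction is not shown to terminate in a valid gadget.
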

\begin{proof} 
See Figure~\ref{fig:mosaic-phg-hardness}.
Let $z$ be the endpoint of the path~$P$ that is not~$x$, or let $z=x$
if $\ell(P)=0$. 
Since $H$ is a cactus graph and $v_2$ and $v_3$ are on a cycle together,
the cycle must contain the edges $(z,v_2)$ and $(z,v_3)$.
Since $v_2\neq x$ and $v_3\neq x$ and $x$~is a cut vertex,
$\deg_H(v_2)=2$ and $\deg_H(v_3)=3$.  It is easy to see that $(P'P,
v_2, v_3)$ is a 2,3-path in~$H$.  For $j\in\{2,3\}$, let $P_j =
P'Pv_j$.

We next show that the number of $(\ell(P_2)+2)$-walks from~$i$ to~$v_2$ differs in parity
from the number of $(\ell(P_2)+2)$-walks from~$i$ to~$v_3$.
There are two kinds of walks to consider --- those that detour around cycles, and
those that repeat an edge.
Using the notation from Definition~\ref{def:T-walks}, walks that
detour around cycles are in $T_1$ or~$T_2$ and those that repeat an
edge are in~$T_3$.
Since $(z,v_2)$ and $(z,v_3)$ are not on any cycles other than the one that
contains them both, all of these walks pass through~$z$, progressing on to~$v_2$ or
to~$v_3$. Furthermore, the
number of $(\ell(P_2)+2)$-walks 
that detour around cycles is the same for both endpoints, $v_2$ and $v_3$.
Now note that the number of 
$(\ell(P_2)+2)$-walks in~$T_3(P_3)$
is exactly one more than the 
number in~$T_3(P_2)$, because $\deg_H(v_3)=\deg_H(v_2)+1$.
 
We can now define the hardness gadget.
Let $\beta=\ell(P_2)+1$. 
$s$, $O$ and~$i$ are already defined by the partial hardness gadget; let $K=\emptyset$.
Choose $t\in \{v_2,v_3\}$ so that the number of $(1+\beta)$-walks from~$i$ to~$t$ is even.

To see that this is a hardness gadget, consider $o\in O$ and $y\in O\cup \{i\}$.
$s$ is the only vertex 
that is both adjacent to~$o$ and~$y$ and has  an odd number of
$(\ell(P_2)+1)$-walks to~$t$ (otherwise, there would be more than
one shortest path from $o$ to~$t$).
By construction, there are an even number of $(1+\beta)$-walks from~$i$ to~$t$.  $K=\emptyset$ so
the requirements on~$K$ are vacuous.

Now consider the primary distance requirement
$\dist_H(v,O\cup \{i\}) + \dist_H(v,t) > \beta-1 = \ell(P_2)$. 
The unique shortest path in~$H$ from~$t$ to~$O\cup \{i\}$
is either $P_2$ or $P_3$ and this path has length
$\ell(P_2)$ 
so the requirement is satisfied for any vertex~$v$ that is not on
$P_2$ or $P_3$, which is to say, any vertex of $V(H)\setminus
(V(P)\cup V(P')\cup \{v_2,v_3\})$, as required.
There are no secondary distance requirements, since $K=\emptyset$, so the lemma is proved.
\end{proof}
 
\begin{corollary}\label{corollary:phg-mosaic-hardness} 
Let $H$ be an involution-free cactus graph and let $x$ be a cut vertex of $H$. 
If there exists a split of $H$ at $x$ into
$\{H_1,\dots,H_\kappa\}$ such that 
$(H_1,x)$ is a proper mosaic and $(H_2,x)$ contains a partial hardness gadget 
$(s,i,O,P)$,
then $H$ has
a hardness gadget
that satisfies the distance requirements for very vertex 
$v\in V(H) \setminus V(H_1\cup P)$.
\end{corollary}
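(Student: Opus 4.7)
The plan is to split into two cases according to whether $(H_1,x)$ is a shortcut mosaic.

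If $(H_1,x)$ is a shortcut mosaic, Corollary~\ref{corollary:mosaic-hardness} produces a hardness gadget $\Gamma$ in $H_1$ that satisfies the distance requirements for $x$. Since $x$ is a cut vertex and $\Gamma$ is confined to $H_1$, the walk counts that appear in the definition of a hardness gadget are identical whether computed in $H_1$ or in $H$, so $\Gamma$ is also a hardness gadget in $H$. Moreover, for any $v\in V(H)\setminus V(H_1)$ and any $u\in V(H_1)$, every $v$--$u$ path must pass through $x$, so $\dist_H(v,u)=\dist_H(v,x)+\dist_{H_1}(x,u)$; combining this with the distance requirements at $x$ shows that $\Gamma$ satisfies the distance requirements for every $v\in V(H)\setminus V(H_1)$, which is strictly stronger than what is claimed.

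If $(H_1,x)$ is not a shortcut mosaic, then by Observation~\ref{obs:rootsplit} the rooted graph $(H_1,x)$ is involution-free, so together with the hypothesis it is an involution-free proper mosaic. Lemma~\ref{lemma:mosaic-deg23} then yields a 2,3-path $(P_1,v_2,v_3)$ in $(H_1,x)$. Applying Lemma~\ref{lem:phg-23path} with this 2,3-path and the given partial hardness gadget $(s,i,O,P)$ in $(H_2,x)$ (so that the lemma's $P$ and $P'$ are our $P_1$ and $P$, respectively) produces a hardness gadget in $H$ that satisfies the distance requirements for every $v\in V(H)\setminus(V(P_1)\cup V(P)\cup\{v_2,v_3\})$. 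Because $V(P_1)\cup\{v_2,v_3\}\subseteq V(H_1)$, this exclusion set is contained in $V(H_1\cup P)$, which gives exactly the conclusion stated in the corollary.

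No step should present a real obstacle: the argument is assembled directly from Corollary~\ref{corollary:mosaic-hardness}, Lemma~\ref{lemma:mosaic-deg23}, Lemma~\ref{lem:phg-23path} and Observation~\ref{obs:rootsplit}. The only mildly delicate point is the cut-vertex distance extension in the first case, and this is routine since every path into $H_1$ from outside is forced through $x$.
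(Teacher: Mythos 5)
Your proof is correct, and your Case~2 is exactly the paper's proof (apply Lemma~\ref{lemma:mosaic-deg23} to get a 2,3-path in $(H_1,x)$, then Lemma~\ref{lem:phg-23path}). The Case~1/Case~2 split on whether $(H_1,x)$ is a shortcut mosaic is superfluous, though: Lemma~\ref{lemma:mosaic-deg23} provides a 2,3-path in \emph{every} involution-free proper mosaic, shortcut or not, so the paper's single-case argument already covers the shortcut subcase and you never need to invoke Corollary~\ref{corollary:mosaic-hardness} here. (That case distinction is genuinely needed in other combination lemmas, such as Lemma~\ref{lemma:mosaic-mosaic-hardness}, which is probably where the instinct to split came from.) One small point in your favour: you explicitly note, via Observation~\ref{obs:rootsplit}, that $(H_1,x)$ is involution-free before invoking Lemma~\ref{lemma:mosaic-deg23}, a hypothesis the paper's two-line proof leaves implicit.
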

\begin{proof}
By Lemma~\ref{lemma:mosaic-deg23}, $(H_1, x)$ contains a 2,3-path;
Lemma~\ref{lem:phg-23path} gives the hardness gadget.
\end{proof}

\begin{figure}[t]
\begin{center}
\begin{tikzpicture}[scale=0.85]  
        \tikzstyle{vertex}=[fill=black, draw=black, circle]
	\tikzstyle{root}=[ draw=black, minimum size=1mm ,circle]

	\node[vertex] at (-2, 0) (0) {};
        \node[vertex] at (0, 0) (5) [label=-90:{$s_1$}] {};
	\node[vertex] at (2, 0) (18)[label=90:{$i_1$}]{};
        \node[root]   at (5, 0) (6) {$x$};
        \node[vertex] at (12, 0) (0+1) {};
        \node[vertex] at (10, 0) (5+1) [label=-90:{$s_2$}] {};
	\node[vertex] at (8, 0) (18+1)[label=90:{$i_2$}]{};

\foreach \y in {1,2,3} \node[vertex] at (-2, 4-\y) (\y) {};
              
	\node[vertex] at (-2, -2) (4) {};
	\node[vertex] at (12, 1) (1+1) {};              
	\node[vertex] at (12, -2) (4+1) {};

\draw[decorate, decoration=brace, thick] (-2.5,-2.25)--(-2.5,3.25);
\node at (-3,0.5) {$O_1$};
\draw[decorate, decoration=brace, thick] (12.5,1.25)--(12.5,-2.25);
\node at (13,-0.5) {$O_2$};

 \path[-] (5) edge (18)
	  (5+1) edge (18+1);

\foreach \y in {0,1,2,3,4} \path[-] (\y) edge (5);

\foreach \y in {0,1,4} \path[-] (\y+1) edge (5+1);

\draw (-2,-1) node {\huge{$\vdots$}};
\draw (12,-1) node {\huge{$\vdots$}};
\draw [decorate, decoration=snake] (18) -- (6);
\draw [decorate, decoration=snake] (6)--(18+1);
\node at (3,0.5) {$P_1$};
\node at (7,0.5) {$P_2$};
	  \node[inner sep=0] at (2.5,4) (A) {}; 
	  \node[inner sep=0] at (7.5,4) (B) {}; 
\path[dashed, thick] (6) edge [bend left] (A)
              (B) edge [bend left] (6)
	      (A) edge [bend left] (B);
\draw (5,3) node  
{$v\notin V(H_1)\cup V(H_2)$};

\draw[decorate,decoration=brace,thick] (8,-.5)--(2,-.5);
\node at (5,-1) {\small{$\ell = \ell(P)$}};
\draw[decorate,decoration=brace,thick] (10,-1.5)--(2,-1.5);
\node at (6,-2) {\small{$\ell' = \ell(Ps_2)$}};
\end{tikzpicture}
\caption{The two partial hardness gadgets of Lemma~\ref{lemma:phg-phg-hardness}.\label{fig:phg-phg}}
\end{center}
\end{figure}
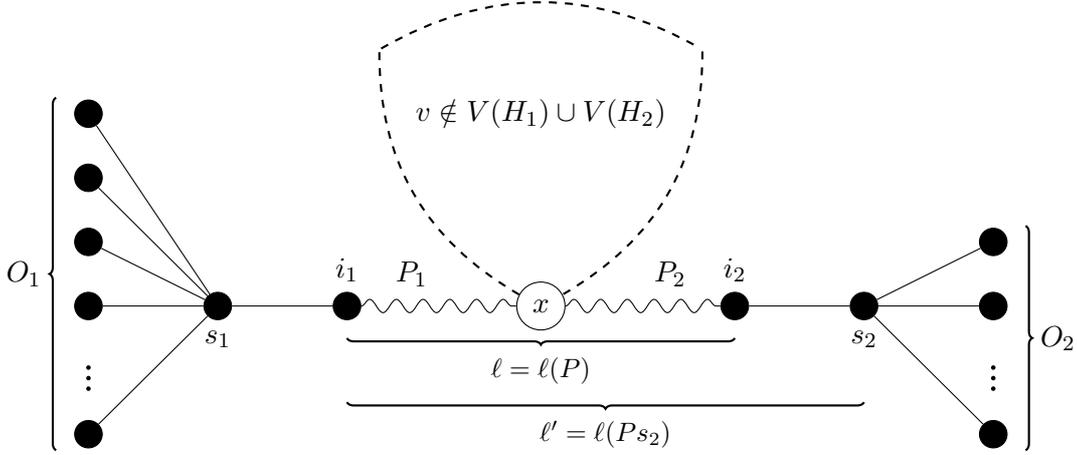

\begin{lemma} \label{lemma:phg-phg-hardness} 
Let $x$ be a cut vertex of an involution-free cactus graph~$H$. If there is a split of $H$ at $x$ into
$\{H_1,\dots,H_\kappa\}$, such that $(H_1,x)$ 
contains a partial hardness gadget $(s_1,i_1,O_1,P_1)$
and $(H_2,x)$ contains a partial hardness gadget $(s_2,i_2,O_2,P_2)$,
then $H$ contains a hardness gadget that satisfies the distance requirements
for every vertex $v\in V(H) \setminus V(P_1P_2s_2).$ 
\end{lemma}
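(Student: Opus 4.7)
The plan is to construct a hardness gadget $\Gamma=(\beta,s,t,O,i,K,k,w)$ in $H$ with $s=s_1$, $i=i_1$, $O=O_1$, and $K=\emptyset$, borrowing the centre from the first partial gadget and choosing $(t,\beta)$ from two natural candidates on the $H_2$ side. Write $\ell=\ell(P_1)+\ell(P_2)=\dist_H(i_1,i_2)$; the path $P_1P_2$ is the unique shortest $i_1$-to-$i_2$ path in $H$, because any such path must cross the cut vertex $x$ and uniqueness inside each $H_j$ is built into the partial hardness gadget definition. The two candidates will be $(t,\beta)=(i_2,\ell+1)$ and $(t,\beta)=(s_2,\ell+2)$; in both cases $\dist_H(s_1,t)=\beta$, so $s_1$ has exactly one $\beta$-walk to $t$, namely the unique shortest path.

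Clauses~\ref{hgad1} and~\ref{hgad4} of Definition~\ref{def:hgad} are immediate ($|O_1|$ is odd and $K=\emptyset$). For clause~\ref{hgad2}, I would rule out any other vertex $w\neq s_1$ adjacent to some $o\in O_1$ and to $y\in O_1\cup\{i_1\}$ with odd $\beta$-walks to $t$: no common neighbour of some $o\in O_1$ and $i_1$ can exist, because any such $w$ would yield an alternative shortest $x$-to-$o$ path $x\cdots i_1 wo$ of length $\ell(P_1)+2$, contradicting uniqueness of $P_1s_1o$; a common neighbour $w$ of two distinct $o,y\in O_1$ necessarily lies on a 4-cycle through $s_1$, so that $\dist_H(w,x)=\ell(P_1)+3$ and hence $\dist_H(w,t)>\beta$, contributing zero $\beta$-walks. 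The same distance computation handles further neighbours of a single $o\in O_1$ when $y=o$. The cactus structure, together with the uniqueness of the shortest $x$-to-$o$ paths, leaves no other possibilities.

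The core of the proof is the parity flip that gives clause~\ref{hgad3}. Let $N_i$ be the number of $(\ell+2)$-walks from $i_1$ to $i_2$ in $H$ and $N_s$ the number of $(\ell+3)$-walks from $i_1$ to $s_2$. Splitting the latter by penultimate vertex, which must lie in $\Gamma_H(s_2)=\{i_2\}\cup O_2$ (here I use that $s_2\neq x$), and noting that each $o_2\in O_2$ is reached from $i_1$ by a unique shortest path of length exactly $\ell+2$ through $x$, $i_2$ and $s_2$, so contributes exactly one walk of length $\ell+2$, I obtain
\[
  N_s \;\equiv\; N_i + |O_2| \imod{2}.
\]
Since $|O_2|$ is odd, exactly one of $N_i$ and $N_s$ is even; pick the corresponding $(t,\beta)$, which makes the number of $(1+\beta)$-walks from $i_1$ to $t$ even, as required.

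Finally, the primary distance requirement $\dist_H(v,O_1\cup\{i_1\})+\dist_H(v,t)>\beta-1=\dist_H(i_1,t)$ for every $v\notin V(P_1P_2s_2)$ would be checked by splitting on which split component contains $v$. For $v\in H_1\setminus V(P_1)$, the triangle inequality together with $\dist_H(u,x)\ge\ell(P_1)$ (with equality only for $u=i_1$ on $P_1$, which $v$ avoids) gives strictness; for $v\in H_2\setminus V(P_2s_2)$, strictness comes from $v$ not lying on the unique shortest $x$-to-$t$ path $P_2$ or $P_2s_2$; for $v$ in another split component, $v\neq x$ supplies the strict inequality. The secondary requirement is vacuous as $K=\emptyset$. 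The main obstacle is identifying the clean parity decomposition $N_s\equiv N_i+|O_2|\imod{2}$: once this two-candidate trick is in hand (in the style of Lemma~\ref{lem:phg-23path}), everything else reduces to carefully exploiting uniqueness of shortest paths in the cactus, either to bound unwanted walk counts out of range or to pin each $o_2\in O_2$'s contribution to exactly one walk.
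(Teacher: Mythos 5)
Your proof is correct and takes a genuinely cleaner route through the key parity argument. The paper compares the number of $(\ell+2)$-walks from $i_1$ to $i_2$ with the number of $(\ell+3)$-walks from $i_1$ to $s_2$ by invoking the $T_1,T_2,T_3$ machinery of Definition~\ref{def:T-walks}: it shows $T_1(P)=T_1(Ps_2)$ and $T_2(P)=T_2(Ps_2)$ by contradiction arguments about detour cycles, and then notes that $T_3(Ps_2)$ gains an odd number ($|O_2|$) of new backtracking edges. Your decomposition by penultimate vertex sidesteps all of that: an $(\ell+3)$-walk from $i_1$ to $s_2$ is exactly an $(\ell+2)$-walk from $i_1$ to some $v\in\Gamma_H(s_2)=\{i_2\}\cup O_2$ followed by a step to $s_2$, and since each $o_2\in O_2$ is at distance exactly $\ell+2$ from $i_1$ with a unique shortest path, you get the exact count $N_s=N_i+|O_2|$, not merely a mod-2 relation. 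This is more elementary, self-contained, and exact; the paper's route buys nothing extra here. Your verification of clause~\ref{hgad2} is also more explicit than the paper's (which simply asserts uniqueness of $s_1$), and the distance computations you give are sound — e.g.\ a common neighbour of two distinct $o,y\in O_1$ forces a $4$-cycle through $s_1$ in a cactus, making its distance to $x$ equal to $\ell(P_1)+3$ and hence to $t$ strictly greater than $\beta$. One small imprecision: you write ``no common neighbour of some $o\in O_1$ and $i_1$ can exist'' where you mean no common neighbour other than $s_1$, but the surrounding text makes the intent clear. The choice of $(t,\beta)$ and the final distance-requirement check match the paper's.
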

\begin{proof} 
See Figure~\ref{fig:phg-phg}.
The two partial hardness gadgets ensure that $P=P_1P_2$
 is the unique shortest path in~$H$ from~$i_1$ to~$i_2$ and $P_1P_2s_2$
is the unique shortest path from~$i_1$ to~$s_2$.
Let $\ell = \ell(P)$ and $\ell' = \ell(Ps_2) = \ell+1$.

We first show that the number of $(\ell+2)$-walks from~$i_1$ to~$i_2$
(walks which use two more edges than $P$)
differs in parity from the number of $(\ell'+2)$-walks from~$i_1$ to~$s_2$.  To do this,
we use the sets of walks $T_1$, $T_2$ and $T_3$ from Definition~\ref{def:T-walks}.

\begin{figure}[t]
\begin{center}
\begin{tikzpicture}[scale=0.85]  
        \tikzstyle{vertex}=[fill=black, draw=black, circle]
	\tikzstyle{root}=[ draw=black, minimum size=1mm ,circle]

\node[vertex] (1) at (2,0) [label=-90:$x_{a'}$] {};
\node[vertex] (2) at (4,0) [label=-90:{$x_{\ell+1}=i_2$}] {};
\node[vertex] (3) at (6,0) [label=-90:$s_2$] {};
\node[vertex] (c1) at (4,2) {};
\node[vertex] (c2) at (6,2) [label=90:{$y=o_1$}] {};
\node[vertex] (c3) at (2,2) {};
\node[vertex] (4) at (8,0) [label=-90:$o_3$] {};
\node[vertex] (5) at (8,2) [label=-90:$o_2$] {};

\node[root] (0) at (0,0) {$x$};

\node (C) at (4,1) {$C$};

\draw (1)--(4);
\draw (3)--(5);
\draw (1)--(c3)--(c1)--(c2)--(3);
\draw[decorate, decoration=snake] (0)--(1);

\draw[decorate, decoration=brace, thick] (6,-1)--(2,-1);
\node at (4,-1.5) {\small{$r$}};
\end{tikzpicture}
\caption{An example of an impossible detour cycle $C$ in $T_1(Ps_2)$, using neighbours of $s_2$. Here $O_2=\{o_1,o_2,o_3\}$.\label{fig:contradiction-cycle}}
\end{center}
\end{figure}
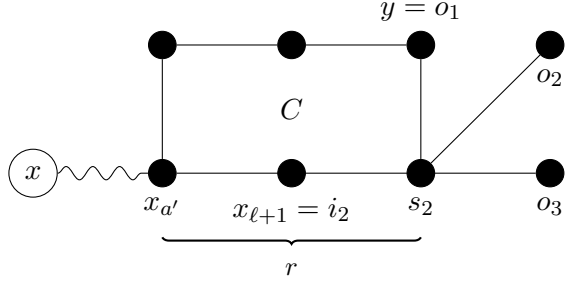

First, we show that $T_1(P) = T_1(Ps_2)$, by arguing that every detour
cycle~$C$ that is available from~$Ps_2$ is also available from~$P$.
Consider a walk $x_1\dots x_a P' x_b\dots x_{\ell'} s_2$ as in the
definition of $T_1(Ps_2)$, where $x_1=i_1$ and $x_{\ell'}=i_2$.  The
claim is obvious if $x_b\in P$ so suppose for contradiction that
$x_b=s_2$ and the detour cycle is $x_a\dots x_{\ell'}s_2P'x_a$.  Then,
by the definition of partial hardness gadgets, the neighbour~$y$
of~$s_2$ in $P'$ is in~$O_2$ (see
Figure~\ref{fig:contradiction-cycle}).  However, the definition also
requires that every vertex in~$O_2$ has a unique shortest path to~$x$
but the detour cycle gives two paths of length~$\ell(P_2)+2$ from $y$
to~$x$.

A similar argument shows that $T_2(P)=T_2(Ps_2)$.  This is obvious when
$x_{b'}\in P$. If $x_{b'}=s_2$, the same construction gives a
contradiction: the path from $y$ to~$x$ via $P''$ is shorter than the
one via~$s_2$, which the definition requires to be the unique shortest
$y$--$x$ path.

Finally, we show that $T_3(P)\not\equiv T_3(Ps_2)\imod 2$.  
This is because $s_2$~offers
an additional set of edges that may be repeated,
$\{(s_2,o) \mid o\in O_2\}$, and there are odd number of edges in this set since $|O_2|$ is odd.
 
We now construct the hardness gadget.
If the number of $(\ell+2)$-walks from~$i_1$ to~$i_2$ is even,
then let $t=i_2$ and $\beta=\ell+1$.
Otherwise, let $t=s_2$ and $\beta=\ell'+1$.
In either case, the number of $(1+\beta)$-walks from~$i_1$ to~$t$ is even.
Then let $s=s_1$, $O=O_1$, $i=i_1$ and $K=\emptyset$.

To see that this is a hardness gadget, consider $o\in O$ and $y\in O \cup \{i\}$.
$s$ is the unique vertex adjacent to~$o$ and to~$y$
which has an odd number of $\beta$-walks to~$t$.  (The odd number is one.)
The construction guarantees an even number of $(1+\beta)$-walks from~$i_1$ to~$t$.
 
 Now consider the primary distance requirement
$\dist_H(v,O\cup \{i\}) + \dist_H(v,t) > \beta-1$.
The unique shortest path in~$H$ from~$t$ to~$O\cup \{i\}$
has length $\beta-1$
so the requirement is satisfied for any vertex~$v$ that is not on this path.
This is guaranteed by the restriction on~$v$ in the statement of the lemma.
There are no secondary distance requirements, since $K=\emptyset$.
 \end{proof}

\begin{lemma}\label{lem:cycles-hardness}
 Let $(H,x)$ be a rooted cactus graph 
 containing a cycle $C=x_1x_2\dots x_\ell x_1$ 
 where $\ell\neq 4$.
 For
${j}\in[\ell]$, let $H_{j}$ be the component containing~$x_{j}$ in the graph $H- E(C)$. 
Suppose that the rooted graph $(H[(V(H)\setminus V(H_1))\cup\{x_1\}],x_1)$ is involution-free.
If $\ell$ is even, let $\calJ = [\ell]\setminus\{1,\ell/2+1\}$.
Otherwise, let $\calJ = [\ell]\setminus\{1\}$.
If, for each
$j\in  \calJ$,
$(H_j,x_j)$ is a mosaic, then $H$ contains a hardness gadget that satisfies
the distance requirements for each $v\in V(H_1)$. 
\end{lemma}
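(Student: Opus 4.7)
Since the distance requirements must hold only for $v \in V(H_1)$, the whole hardness gadget can be built inside $H' := H[(V(H)\setminus V(H_1))\cup\{x_1\}]$, and the plan is to exploit the hypothesis that $(H',x_1)$ is involution-free to locate the required asymmetry on the cycle~$C$. A natural candidate involution of $(H',x_1)$ is the reflection of $C$ about $x_1$, which sends $x_j\mapsto x_{\ell+2-j}$ (indices mod~$\ell$); by Observation~\ref{obs:rootsplit} each rooted component $(H_j,x_j)$ is itself involution-free, so the failure of the reflection must manifest as a pair of non-isomorphic opposite rooted mosaics $(H_j,x_j)$, $(H_{\ell+2-j},x_{\ell+2-j})$, or (for even~$\ell$) as an asymmetry inside $H_{\ell/2+1}$.

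I would then split by the parity of~$\ell$. When $\ell=2m\geq 6$ is even I would take $s = x_{m+1}$: the two arcs of $C$ from $s$ to $x_1$ have equal length $m$, and the component $H_{m+1}$ is allowed by hypothesis to be arbitrary — which is precisely why $\calJ$ excludes $m+1$. When $\ell$ is odd I would take $s=x_{m+1}$ with $m=\lfloor \ell/2\rfloor$, so the two arcs from $s$ to $x_1$ have distinct lengths $m$ and $m+1$ and the cycle itself already carries asymmetry, with both of $H_{m+1}$ and $H_{m+2}$ being mosaics. The triangle case $\ell=3$ is degenerate and I would treat it separately: here $(H_2,x_2)\not\cong(H_3,x_3)$ as rooted graphs, and the non-isomorphism (via differing bristle/degree structure) is used to construct the gadget directly.

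With $s$ fixed I would set $i$ to be one cycle-neighbour of $s$, $t$~to be a suitable vertex (either the opposite cycle-neighbour, or $x_1$, chosen so that the $\beta$-walk count from $s$ to $t$ is odd while the $(1+\beta)$-walk count from $i$ to $t$ is even), $O$~to be the remaining cycle-neighbour of $s$ together with an odd selection of the mosaic-neighbours of~$s$, and $K$~to absorb the rest; for each $u\in K$ the pair $(w(u),k(u))$ would be chosen inside the mosaic $H_s$ so that the 4-cycle structure of $H_s$ forces an even number of $k(u)$-walks to~$u$. The uniqueness condition (Part~\ref{hgad2} of Definition~\ref{def:hgad}) is where $\ell\neq 4$ is used crucially: since each edge of~$C$ lies only on~$C$, no 4-cycle of~$H$ contains an edge of~$C$, so no two neighbours of~$s$ share a common neighbour other than~$s$ outside of mosaic 4-cycles, which are absorbed into~$K$. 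The parity of $(1+\beta)$-walks from $i$ to $t$ I would compute by the $T_1(P)\cup T_2(P)\cup T_3(P)$ decomposition of Definition~\ref{def:T-walks}, as in the proofs of Lemmas~\ref{lemma:shortcut-hardness} and~\ref{lemma:phg-phg-hardness}: the cycle~$C$ itself contributes detours (via $T_1$ for even~$\ell$, $T_2$ for odd~$\ell$), the mosaics contribute only 4-cycle detours, and the asymmetry from involution-freeness flips the overall parity into the desired form.

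Finally, both distance requirements follow because every path from $v\in V(H_1)$ to any vertex of $V(H')\setminus\{x_1\}$ passes through~$x_1$; since $s$, $t$, $i$ and~$O$ lie near the antipode of~$x_1$ on $C$, the relevant distances from $x_1$ are at least $\lceil\ell/2\rceil-1$, which by construction of $\beta$ exceeds the threshold. The secondary requirement holds because $w(u)$ lies strictly inside $H_s$, well away from~$x_1$. The main obstacle I anticipate is the parity bookkeeping: both the ambient cycle~$C$ and the interior 4-cycles of mosaics contribute to $T_1$, while $T_3$ depends on sums of vertex degrees along the reference path, and reconciling these contributions into a single clean parity identity requires the asymmetry of opposite mosaics obtained from involution-freeness. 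The second likely obstacle is the special case $\ell=3$, where the cycle is too small for the antipodal walk argument and the mosaic non-isomorphism must be exploited directly via Lemma~\ref{lemma:mosaic-deg23}-style 2,3-path constructions inside $H_2$ and $H_3$.
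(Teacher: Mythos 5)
Your proposal has the right skeleton (the $T_1,T_2,T_3$ walk decomposition, exploiting involution-freeness of the rooted complement $(H',x_1)$, anchoring near the antipode to get distance requirements cheaply) but it departs from the paper's proof in two places where the paper's moves are not optional but essential, and in at least one of them the gap would be fatal.

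First, the paper begins by \emph{reducing to the case that every $(H_j,x_j)$ with $j\in\calJ$ is a shortcut-free mosaic with even root degree or a single bristle}: if some $(H_j,x_j)$ is a shortcut mosaic one applies Corollary~\ref{corollary:mosaic-hardness}, and if some $(H_j,x_j)$ is a proper mosaic with $\deg_H(x_j)$ odd one applies Lemma~\ref{lemma:mosaic-oddroot}; both immediately deliver a hardness gadget with the right distance requirements and one is done. Your plan jumps straight to constructing a gadget on $C$ and delegates ``the rest'' of $\Gamma_H(s)$ to $K$ with some vague choice of $(w(u),k(u))$ ``inside the mosaic $H_s$''. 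But without the preceding reduction you don't control enough of the local structure to verify Part~4 of Definition~\ref{def:hgad}: the clean parity facts the paper uses (e.g.\ that going around $C$ gives exactly one $(\ell-1)$-walk from $s$ to each cycle neighbour, and none to any bristle, because $C$ is the unique odd cycle in the relevant subgraph; or that a $2$-walk from $x_j$ to itself counts $\deg_H(x_j)$, which was chosen even) are only available after the cases you skip have been removed.

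Second, and more seriously, your choice of $s$ for even~$\ell$ is wrong. You propose $s=x_{\ell/2+1}$ precisely because $\ell/2+1\notin\calJ$, but that is the one index where the hypothesis gives you \emph{no information at all}: $(H_{\ell/2+1},x_{\ell/2+1})$ need not be a mosaic, and so $\deg_H(s)$, the set $\Gamma_H(s)\setminus V(C)$, and the walk structure around $s$ are all uncontrolled. You cannot define $O$ as ``an odd selection of the mosaic-neighbours of $s$'' when $s$'s non-cycle neighbours may sit inside an arbitrary cactus component, and you cannot verify the even-walk condition for $K$ there. The reason $\ell/2+1$ is removed from $\calJ$ is the \emph{opposite} of what you inferred: it is because $x_{\ell/2+1}$ has two shortest paths to $x_1$, so one cannot build a partial-hardness-gadget path or a unique-shortest-path argument through it, not because it is a good place to put the gadget's centre. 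The paper instead uses involution-freeness of $(H',x_1)$ to locate a $j\in\calJ$ with $\deg_H(x_j)$ even (otherwise the reflection across $x_1$ would be an involution) and sets $s=x_{j+1}$; the construction of $t$, $O$, $K$, $k$, $w$ then varies according to whether $\deg_H(s)$ is even (Case 2.1: $\beta=1$, $t=i=x_j$, $K=\emptyset$) or odd (Case 2.2: $\beta=\ell/2-1$, $t$ at distance $\ell/2$ from $i=x_{j+2}$, $K=\{x_j\}$, $k(x_j)=2$), and in both cases the primary distance requirement holds for \emph{all} $v$ and the secondary is trivial, so the gadget's location on $C$ is in fact unconstrained — the antipodal placement you rely on for distances is unnecessary there. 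Your odd-$\ell$ sketch is closer to the paper's Case~1.2, but the paper also needs Case~1.1 (the even-degree vertex adjacent to the antipode) with a different gadget, and there $s$ is $x_{\lceil\ell/2\rceil+1}$, not $x_{\lceil\ell/2\rceil}$; the claim that $\ell=3$ needs a separate treatment is also unnecessary, since it falls under Case~1.1 automatically.
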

\begin{proof}
Note that for each $j\in \calJ$, $(H_j,x_j)$ is involution-free.
We start by dispensing with some easy cases.
First, if there is a
$j\in\calJ$ such that $(H_j,x_j)$ is a shortcut mosaic then, by
Corollary~\ref{corollary:mosaic-hardness}, $H_j$ contains a hardness gadget which satisfies the distance requirements
for $x_j$ and this is also a hardness gadget in~$H$ that satisfies the
distance requirements for all $v\in V(H)\setminus V(H_j)$, in
particular for all $v\in V(H_1)$.
Second,
suppose that there is a  $j\in\calJ$ such that $(H_j,x_j)$ is a  proper mosaic and 
$\deg_H(x_j)$ is
odd. $x_j$ is a cut vertex of~$H$ since
$H$ is a cactus graph. Therefore,   Lemma~\ref{lemma:mosaic-oddroot}
guarantees that $H$ has a hardness gadget which satisfies the distance requirements
for every $v\in V(H)\setminus V(H_j)$.
 
Thus, we can assume with loss of generality that
for every $j\in\calJ$, $(H_j,x_j)$ is a shortcut-free mosaic.
The two possibilities are:
\begin{itemize}
\item $(H_j,x_j)$ is a (possibly trivial) shortcut-free mosaic and $\deg_H(x_j)$ is even, or
\item $(H_j,x_j)$ consists of a single bristle.
\end{itemize}

Since $(H[(V(H)\setminus V(H_1))\cup\{x_1\}],x_1)$  is involution-free, there is some $j\in  \calJ$ 
such that $\deg_H(x_j)$ is even. Otherwise, for each 
$j\in   \calJ$,
 $(H_j,x_j)$ is a bristle. Hence $H$ has an
involution which fixes~$H_1$ but exchanges $H_{1+d}$ with 
$H_{\ell+1-d}$ for each 
$d\in \{1,\ldots, \lfloor\ell/2 \rfloor \}$.
We will consider two cases, depending on~$\ell$.
  
\medskip

\noindent\textbf{Case 1.} $\ell$ is odd.
We split the analysis into two cases.

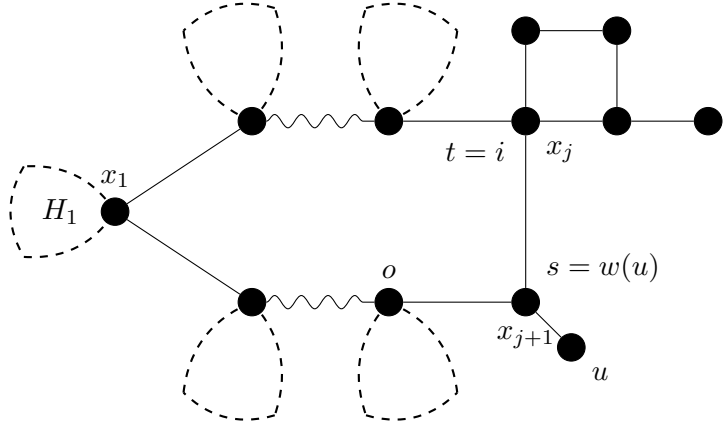
\begin{figure}[t]
\begin{center}
\begin{tikzpicture}[scale=.6]
        \tikzstyle{vertex}=[fill=black, draw=black, circle]

	\node[vertex] (1) at (0,0) [label=90:$x_1$] {};
	\node[vertex] (2) at (3,2)  {};
	\node[vertex] (3) at (6,2)  {};

	\node[vertex] (4) at (9,2)  [label=225:{$t=i$},label=-45:$x_j$] {};
	\node[vertex] (7) at (3,-2)  {};
	\node[vertex] (6) at (6,-2) [label=90:$o$] {};
	\node[vertex] (5) at (9,-2) [label=45:{$s=w(u)$},label=-90:$x_{j+1}$] {};
	\node[vertex] (5b) at (10,-3) [label=-45:$u$] {};

\node[vertex] (4a) at (9,4) {};
\node[vertex] (4b) at (11,4) {};
\node[vertex] (4c) at (11,2) {};
\node[vertex] (4d) at (13,2) {};
\draw (4d)--(4c)--(4b)--(4a)--(4)--(4c);

\draw[decorate, decoration=snake] (2)--(3);
\draw[decorate, decoration=snake] (7)--(6);

\draw (7)--(1)--(2);
\draw (3)--(4)--(5)--(6);
\draw (5)--(5b);

\path[dashed, thick] (0,0) edge [bend right] (-2,1)
              (-2,1) edge [bend right] (-2,-1)
	      (-2,-1) edge [bend right] (0,0);
\node (h1) at (-1.2,0) {$H_1$};
 
\path[dashed, thick] (3,2) edge [bend right] (3.5,4.5)
              (3.5,4.5) edge [bend right] (1.5,4)
	      (1.5,4) edge [bend right] (3,2);

\path[dashed, thick] (3,-2) edge [bend left] (3.5,-4.5)
              (3.5,-4.5) edge [bend left] (1.5,-4)
	      (1.5,-4) edge [bend left] (3,-2);

\path[dashed, thick] (6,2) edge [bend left] (5.5,4.5)
              (5.5,4.5) edge [bend left] (7.5,4)
	      (7.5,4) edge [bend left] (6,2);

\path[dashed, thick] (6,-2) edge [bend right] (5.5,-4.5)
              (5.5,-4.5) edge [bend right] (7.5,-4)
	      (7.5,-4) edge [bend right] (6,-2);
\end{tikzpicture}
\caption{The hardness gadget of Case~1.1 of Lemma~\ref{lem:cycles-hardness}. In this case, $K=\{u\}$.\label{fig:odd-opposite}}
\end{center}
\end{figure}

\noindent\textbf{Case 1.1.} There is a $j\in\{\ceil{\ell/2}, \ceil{\ell/2}+1\}$
such that $\deg_H(x_j)$ is even.
 See
Figure~\ref{fig:odd-opposite}.
Without loss of generality, suppose that $j= \lceil \ell/2\rceil$
(otherwise this could be achieved by relabelling the vertices of~$C$, going the other way around the cycle). 
We will construct a hardness gadget.
Let $\beta=1$, $s=x_{j+1}$, $t=x_{j}$, and $i=t$.
Let $o$ be the neighbour of~$s$ in~$C$ that is not~$t$.
(That is, $o=x_1$ if $\ell=3$ and $o=x_{j+1}$, otherwise.)
Let $O=\{o\}$ and let $K = \Gamma_H(s) \setminus \{o,i\}$.
For every $u\in K$ let $w(u)=s$ and $k(u)=\ell-1$. 

To see that this is a hardness gadget, consider $o$ and $y\in \{o,i\}$. $s$
is the only vertex that is adjacent to $o$, $y$ and~$t$.
However, there are an even number of vertices that are adjacent to $t$ since $\deg_H(t)$ is even.

Consider the $(\ell-1)$-walks from $w(u)=s$ to each $u\in K$,
noting that every vertex in~$K$ is a neighbour of~$s$.  Since
$s=x_{\ceil{\ell/2}+1}$, no $(\ell-1)$-walk from $s$ to one of its
neighbours can use any edge that is not in $H' = C \cup H_3 \cup \dots
H_\ell$.  Since each of $(H_3,x_3), \dots, (H_\ell,x_\ell)$~is a mosaic, $C$~is the only odd cycle
in~$H'\!$.  Since $\ell-1$ is even and the distance from $s$ to~$u$ is
one, which is odd, there is exactly one $(\ell-1)$-walk from~$s$ to
each of its two neighbours $i$ and~$o$ in~$C$ (going the long way around the
cycle) and there are no $(\ell-1)$-walks from $s$ to~$K$, the set of
its neighbours outside~$C$.

Now consider the primary distance requirement 
$\dist_H(v,\{o,i\}) + \dist_H(v,t) > 0$.  
This is satisfied for any $v\in V(H)\setminus\{t\}$, including each $v\in V(H_1)$. 
Finally, consider the secondary distance requirement
$\dist_H(v,s)+\dist_H(v,\{o,i,u\})>\ell-3$.
This is true for any $v\in V(H_1)$
since $\dist_H(x_1,s)+\dist_H(x_1,\{o,i,u\}) = \dist_H(x_1,s)
+ \dist_H(x_1,o) = \floor{\ell/2} + \floor{\ell/2} - 1 = \ell-2$.

\begin{figure}[t]
\begin{center}
\begin{tikzpicture}[scale=.7]
        \tikzstyle{vertex}=[fill=black, draw=black, circle]

	\node[vertex] (1) at (-2,0) [label=90:$x_1$] {};
	\node[vertex] (2) at (2,2) [label=45:$t$] {};
	\node[vertex] (3) at (4,2)  {};
	\node[vertex] (4) at (6,2)  {};
        \node[vertex] (5) at (8,2) [label=45:$i$] {};
	\node[vertex] (6) at (10,2)[label=-45:$s$] {};
	\node[vertex] (11) at (2,-2)  {};
	\node[vertex] (10) at (4,-2)  {};
	\node[vertex] (9) at (6,-2)  {};
        \node[vertex] (8) at (8,-2)  {};
	\node[vertex] (7) at (10,-2) [label=45:$o$] {};
	\node[vertex] (15) at (0,1) [label=-90:$x_2$] {};
	\node[vertex] (115) at (0,-1)  {};

\draw (1)--(15)--(2)--(3)--(4)--(5)--(6)--(7)--(8)--(9)--(10)--(11)--(115)--(1);

\foreach \x in {3,4,5} \node[vertex] (b\x) at (2*\x-2,4) {};
\foreach \x in {3,4,5} \path (b\x) edge (\x);
	
\node[vertex] (6a) at (11.4,3.6) [label=90:$u$] {};
\draw (6)--(6a);

\node[vertex] (7a) at (11.4,-3.6) {};
\draw (7)--(7a);

\node[vertex] (2a) at (2,4) {};
\node[vertex] (2b) at (0,4) {};
\node[vertex] (2c) at (0,2) {};
\node[vertex] (2d) at (3,5) {};
\draw (2d)--(2a)--(2b)--(2c)--(2)--(2a);

\draw[decorate, decoration=brace,thick] (8,1.5)--(2,1.5);
\draw node at (5,1) {$\beta-1$};

\path[dashed, thick] (1) edge [bend right] (-4,1)
              (-4,1) edge [bend right] (-4,-1)
	      (-4,-1) edge [bend right] (1);
\node (h1) at (-3.2,0) {$H_1$};

\foreach \x in {1,2,3,4} { \path[dashed, thick] (2*\x,-2) edge [bend right] (2*\x-.5,-4)
              (2*\x-.5,-4) edge [bend right] (2*\x+.5,-4)
	      (2*\x+.5,-4) edge [bend right] (2*\x,-2);   };

\path[dashed, thick] (115) edge [bend right] (-1.5,-2.5)
              (-1.5,-2.5) edge [bend right] (-.5,-3)
	      (-.5,-3) edge [bend right] (115);
\end{tikzpicture}
\caption{The hardness gadget of Case~1.2 of Lemma~\ref{lem:cycles-hardness}. Here, $\ell=13$, $\beta=4$, and
$K=\{u\}$.\label{fig:odd-close}}
\end{center}
\end{figure}
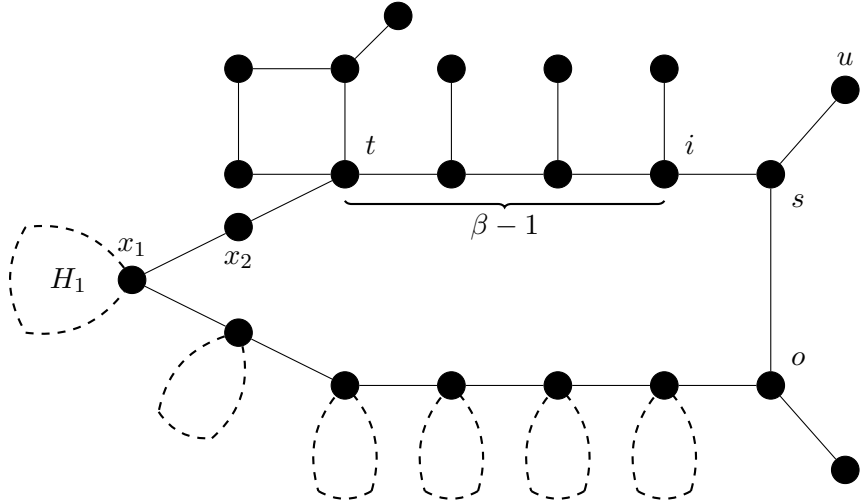

\noindent\textbf{Case 1.2.} We are not in Case~1.1 but there is still a $j\in\calJ$
such that $\deg_H(x_j)$ is even.  Without loss of generality, we may assume that
$j< \lceil \ell/2\rceil$, numbering the vertices of the cycle the
other way around, if necessary.

Again we will construct a hardness gadget.  See Figure~\ref{fig:odd-close}.
This time, let $s=x_{\lceil \ell/2\rceil}$
and $i=x_{\lceil \ell/2\rceil-1}$.  Since we are not in Case~1.1, $\deg_H(s)=3$.
Choose $t\in \{x_2,\ldots,x_{\lceil \ell/2\rceil-1} \}$
such that $\deg_H(t)$ is even and $\dist_H(t,i)$ is as small as possible.
Let $\beta=\dist_H(t,s)$.
Let $o=x_{\lceil \ell/2\rceil +1}$, which also has degree~3,
since we are not in Case~1.1; let $O=\{o\}$.
Let $u$ be the neighbour of~$s$ that is not in~$C$ and let $K=\{u\}$.
Let $w(u)=s$ and $k(u)=\ell-1$. 

To see that this is a hardness gadget, consider $o$ and $y\in \{o,i\}$.
$d_H(o,t)=\beta+1$ and $s$ is the only neighbour of~$o$ that has
any $\beta$-walks to $t$, and it has one such walk. Also, $s$ is adjacent to~$y$.
On the other hand,
there are an even number of $(\beta+1)$-walks from~$i$ to~$t$.
All of these walks repeat an edge, and there are an even number of edges that 
can be repeated because, by the choice of~$t$, every vertex between
$i$ and~$t$ has odd degree.
The proof that $w(u)=s$ has an even number of $k(u)$-walks to~$u$ and odd number of 
$k(u)$-walks to each of $o$ and~$i$ is exactly the same as in the proof of Case~1.1.

Now consider the primary distance requirement 
$\dist_H(v,\{o,i\}) + \dist_H(v,t) > \beta-1$.
This follows for any  $v$ which is not on the  unique shortest path in~$H$
from~$t$ to~$i$, which includes every $v\in V(H_1)$.
Finally, consider the secondary distance requirement
$\dist_H(v,s)+\dist_H(v,\{o,i,u\})>\ell-3$.
As in Case~1.1,
this is true for any $v\in V(H_1)$
since $\dist_H(x_1,s)+\dist_H(x_1,\{o,i,u\}) = \ell-2$.

\medskip

\noindent\textbf{Case 2.} $\ell$ is even.  Recall that $\ell\neq 4$
by the hypothesis of the lemma.
Choose $j\in\calJ$ such that $\deg_H(x_j)$ is even; again, we may
assume that $j\leq \ell-2$, and $1\notin \calJ$ by definition.
Let $s=x_{j+1}$. 
We will construct a hardness gadget for $H$ in each of two cases. 

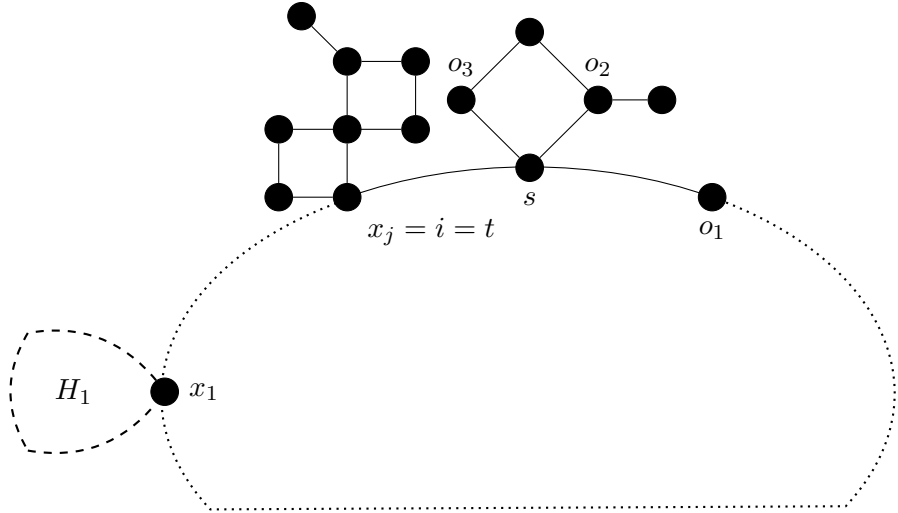
\begin{figure}[t]
\begin{center}
\begin{tikzpicture}[scale=.6]
        \tikzstyle{vertex}=[fill=black, draw=black, circle]
	
\draw (0,0) arc (60:120:8cm and 5cm);
\draw[dotted,thick] (0,0) arc (60:-30:8cm and 5cm) -- (-11,-6.9) arc (210:120:8cm and 5cm);

\node[vertex] (o) at (0,0) [label=-90:$o_1$] {};
\node[vertex] (i) at (-8,0) [label=-45:{$x_j=i=t$}] {};
\node[vertex] (s) at (-4,.65) [label=-90:$s$] {};

\node[vertex] (a) at (-12,-4.3) [label=0:$x_1$] {};
\path[dashed, thick] (a) edge [bend right] (-15,-3)
		    (-15,-3) edge [bend right] (-15,-5.6)
		    (-15,-5.6) edge [bend right] (a);
\node at (-14,-4.3) {$H_1$};

\node[vertex] (1) at (-5.5,2.15) [label=90:$o_3$] {};
\node[vertex] (2) at (-2.5,2.15) [label=90:$o_2$] {};
\node[vertex] (3) at (-4,3.65) {};
\node[vertex] (b1) at (-1.1,2.15) {};
\draw (b1)--(2)--(s)--(1)--(3)--(2);

\node[vertex] (5) at (-8,3) {};
\node[vertex] (6) at (-8,1.5) {};
\node[vertex] (7) at (-9.5,0) {};
\node[vertex] (8) at (-9.5,1.5) {};
\node[vertex] (9) at (-6.5,1.5) {};
\node[vertex] (10) at (-6.5,3) {};
\node[vertex] (b2) at (-9,4) {};
\draw (b2)--(5)--(6)--(i)--(7)--(8)--(9)--(10)--(5);
\end{tikzpicture}
\caption{An example of a hardness gadget for Case~2.1 of Lemma~\ref{lem:cycles-hardness}. Here $O=\{o_1,o_2,o_3\}$; the dotted line indicates omitted portions of the cycle~$C$.\label{fig:even-even}}
\end{center}
\end{figure}

\noindent\textbf{Case 2.1.} $\deg_H(s)$ is even.
See Figure~\ref{fig:even-even}.
Let $\beta=1$, $i=t=x_{j}$ and $O=\Gamma_H(s)\setminus \{i\}$. Note that $|O|$ is odd. 
Let $K=\emptyset$. 
 
To see that this is a hardness gadget, consider any $o\in O$ and
$y\in O\cup \{i\}$. $s$~is the unique vertex adjacent to $o$, $y$ and $t$.
Since $\deg_H(x_{j})$ is even, there are an even number of
$2$-walks from $i$ to $t=i$.
Consider the primary distance requirement
$\dist_H(v,O\cup \{i\}) + \dist_H(v,t) > 0$.
This is satisfied for every $v\neq t$, including every $v\in V(H_1)$. There are no secondary distance requirements since
$K=\emptyset$.

\begin{figure}[t]
\begin{center}
\begin{tikzpicture}[scale=.7]
        \tikzstyle{vertex}=[fill=black, draw=black, circle]

	\node[vertex] (1) at (0,0) [label=0:$x_1$] {};
	\node[vertex] (2) at (2,2) [label=45:$x_2$,label=-90:$w(x_2)$] {};
	\node[vertex] (3) at (4,2) [label=45:$x_3$,label=-90:$s$] {};
	\node[vertex] (4) at (6,2) [label=-90:$i$] {};
        \node[vertex] (5) at (8,0) [label=180:$x_5$] {};
	\node[vertex] (8) at (2,-2) [label=90:$t$] {};
	\node[vertex] (7) at (4,-2) {};
	\node[vertex] (6) at (6,-2) {};

\draw (1)--(2)--(3)--(4)--(5)--(6)--(7)--(8)--(1);

\node[vertex] (s) at (4,4) [label=0:$o$] {};
\draw (s)--(3);

\node[vertex] (a2) at (0,2) {};
\node[vertex] (b2) at (0,4) {};
\node[vertex] (c2) at (2,4) {};
\node[vertex] (2b) at (-1,3) {};
\draw (2b)--(a2)--(2)--(c2)--(b2)--(a2);

\path[thick, dashed] (1) edge [bend right] (-2,.5)
      (-2,.5) edge[bend right] (-2,-.5)
      (-2,-.5) edge [bend right] (1);
\node at (-1.2,0) {$H_1$};

\path[thick, dashed] (5) edge [bend right] (10,-.5)
      (10,-.5) edge[bend right] (10,.5)
      (10,.5) edge [bend right] (5);
\node at (9.2,0) {$H_5$};

\path[thick, dashed] (4) edge [bend left] (5.5,4)
      (5.5,4) edge[bend left] (6.5,4)
      (6.5,4) edge [bend left] (4);
\node at (6,3.2) {$H_4$};

\foreach \x in {6,7,8}{\path[thick, dashed] (\x) edge [bend left] (18.5-2*\x,-4)
      (18.5-2*\x,-4) edge[bend left] (17.5-2*\x,-4)
      (17.5-2*\x,-4) edge [bend left] (\x);
\node at (18-2*\x,-3.2) {$H_\x$};};
\end{tikzpicture}
\caption{An example of a hardness gadget for Case~2.2 of Lemma~\ref{lem:cycles-hardness}. In this case, $\ell=8$,
$\beta=3$ and $j=2$.\label{fig:even-odd}}
\end{center}
\end{figure}
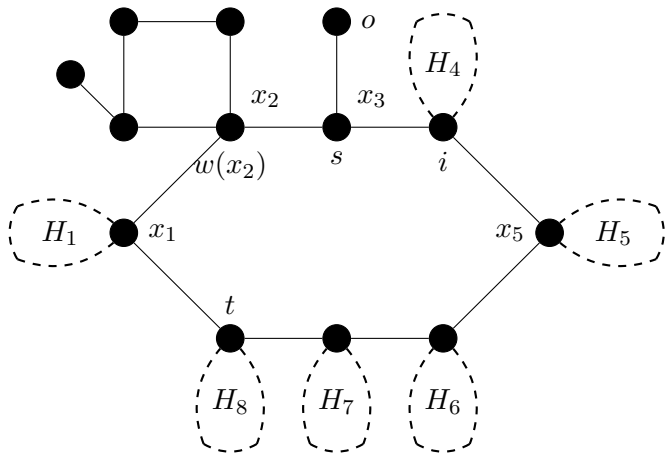

\noindent\textbf{Case 2.2.} $\deg_H(s)$ is odd so, in fact,
$\deg_H(s)=3$.  
See Figure~\ref{fig:even-odd}.

Let $\beta=\ell/2-1$. Let $i = x_{j+2}$, $K = \{x_{j}\}$ and $O =
\Gamma_H(s) \setminus \{x_j,x_{j+2}\}$, which is a single vertex: call
this~$o$.
Let $t$ be the unique vertex in~$C$ at distance $\ell/2$ from~$i$.
Let $w(x_{j})=x_{j}$ and $k(x_{j})=2$. 

To see that this is a hardness gadget, consider~$o$ and any $y\in \{o,i\}$.  
There is a unique $(\beta+1)$-walk from~$o$ to~$t$, and this goes
via~$s$.  Thus, $s$ is the unique vertex adjacent to $o$ and~$y$ that
has an odd number (one) of
$\beta$-walks to~$t$.
By construction, there are exactly two $(\beta+1)$-walks from~$i$
to~$t$, one going each way
around the cycle~$C$.
Since $\deg_H(x_{j})$ is even, there are an even number of $2$-walks from~$x_{j}$ to itself, but
there is exactly one $2$-walk from~$x_{j}$ to each of $o$ and~$i$.
The primary distance requirement is
$\dist_H(v,\{o,i\}) + \dist_H(v,t) > \beta-1$.
Since $\dist_H(t,\{o,i\})=\beta+1$, this holds for any~$v$.
Finally, the secondary distance requirement is
$\dist_H(v,x_{j})+\dist_H(v,\{o,i\}\cup K)>0$. 
This holds for any $v\neq x_{j}$, including all $v\in V(H_1)$.
\end{proof}


\section{Hardness gadgets in cactus graphs}

In this section, we show that every non-trivial involution-free cactus
graph contains a hardness gadget.  We first show that every
involution-free rooted cactus graph that is not a mosaic contains a
hardness gadget or a partial hardness gadget.

For rooted graphs, Observation~\ref{obs:rootsplit} allows an inductive
proof. Given an involution-free rooted graph~$(H,x)$ in which $x$~is a
cut vertex, we can take the split $\{H_1, \dots, H_\kappa\}$ and
recurse on the rooted graphs $\{(H_1,x), \dots, (H_\kappa, x)\}$, since
these are also involution-free.  If $x$~is not a cut vertex or
splitting at~$x$ does not give helpful rooted subgraphs, we instead
cut~$H$ up by deleting the edges of an appropriate cycle~$C$ to give
components which can be rooted at the vertices in~$C$.  One of these
contains~$x$, so needs special attention; the others are dealt with
inductively.

Finally, we need to show that every non-trivial, involution-free
unrooted cactus graph contains a hardness gadget.  To do this, we
temporarily introduce a root at a suitable vertex.

\begin{lemma}\label{lemma:phg-tree}
 An involution-free rooted tree $(H,x)$ with at least three vertices contains a partial hardness gadget.
\end{lemma}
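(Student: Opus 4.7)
The plan is to produce the gadget by a greedy ``deep in the tree'' choice: let $s$ be a non-leaf vertex of $H$ at maximum distance from~$x$, let $i$ be the unique neighbour of $s$ on the $x$--$s$ path, put $O=\Gamma_H(s)\setminus\{i\}$, and let $P$ be the $x$--$i$ path in~$H$. I would then verify the four conditions of the definition of a partial hardness gadget one at a time.

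The first step is to show that such an $s$ exists and satisfies $s\neq x$, which is where all of the real content lies. Because $|V(H)|\geq 3$ and $H$ is a tree, some vertex has degree at least~$2$, so a non-leaf exists. If $x$ were the unique non-leaf, then every vertex of $V(H)\setminus\{x\}$ would be a leaf adjacent to~$x$, and $|V(H)|\geq 3$ would force $x$ to have at least two leaf children; transposing any two of them would be an involution of~$H$ fixing~$x$, contradicting that $(H,x)$ is involution-free. Hence there is a non-leaf $s\neq x$, and $i$ is well-defined.

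The second step is to show that $|O|=1$, which makes $|O|$ odd. By the maximality of $\dist_H(x,s)$ among non-leaves, every child of $s$ (i.e.\ every vertex of $O$) is a leaf. Since $s$ itself is not a leaf we have $|O|\geq 1$. If $|O|\geq 2$, then the transposition of two leaf children of $s$ is a nontrivial involution of~$H$ that fixes every vertex outside~$O$, in particular it fixes~$x$ (since $x\neq s$ lies outside $O$). This again contradicts involution-freeness, so $|O|=1$.

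The remaining three conditions---that $P$, $Ps$ and $Pso$ (for the unique $o\in O$) are the unique shortest paths in~$H$ from~$x$ to $i$, $s$ and $o$ respectively---are automatic in a tree, where any two vertices are joined by a unique path. The main (and only genuine) obstacle is the first step: ruling out the degenerate case in which the only non-leaf is $x$ itself. Once $s\neq x$ is secured, the rest of the verification is forced by the definition together with the tree structure.
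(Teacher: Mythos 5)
Your proof is correct and takes essentially the same approach as the paper's. The paper selects a leaf $o$ at maximal distance from $x$ and sets $s$ to be its neighbour, whereas you select the non-leaf $s$ at maximal distance directly; these yield the same vertices, and both proofs rule out $s=x$ and $|O|\geq 2$ by exhibiting a leaf-swapping involution of $(H,x)$.
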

\begin{proof}
Let $o$ be a leaf of~$H$ at maximal distance from~$x$.
Let~$s$ be the neighbour of~$o$.
Since $(H,x)$ is involution-free and $H$ contains at least three vertices
$\dist_H(x,o)>1$ so $s\neq x$.  Let~$i$ be the neighbour of~$s$ on the
path to~$x$.
Any neighbour of~$s$ that is not on the path to~$x$ must be a leaf as,
otherwise, there would be a leaf farther from~$x$ than $o$~is.  But no
vertex in an involution-free tree can be adjacent to more than one
leaf, so $\deg_H(s)=2$.
Let $P$ be the (unique) path in the tree~$H$ from~$i$ to~$x$.
The partial hardness gadget is $(s,i,\{o\},P)$.
\end{proof}

We say that a cut vertex~$x\in V(H)$ is \emph{cycle-separating} if at
least two of the components of the split of~$H$ at~$x$ contain cycles.

\begin{lemma}
\label{lem:lastday}
    Let $(H,x)$ be a connected, involution-free rooted cactus graph. Then at
    least one of the following is true:
    \begin{itemize}
    \item $H$ contains a hardness gadget satisfying the distance
        requirements for $x$, or
    \item $(H,x)$ contains a partial hardness gadget, or
    \item $(H,x)$ is a shortcut-free mosaic.
    \end{itemize}
\end{lemma}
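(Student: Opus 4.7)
The plan is to induct on $|V(H)|$. The base cases $|V(H)|\le 2$ are immediate because $(H,x)$ is then a single vertex or a single edge, and hence a shortcut-free mosaic. For the inductive step I split according to the block structure at~$x$.

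If $H$ is a tree, Lemma~\ref{lemma:phg-tree} directly supplies a partial hardness gadget. If $x$ is a cut vertex of~$H$, I split $H$ at~$x$ into $\{H_1,\dots,H_\kappa\}$; by Observation~\ref{obs:rootsplit} each $(H_j,x)$ is involution-free and strictly smaller, so the induction hypothesis applies to each. A hardness gadget in some $(H_j,x)$ satisfying the distance requirements for~$x$ is automatically a hardness gadget in~$H$ with the same property, since $x$~separates $H_j$ from the rest and no walk required by the gadget definition can leave $H_j$. A partial hardness gadget lifts verbatim, because the uniqueness of the shortest paths from~$x$ required by the PHG definition is preserved when the other components are attached at~$x$. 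If every $(H_j,x)$ is instead a shortcut-free mosaic, gluing the $(V',V'')$ decompositions at~$x$ exhibits $(H,x)$ itself as a mosaic, and any hypothetical shortcut must live inside one component (any path between vertices in distinct components passes through~$x$), contradicting shortcut-freeness of that component.

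Suppose now $x$ is not a cut vertex and $H$ contains a cycle, so $x$ lies in a unique block that is either an edge (so $x$ is a leaf) or a cycle~$C$ with $\deg_H(x)=2$. For the cycle case with $|C|\neq 4$, I delete $E(C)$ to produce components rooted as $(H_1,x_1)=(\{x\},x),(H_2,x_2),\dots,(H_{|C|},x_{|C|})$; each $(H_j,x_j)$ with $j\neq 1$ is involution-free (else an involution fixing~$x_j$ extends to an involution of~$H$ fixing~$x$), so induction applies. If every $(H_j,x_j)$ with $j\in\calJ$ is a mosaic, Lemma~\ref{lem:cycles-hardness} yields the required hardness gadget on $V(H_1)=\{x\}$; otherwise some $(H_j,x_j)$ with $j\in\calJ$ produces an HG or PHG that lifts to $(H,x)$, because the definition of~$\calJ$ guarantees a unique shortest $x$--$x_j$ walk around~$C$. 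When $x$~is a leaf with unique neighbour~$y$, the rooted graph $(H-x,y)$ is involution-free (an involution of $H-x$ fixing~$y$ extends to one of~$H$ fixing~$x$) and strictly smaller, so induction applies: an HG satisfying the requirements for~$y$ strengthens to one satisfying the requirements for~$x$ via $\dist_H(x,v)=1+\dist_{H-x}(y,v)$, and a PHG $(s,i,O,P)$ lifts to $(s,i,O,xP)$ by prepending the edge~$xy$.

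The main obstacles are the two residual subcases in which induction only returns a shortcut-free mosaic and an HG or PHG must be exhibited by hand: (i)~$x$ is a leaf and $(H-x,y)$ is a shortcut-free mosaic, and (ii)~$|C|=4$ for the cycle through~$x$. In~(i), because $x$~is a leaf and $|V(H)|\ge 3$ one first checks that $(H,x)$ itself cannot be a mosaic (otherwise $x$ would lie on a $4$-cycle), so I would build a PHG explicitly by choosing a bristle~$o$ of the mosaic $(H-x,y)$ of maximum distance from~$y$, taking its unique mosaic neighbour as~$s$, and using the tree-like skeleton of the mosaic to read off the unique shortest path~$P$; uniqueness of the paths that appear in the PHG definition is forced by the asymmetry that involution-freeness of $(H,x)$ imposes on the mosaic. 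In~(ii), if all three non-trivial components of $H-E(C)$ are shortcut-free mosaics then $(H,x)$ is itself a mosaic built from~$C$ plus mosaic attachments, which is either shortcut-free (done) or a shortcut mosaic (yielding an HG via Corollary~\ref{corollary:mosaic-hardness}); if a component is non-mosaic I use that the vertex $b=x_3$ opposite~$x$ on~$C$ is a cut vertex of~$H$, observe that involution-freeness of $(H,x)$ forces $(H_2,a)\not\cong(H_4,c)$ so the subgraph $H'$ formed from $C$ and the two mosaic attachments at~$a$ and~$c$ is an involution-free proper mosaic at~$b$, extract a $2,3$-path from $(H',b)$ via Lemma~\ref{lemma:mosaic-deg23}, and combine it with the PHG in $(H_3,b)$ through Lemma~\ref{lem:phg-23path} to produce the desired hardness gadget satisfying the distance requirements for~$x$.
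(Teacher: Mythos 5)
Your high-level structure diverges from the paper mainly in one place: where the paper always works with the cycle $C$ reachable from~$x$ (so that $x_1$, the closest cycle vertex to~$x$, is what ultimately carries the analysis), you instead recurse by vertex deletion when $x$~is a leaf, passing to $(H-x,y)$. That route is fine except in your obstacle~(i), where $(H-x,y)$ comes back from the induction as a shortcut-free proper mosaic, and there the proposal has a real gap. You claim one can always ``build a PHG explicitly'' from a bristle at maximum distance, but a partial hardness gadget requires $\deg_H(s)=|O|+1$ to be even, and the mosaic-neighbour of a bristle has odd degree at least~3, so the construction as stated is not even well-formed. Worse, the conclusion itself can be wrong: take $H$ to be a 4-cycle $y\,a\,b\,c\,y$ with one bristle $p_a$ at~$a$ and a pendant vertex~$x$ at~$y$, rooted at~$x$. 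Then $(H,x)$ is involution-free, $(H-x,y)$ is a shortcut-free proper mosaic, $(H,x)$ is not a mosaic, and a direct check of all candidates $s\in\{b,c\}$ (the only even-degree vertices) shows that $(H,x)$ has \emph{no} partial hardness gadget whatsoever---the uniqueness requirement for the shortest $x$--$b$ path always fails. The correct conclusion here is a hardness gadget: $\deg_H(y)=3$ is odd, and Lemma~\ref{lemma:mosaic-oddroot} applied to the split at~$y$ produces one satisfying the distance requirements for~$x$. So obstacle~(i) must split on the parity of $\deg_H(y)$: if it is even, set $s=y$, $i=x$, $O=\Gamma_H(y)\setminus\{x\}$ and the empty path~$P$ to get a PHG; if it is odd, invoke Lemma~\ref{lemma:mosaic-oddroot} to get an HG. This is exactly the dichotomy the paper carries out in Case~2 of its $\ell=4$ analysis (``If $\deg_H(x_1)$ is odd \dots\ If $\deg_H(x_1)$ is even \dots''), and without it the proposal's induction does not close.

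There is also a smaller unstated step in the cut-vertex case: to glue the $(V',V'')$ decompositions of the $(H_j,x)$ into a mosaic structure on $(H,x)$ one needs that $x$~is incident to at most one bristle, which follows only because involution-freedom of $(H,x)$ forces the $(H_j,x)$ to be pairwise non-isomorphic (so at most one can be a single edge); the paper states this explicitly and you should too. Your obstacle~(ii) is essentially the paper's $\ell=4$ argument recast via Lemma~\ref{lemma:mosaic-deg23} on $H'=C\cup H_2\cup H_4$ rather than the paper's explicit sub-case split, and it looks serviceable once the details about the split at $b$ and the location of $v_2,v_3$ relative to~$x_1$ are checked, but obstacle~(i) as written is the step that fails.
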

\begin{proof}
The proof is by
induction on the number of cycles in~$H$.  If $H$ is acyclic and
is a single vertex or a single edge, $(H,x)$ is a shortcut-free
mosaic; if it is acyclic and has more than one edge, it contains a
partial hardness gadget by Lemma~\ref{lemma:phg-tree}.

Otherwise, $H$ contains at least one cycle.  If $x$~is a
cycle-separating cut vertex, let $\{H_1, \dots, H_\kappa\}$ be the split
of~$H$ at~$x$.  Every $(H_j, x)$ is an involution-free cactus graph
and has fewer cycles than~$H$.  If some $H_j$ contains a hardness
gadget that satisfies the distance requirements for~$x$, this is also
a hardness gadget in~$H$ and still satisfies the distance
requirements.  Likewise, a partial hardness gadget in some $(H_j,x)$
is also a partial hardness gadget in~$(H,x)$.  If there is no hardness
or partial hardness gadget in any $(H_j,x_j)$ then, by the inductive
hypothesis, every $(H_j,x)$ is a shortcut-free mosaic.  It follows
that $(H,x)$ is, itself, a shortcut-free mosaic.  It is a mosaic
because involution-freedom of~$(H,x)$ guarantees that the $(H_j,x)$
are pairwise non-isomorphic so, in particular, $x$~has at most one
bristle in~$H$.  It is shortcut-free because any shortcut in~$(H,x)$
must be inside some~$(H_j,x)$, but all of them are shortcut-free.

For the remainder of the proof, we assume that $x$ is not a
cycle-separating cut vertex (indeed, it is not necessarily even a cut
vertex).  Let $C=x_1 \dots x_\ell x_1$ be a cycle such that there is a
path from $x$ to~$x_1$ in which only $x_1$ is on a cycle.  (If $x$~is
on a cycle, then $C$~is this cycle, $x_1=x$ and the path is trivial.)
For $j\in [\ell]$, let $H_j$ be the component of $H-E(C)$ that
contains~$x_j$.  Thus, $x\in V(H_1)$.
We will use the fact below that $x=x_1$ if $x$ is on a cycle.
Otherwise, there is a unique path in~$H$ from~$x$ to~$x_1$.

Each of the rooted graphs $(H_j,x_j)$ is a cactus graph with fewer
cycles than~$H$ so, by the inductive hypothesis, each contains a
hardness gadget satisfying the distance restrictions for~$x_j$,
contains a partial hardness gadget, or is a shortcut-free mosaic.  If
any 
of $H_2,\ldots,H_\ell$
contains a hardness gadget, this is a hardness gadget in~$H$ so we
are done.

If $\ell$ is odd, let $\calJ = \{2, \dots, \ell\}$; otherwise, let
$\calJ = \{2, \dots, \ell\} \setminus \{\ell/2+1\}$.  Thus, $\calJ$ is
the set of indices~$j>1$ such that $H$~contains a unique shortest path
from $x$ to~$x_j$.

Suppose that, for some $j\in \calJ$, $(H_j,x_j)$ contains a partial
hardness gadget $(s,i,O,P)$.  In~$H$, $x$~has a unique shortest
path~$P'$ to~$x_{j}$ and $(s,i,O,P'P)$ is a partial hardness gadget
in~$(H,x)$.

Otherwise, for every $j\in\calJ$, $(H_j,x_j)$ is a shortcut-free
mosaic.  If $\ell\neq 4$ then, by Lemma~\ref{lem:cycles-hardness},
$H$~contains a hardness gadget that satisfies the distance
requirements for every vertex in~$H_1$, which includes~$x$.

We are left with the case $\ell=4$.  $(H_2,x_2)$ and $(H_4,x_4)$ are
mosaics and $(H_3, x_3)$ 
contains a partial hardness gadget or is a
shortcut-free mosaic.

\paragraph{Case 1.} $(H_3,x_3)$ contains a partial
hardness gadget.  

If $(H_2,x_2)$ is a proper mosaic then by Lemma~\ref{lemma:mosaic-deg23}, it contains a
2,3-path. Then, by Lemma~\ref{lem:phg-23path}, $H$ contains a hardness gadget 
that satisfies the distance requirements for every 
vertex $v\in V(H) \setminus(V(H_2)\cup V(H_3))$ and this includes $v=x$.
Similarly, there is a hardness gadget if $(H_4,x_4)$ is a proper mosaic.

So suppose that neither of $(H_2,x_2)$ and   $(H_4,x_4)$ is a proper mosaic.
Since $(H,x_1)$ is involution-free, one of $(H_2,x_2)$ and $(H_4,x_4)$ is a single edge and the other
is a single vertex. Suppose without loss of generality that $x_2$ is a single vertex. See Figure~\ref{fig:lastday}.
\begin{figure}[t]

\begin{center}
\begin{tikzpicture}[scale=.6]
        \tikzstyle{vertex}=[fill=black, draw=black, circle]
	\tikzstyle{root}=[ draw=black, minimum size=1mm ,circle]
	\node[root] (0) at (-4,0) {$x$};
	\node[vertex] (1) at (-1,0) [label=90:$x_1$] {};
	\node[vertex] (2) at (1,2)  [label=-90:$x_2$]{};
	\node[vertex] (3) at (3,0)  [label=90:$x_3$] {};
	\node[vertex] (4) at (1,-2) [label=90:$x_4$] {};
        \node[vertex] (4b) at (1,-4) {};

	\node[vertex] (i) at (6,0) [label=90:$i$] {};
	\node[vertex] (s) at (8,0) [label=90:$s$] {};

	\node[vertex] (o1) at (10,1) {};
	\node[vertex] (o2) at (10,0) {};
	\node[vertex] (o3) at (10,-3) {};
        \node (od) at (10,-1.5) {\huge{$\vdots$}};

\draw[decorate, decoration=brace, thick] (10.75,1.5)--(10.75,-3.5);
\node at (11.5,-1) {$O$};

\foreach \x in {o1,o2,o3} \draw (\x)--(s);
\draw[decorate,decoration=snake] (3)--(i);
      \node (p) at (4.5,-.5) {$P$};
\draw (s)--(i);

\draw (1)--(2)--(3)--(4)--(4b);
\draw (1)--(4);

\draw[decorate,decoration=snake] (0)--(1);

\path[dashed, thick] (-1,0) edge [bend right] (-5,2)
              (-5,2) edge [bend right] (-5,-2)
	      (-5,-2) edge [bend right] (-1,0);
\node (h1) at (-4.7,1.5) {$H_1$};
\end{tikzpicture}
 \caption{An example for Case~1 of Lemma~\ref{lem:lastday}.
 \label{fig:lastday}}
\end{center}
\end{figure}
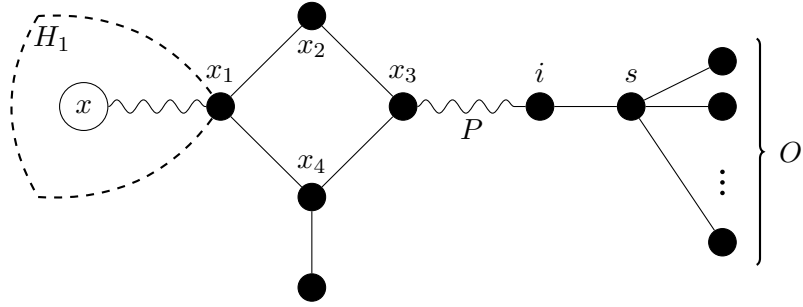
Now let $H'_1=C\cup H_1 \cup H_2 \cup H_4$ and let $P'$ be the empty path.
Then $(P',x_2,x_4)$ is a 2-3 path in $(H'_1,x_3)$. By Lemma~\ref{lem:phg-23path}, $H$ has
a hardness gadget that satisfies the distance requirements for
every $v\in V(H)\setminus (V(H_3)\cup \{x_2,x_4\})$ including $v=x$.

\paragraph{Case 2.} $(H_3,x_3)$ is a shortcut-free mosaic, which means
that $H' = C\cup H_2\cup H_3\cup H_4$ is an involution-free proper
mosaic when rooted at either $x_1$ or $x_3$.

First, suppose that $x\neq x_1$.  If $\deg_H(x_1)$ is odd then, by
Lemma~\ref{lemma:mosaic-oddroot}, $H$~has a hardness gadget that
satisfies the distance requirements for~$x_1$ and this also satisfies
the distance requirements for $x$.  If $\deg_H(x_1)$ is even, let
$s=x_1$, let $i$ be $s$'s~neighbour on the shortest path to~$x$ and
let $P$~be the unique shortest path from $x$ to~$i$ in~$H$.
$(s,i,\Gamma_{\!H}(s)\setminus \{i\},P)$ is a partial hardness gadget
in~$H$.

Finally, suppose that $x=x_1$.  Since $x$~is not a cycle-separating
cut vertex, every component of the split of~$H$ at~$x$ apart from
$(H'\!,x)$ is a tree.  If any of these contains more than one edge, it
contains a partial hardness gadget by Lemma~\ref{lemma:phg-tree}.
Otherwise, either $(H'\!,x)$ is the unique component of the split, or
there is exactly one other component, which is the one-edge tree
rooted at~$x$.  In either case, $(H,x)$~is a mosaic.  If $(H,x)$ is
shortcut-free, we are done; if not, it contains a hardness gadget
satisfying the distance requirements for~$x$, by
Corollary~\ref{corollary:mosaic-hardness}.
\end{proof}

We now show how to find hardness gadgets in unrooted cactus graphs by
choosing an appropriate vertex to act temporarily as a root.

\begin{theorem}
\label{thm:hardness-gadget}
Every involution-free cactus graph $H$ with more than one vertex contains a hardness gadget.
\end{theorem}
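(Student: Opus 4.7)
The plan is to apply Lemma~\ref{lem:lastday} at a well-chosen cut vertex of $H$ and combine the resulting structures across components using the combination lemmas of Section~5. First I would verify that $H$ has a cut vertex: since $H$ is connected, involution-free, and has more than one vertex, it cannot be a single edge (which has the endpoint-swap involution) nor a single cycle (which has a reflection involution fixing any chosen vertex), and every 2-connected block of a cactus graph is a cycle, so $H$ has at least two blocks and therefore a cut vertex.

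For a chosen cut vertex $x$, I would consider the split $\{H_1,\dots,H_\kappa\}$. By Observation~\ref{obs:rootsplit} each $(H_j,x)$ is involution-free, so Lemma~\ref{lem:lastday} classifies each component as (i)~containing a hardness gadget satisfying the distance requirements for $x$, (ii)~containing a partial hardness gadget, or (iii)~being a shortcut-free mosaic. A gadget from case~(i) lifts to a hardness gadget in $H$ because $x$ separates $H_j$ from the rest of $H$, so the distance requirements transfer. Otherwise the remaining components combine via Section~5: two partial hardness gadgets via Lemma~\ref{lemma:phg-phg-hardness}; a partial hardness gadget and a proper mosaic via Corollary~\ref{corollary:phg-mosaic-hardness}; two proper mosaics via Lemma~\ref{lemma:mosaic-mosaic-hardness}; or a single proper mosaic together with $\deg_H(x)$ odd via Lemma~\ref{lemma:mosaic-oddroot}. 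The only configuration that escapes all of these combinations is $\kappa = 2$ with one non-trivial component (class~(ii) or a proper mosaic) and one single-edge component $\{x,v\}$ where $v$ is a leaf of $H$; two or more single-edge components are ruled out by involution-freeness, since swapping two such leaves would give an involution of $H$.

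To handle this residue I would descend to a cut vertex strictly inside the non-trivial component. Such a cut vertex $y \neq x$ exists because the non-trivial component cannot be 2-connected: a 2-connected cactus block is a cycle, but any cycle rooted at any vertex admits a reflection involution fixing the root, contradicting the involution-freeness of the rooted component guaranteed by Observation~\ref{obs:rootsplit}. The vertex $y$ is also a cut vertex of $H$, so I repeat the analysis at $y$; the component of the split at $y$ containing the bristle leaf $v$ absorbs more of $H$ and in particular is never a single edge, so the interior portion of $H$ being probed shrinks strictly at each iteration. The main obstacle will be verifying that this descent always terminates in a combinable configuration rather than exhausting the graph: I would establish this by applying the same reflection argument at each step to rule out a terminal non-trivial 2-connected component, which forces the descent to eventually yield a split with at least two non-trivial components, at which point the appropriate combination lemma among Lemma~\ref{lemma:phg-phg-hardness}, Corollary~\ref{corollary:phg-mosaic-hardness}, Lemma~\ref{lemma:mosaic-mosaic-hardness}, Lemma~\ref{lemma:mosaic-oddroot}, and Lemma~\ref{lem:phg-23path} supplies the required hardness gadget in $H$.
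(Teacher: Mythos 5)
Your proposal shares the paper's top-level skeleton — split at a cut vertex, apply Lemma~\ref{lem:lastday} to each rooted piece, and combine via the lemmas of Section~5/6 — but where the paper handles the single stubborn configuration by a clean extremal device, you attempt an ad-hoc descent whose termination you have not actually established. The paper chooses the cut vertex~$x$ to \emph{maximize} $|V(H'_2)|$ (the second-largest component of the split). If that maximum exceeds~$2$, both $(H'_1,x)$ and $(H'_2,x)$ are non-trivial and the combination lemmas apply immediately; if it equals~$2$, the extremal choice forces $H$ to be a single cycle (of length $\geq 6$) with bristles, to which Lemma~\ref{lem:cycles-hardness} applies directly. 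Your proposal omits Lemma~\ref{lem:cycles-hardness} entirely, yet that lemma is precisely what resolves the boundary case.

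The concrete gap is in the sentence ``applying the same reflection argument at each step $\dots$ forces the descent to eventually yield a split with at least two non-trivial components.'' That inference does not hold. For the prototypical residue graph — a cycle of length $\ell\geq 6$ with bristles attached at an asymmetric set of cycle vertices — \emph{every} cut vertex of~$H$ is a bristle-attachment vertex, and splitting at any of them produces exactly one single-edge component plus one non-trivial component; the descent never produces two non-trivial components, no matter where you recurse. (In fact, in that case the non-trivial component already falls into case~(i) of Lemma~\ref{lem:lastday} via Lemma~\ref{lem:cycles-hardness}, so there is no residue to escape; but your proof does not show this, and the assertion that the descent ``yields a split with at least two non-trivial components'' is simply false there.) More broadly, even granting that a cut vertex $y\neq x$ always exists inside the non-trivial component (which is true but needs the block-bouquet argument, not merely ``not 2-connected''), you have no monotone quantity that guarantees the sequence $x=x_1,x_2,\dots$ doesn't revisit earlier configurations, and you never specify how $y$~is chosen. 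The paper's one-shot extremal choice sidesteps all of this, and the price is only that one must prove the structural consequence $|V(H'_2)|=2 \Rightarrow H$ is a bristled cycle — an argument that your proposal would also need but does not supply.

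Two smaller issues: your residue characterisation lists ``a proper mosaic'' as a possible non-trivial component, but that case cannot actually survive — if $(H'_1,x)$ were a proper mosaic and $\kappa=2$ with the other component a bristle, then $x$ would lie on a single 4-cycle with no bristle in $H'_1$, so $\deg_H(x)=3$ is odd and Lemma~\ref{lemma:mosaic-oddroot} applies. Also, your argument that the non-trivial component has a cut vertex \emph{other than}~$x$ needs one more step: ruling out 2-connectedness gives at least two blocks, but you must still rule out the bouquet configuration where every block passes through~$x$ (which is done by observing that such a bouquet always carries an involution fixing~$x$). Neither of these is fatal, but the termination gap is.
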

\begin{proof}
Split $H$ at a cut vertex~$x$ into rooted components $(H'_1,x), \dots,
(H'_k,x)$ with $|V(H'_1)|\geq \dots \geq |V(H'_k)|$,
choosing~$x$ to maximize $|V(H'_2)|$.  Each $(H'_j,x)$ is an
involution-free, rooted cactus graph and, if any of them
contains a hardness gadget satisfying the distance requirements
for~$x$, then this is also a hardness gadget in~$H$ and
we are done. Otherwise, by
Lemma~\ref{lem:lastday}
each $(H'_j,x)$ contains a partial hardness gadget or is a
shortcut-free mosaic.

If $|V(H'_2)|>2$, then each of $(H'_1,x)$ and $(H'_2,x)$ is either a
proper mosaic or contains a partial hardness gadget.
Therefore, $H$~contains a hardness gadget, by
Lemma~\ref{lemma:phg-phg-hardness} (two partial hardness gadgets),
Lemma~\ref{lemma:mosaic-mosaic-hardness}
(two proper mosaics) or Corollary~\ref{corollary:phg-mosaic-hardness}
(one of each).

$|V(H'_2)|$ cannot be~$1$ since $H$ is involution free.
So suppose that $|V(H'_2)|=2$.
$H$~is not a tree because then
there would be a vertex~$y$ with $\deg_H(y)\geq 3$, and
choosing $x=y$ would give $|V(H'_2)|>2$.  $H$~does not have two cycles because
every involution-free
cactus graph with two cycles
 contains a vertex~$y$ with $\deg(y)\geq 3$
and choosing $x=y$ would give $|V(H'_2)|>2$.  Further, $x$ must be on
$H$'s single cycle as, otherwise, we could have
chosen a vertex on the path from $x$ to the cycle as our cut vertex
and, again, obtained $|V(H'_2)|>2$.

Let the single cycle~$C$ of~$H$ be $x_1 x_2 \cdots x_\ell x_1$.
For $j\in [\ell]$, let $H_j$ be the component containing~$x_j$ in $H-E(C)$.
Clearly, $|V(H_j)| \leq 2$ --- otherwise we could have chosen $x=x_j$  to
achieve $|V(H'_2)| >2$.

For $H$ to be
involution-free, we must
have $\ell\geq 6$.
   Since $H$ is involution-free, the rooted graph
$(H[V(H)\setminus V(H_1)\cup \{x_1\}],x_1)$ is involution free, and
for each $j\in[\ell(C)]\setminus\{1\}$,
$H_j$ is an isolated vertex or a bristle, so
$(H_j,x_j)$
is a mosaic.
Lemma~\ref{lem:cycles-hardness} guarantees the existence of a hardness gadget
in $H$.
\end{proof}


\section{Homomorphisms to cactus graphs}

We now use hardness gadgets to show \parp{}-completeness of
\parhcol{H} for non-trivial involution-free cactus graphs~$H$.  This
is more complicated than the case for trees because an involution-free
cactus graph is not necessarily asymmetric --- recall, for example,
the graph~$H_4$ in Figure~\ref{fig:easy-hard}.  We first investigate
the automorphisms of cactus graphs and then give our reduction from
\paris{}.

\subsection{Automorphisms of cactus graphs}

A \emph{centre} of a graph~$H$ is a vertex~$x$ that minimises $\max_{y\in
  V(H)} \dist_H(x,y)$. 
It is well known that every tree has either one or two centres and that,
if a tree has two centres, they are connected by an edge. 
This fact was apparently first proved by Jordan in 1869~\cite{Jordan}.  

\begin{lemma}\label{lemma:vec-auto}
  Every cactus graph $H=(V,E)$ has a set $S \subseteq V$ such that $H[S]$ is (a) a vertex, (b) an edge, or (c) a cycle
and, for every $\pi \in \Aut(H)$, $\pi(S)=S$.
\end{lemma}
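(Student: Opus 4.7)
The plan is to apply Jordan's classical centroid theorem for trees (the result cited as \cite{Jordan}) to the \emph{block--cut tree} $T$ of $H$. Recall that $T$ is the bipartite tree whose node set is the disjoint union of the blocks of $H$ and the cut vertices of $H$, with an edge between a cut-vertex node $v$ and a block node $B$ whenever $v\in V(B)$. Since $H$ is a cactus graph, every block is either a single edge (a bridge) or a chordless cycle of length at least three, and $H[V(B)]=B$ for every block $B$ (this uses the standard fact that two distinct blocks share at most one vertex, so no ``extra'' edges appear on $V(B)$).

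First I would verify that every $\pi\in\Aut(H)$ induces an automorphism $\tilde\pi$ of $T$. This is routine: $\pi$ preserves the property of being a cut vertex, sends blocks to blocks of the same isomorphism type (bridge, or cycle of the same length), and preserves the incidence of cut vertices in blocks. In particular, $\tilde\pi$ respects the bipartition of $T$ into block nodes and cut-vertex nodes. Applying Jordan's theorem to $T$, its centre consists either of a single node, or of two nodes joined by an edge, and any automorphism of $T$ setwise fixes this centre.

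I would then define $S$ according to the centre. If $T$ has a single central node that is a cut vertex $v$, take $S=\{v\}$, which gives case~(a). If the single central node is a block $B$, take $S=V(B)$: since blocks of a cactus are edges or cycles, this gives case~(b) or~(c). If $T$ has two centres, they are adjacent and, by the bipartition of $T$, one is a block $B$ and the other is a cut vertex $v\in V(B)$; since $\tilde\pi$ respects the bipartition it cannot swap the two centres, so it fixes each of them individually, and $S=V(B)$ works again. In every case, the fact that $\tilde\pi$ fixes the relevant node of $T$ translates directly into $\pi(S)=S$ in $H$. The trivial case $|V(H)|=1$ is handled by taking $S=V(H)$.

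The main (and essentially only) thing to check carefully is that $\tilde\pi$ is well-defined as an automorphism of $T$ and that it respects the bipartition; everything else is a short case analysis driven by Jordan's theorem, so I do not expect a serious technical obstacle.
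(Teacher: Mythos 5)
Your proof is correct, and it takes a genuinely different route from the paper's. The paper builds a tree $T$ by contracting every edge of $H$ that lies on a cycle, applies the centre theorem to $T$, and then --- in the case where the preimage of the central node of $T$ is a union of several cycles --- builds a second auxiliary graph $T'$ on those cycles and repeats the argument. Your block--cut tree approach invokes Jordan's theorem only once: each cycle block is its own node and each cut vertex its own node, so a ``hub'' where several cycles meet does not collapse into a single messy contracted vertex. The bipartition of the block--cut tree also handles the two-centre case very cleanly (an automorphism that respects the bipartition cannot swap a block-node with a cut-vertex-node, so each centre is fixed individually), whereas the paper instead argues separately that there is a unique bridge, or a unique intersection vertex, between the two central pieces. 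Your approach additionally sidesteps a subtlety in the paper's argument: when three or more cycles share a common vertex, the paper's $T'$ contains a clique of size at least three and hence is not a tree, so the ``one centre, or two adjacent centres'' dichotomy invoked for $T'$ does not immediately apply there, whereas the block--cut tree is a tree by construction and no such issue arises. One small inaccuracy in your write-up worth flagging: the fact that $H[V(B)]=B$ for a block $B$ follows from the maximality of $B$ as a $2$-connected subgraph (adding an edge between two vertices of a $2$-connected $B$ would keep it $2$-connected, contradicting maximality), not from the fact that two distinct blocks share at most one vertex; the conclusion is standard and correct, so your argument is unaffected.
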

\begin{proof}
Let  $T$
be the tree formed from~$H$ by contracting all
edges of~$H$ that are on cycles.
Let $\rho\colon V(H) \to V(T)$ be the function that maps every vertex of~$V(H)$
to corresponding vertex in the contraction~$T$.

We will first consider the case that~$T$ has a single centre, $c$.
In this case, every automorphism of~$T$ fixes~$c$ so every automorphism
of~$H$ fixes $\rho^{-1}(c)$ setwise.
If $|\rho^{-1}(c)|=1$ then we are finished.
Otherwise, let
$C_1,\ldots,C_\kappa$ be the cycles of~$H$ 
containing vertices in~$\rho^{-1}(c)$.
Let $V(C_i)$ denote the set of vertices in~$C_i$ 
so $\rho^{-1}(c)=V(C_1)\cup \cdots \cup V(C_\kappa)$.
Let $T'$ be the graph with vertex set $\{C_1,\ldots,C_\kappa\}$
in which $(C_i,C_j)$ is an edge if and only cycles~$C_i$ and~$C_j$ intersect in~$H$.
Again, there are two cases.
First, suppose that $T'$ has a single centre, $C_i$.
In this case, every automorphism of~$T'$ fixes~$C_i$, so every automorphism
of $H[ \rho^{-1}(c)]$ fixes~$V(C_i)$ setwise. 
Thus, every automorphism of~$H$ fixes~$V(C_i)$ setwise so we are finished.
Second, suppose that $T'$ has two centres, $C_i$ and $C_j$.
In this case, every automorphism of~$T'$ fixes the edge $(C_i,C_j)$
so every automorphism of~$H[\rho^{-1}(c)]$ fixes $V(C_i)\cup V(C_j)$ setwise.
Since $H[\rho^{-1}(c)]$ is a cactus graph,  there is exactly one vertex, $w$,
in $V(C_i)\cap V(C_j)$. Thus, every automorphism of~$H[\rho^{-1}(c)]$ fixes~$w$
so every automorphism of~$H$ fixes~$w$ so we are finished.

Finally, suppose that $T$ has two centres~$c$ and~$c'\!$.
Every automorphism of~$T$ fixes the edge $(c,c')$.
Since $H$ is a cactus graph, there is exactly one edge $(u,u')$ of~$H$ with
$\rho(u)=c$ and $\rho(u')=c'\!$. Thus, every automorphism of~$H$ fixes the edge $(u,u')$
so it fixes $\{u,u'\}$ setwise so we are finished.
\end{proof}

\begin{lemma}\label{lemma:rcactus-involution}
If $(H,x)$ is a rooted cactus graph with a non-trivial automorphism, then it has an involution.
\end{lemma}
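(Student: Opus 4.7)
I plan to proceed by strong induction on $|V(H)|$; the base case $|V(H)|=1$ is vacuous. For the inductive step, I decompose $H$ at~$x$ by identifying the blocks $B_1,\dots,B_m$ of~$H$ containing~$x$, each of which is a bridge or a cycle because $H$~is a cactus graph. For each such block~$B_i$ I form the \emph{atom}~$A_i$, consisting of~$B_i$ together with every rooted sub-cactus of~$H$ that hangs from a non-$x$ vertex of~$B_i$; these sub-cacti are pairwise disjoint and each is attached at a single vertex, because we are in a cactus graph.  The atoms share only the vertex~$x$, $H=A_1\cup\cdots\cup A_m$, and every $\sigma\in\Aut(H,x)$ permutes atoms while preserving isomorphism type.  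Grouping the atoms by isomorphism type with multiplicities $n_1,n_2,\dots$ yields
\[
  \Aut(H,x)\;\cong\;\prod_k \bigl(\Aut(A^{(k)},x)\wr S_{n_k}\bigr).
\]

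If some $n_k\geq 2$, fixing any isomorphism $\phi\colon A_i\to A_j$ between two atoms of the same type and acting by $\phi$ on~$A_i$, by $\phi^{-1}$ on~$A_j$, and by the identity elsewhere gives an involution of $(H,x)$. Otherwise, all atoms are pairwise non-isomorphic, so $\Aut(H,x)=\prod_i\Aut(A_i,x)$, and by hypothesis some $\Aut(A_i,x)$ is non-trivial.  If this $A_i$ comes from a bridge $xy$, then $\deg_{A_i}(x)=1$, so every automorphism of $(A_i,x)$ also fixes~$y$; hence $\Aut(A_i,x)\cong\Aut(A_i-\{x\},y)$, and since $|V(A_i-\{x\})|<|V(H)|$ the induction hypothesis supplies an involution, which extends to $(H,x)$ by acting as the identity on the other atoms.

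The remaining case is that $A_i$ comes from a cycle $C=xv_1\cdots v_{\ell-1}x$, with rooted sub-cacti $T_1,\dots,T_{\ell-1}$ hanging from the~$v_j$.  Any automorphism of $(A_i,x)$ acts on~$C$ either as the identity or as the unique reflection fixing~$x$.  If some automorphism acts as the reflection, then $T_j\cong T_{\ell-j}$ as rooted graphs for every~$j$, and I build an explicit involution by choosing isomorphisms $\phi_j\colon T_j\to T_{\ell-j}$ with $\phi_{\ell-j}=\phi_j^{-1}$ (taking the identity on~$T_{\ell/2}$ when~$\ell$ is even). Otherwise every automorphism of $(A_i,x)$ fixes each~$v_j$ individually, so $\Aut(A_i,x)=\prod_j\Aut(T_j,v_j)$ and some $\Aut(T_j,v_j)$ is non-trivial; induction applies to the strictly smaller rooted graph $(T_j,v_j)$. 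The main point to watch is that the induction actually terminates when the decomposition is trivial, i.e.\ $m=1$ and $A_1=H$: in the bridge sub-case $\deg_H(x)=1$ and we pass to $H-\{x\}$, while in the cycle sub-case the cycle~$C$ contributes at least three vertices not in any single~$T_j$, so each $T_j$ is strictly smaller than~$H$.  With this size control in place, the only real obstacle is the bookkeeping of the reflection case for cycle atoms, which is what distinguishes cactus graphs from trees.
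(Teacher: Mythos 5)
Your proof is correct and takes a more refined route than the paper. The paper also inducts on $|V(H)|$: it dispatches $\deg_H(x)=1$ by passing to $(H-x,y)$ and, for $\deg_H(x)>1$, takes the split of $H$ at $x$, then either recurses into a split component $(H_i,x)$ with a non-trivial automorphism or swaps two isomorphic split components. When $x$ is a cut vertex this coincides exactly with your atom decomposition (each split component is one atom), so the isomorphic-atoms swap and the bridge sub-case of your argument mirror the paper's. Where you genuinely diverge is the cycle-atom case: your decomposition by blocks automatically covers the possibility that $x$ has degree~$2$ and lies on a cycle without being a cut vertex, in which situation the split of $H$ at $x$ is trivial ($\kappa=1$ and $H_1=H$) and the paper's recursion would not actually decrease $|V(H)|$. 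You handle this by observing that an automorphism of the cycle atom fixing $x$ restricts to either the identity or the reflection through $x$ on the cycle, yielding either an explicit reflection-involution (with the matched isomorphisms $\phi_{\ell-j}=\phi_j^{-1}$, and the identity on the diametrically opposite sub-cactus when $\ell$ is even) or a recursion into a strictly smaller rooted graph $(T_j,v_j)$. Your version therefore explicitly covers a case the paper's proof glosses over, and the size control you spell out for the cycle sub-case is exactly what is needed to make the induction go through.
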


\begin{proof}
We prove this by induction on $n=|V(H)|$. 
The base case applies when $n\leq 2$. In this case, $(H,x)$ has no
non-trivial automorphisms.

For the inductive step, suppose $n>2$. 
If $\deg_H(x)=1$, then 
let $y$ be the neighbour of $x$ in $H$. 
If $(H,x)$ has a non-trivial automorphism then this induces a non-trivial automorphism
of $(H-x,y)$. By the inductive hypothesis, $(H-x,y)$ then has an involution,
which can be extended to an involution of $(H,x)$ by mapping~$x$ to itself.

So suppose that
$\deg_H(x)>1$ and that $(H,x)$ has a non-trivial automorphism.
Let $\{H_1,\ldots,H_\kappa\}$ be the split of $H$ at $x$. 
If there exist an $i\in[\kappa]$ such that $(H_i,x)$ has a
non-trivial automorphism, then 
by the inductive hypothesis, $(H_i,x)$ has an involution.
This  can be extended to an involution of $(H,x)$ by
fixing every vertex in $V(H) \setminus V(H_i)$.
Otherwise, there are distinct $i,j\in[\kappa]$ such that
$(H_i,x)$ and $(H_j,x)$ are
isomorphic. 
The automorphism that exchanges these and fixes all other vertices of~$H$ is
an involution of~$(H,x)$.
\end{proof}

\begin{lemma}\label{lemma:fixed-cycle}
Let $H$ be a cactus graph that has non-trivial automorphisms but no involution.
Then it
contains a cycle  $x_1 \dots x_{\ell} x_1$
such that every non-trivial automorphism of~$H$ induces a non-trivial
rotation of~$C$. Further, any two vertices at distance~2 in~$C$ are in different orbits under
the action of $\Aut(H)$.
  \end{lemma}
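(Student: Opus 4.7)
The plan is to apply Lemma~\ref{lemma:vec-auto} to produce an $\Aut(H)$-invariant set $S \subseteq V(H)$ for which $H[S]$ is a single vertex, a single edge, or a cycle, and then to use Lemma~\ref{lemma:rcactus-involution} to rule out the first two cases. If $H[S] = \{v\}$, then every automorphism of~$H$ fixes~$v$, so any non-trivial $\sigma \in \Aut(H)$ is a non-trivial automorphism of the rooted graph $(H,v)$, and Lemma~\ref{lemma:rcactus-involution} produces an involution of~$(H,v)$, hence of~$H$, contradicting the hypothesis. If $H[S] = \{u,v\}$, either every automorphism fixes both endpoints (the previous argument, applied with root~$u$, finishes this case), or some $\sigma$ swaps $u$ and~$v$; then $\sigma^2$ fixes~$u$, and either $\sigma^2 = \id$ (so $\sigma$ itself is an involution) or $\sigma^2$ is a non-trivial rooted automorphism of $(H,u)$ and Lemma~\ref{lemma:rcactus-involution} again supplies an involution.

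Henceforth $H[S] = C = x_1 \dots x_\ell x_1$. The restriction map $\Aut(H) \to \Aut(C) \cong D_\ell$ is a group homomorphism; the first claim of the lemma will follow once I show that it is injective with image contained in the rotation subgroup $\mathbb{Z}/\ell$. For injectivity, a non-trivial $\sigma$ fixing $V(C)$ pointwise is a non-trivial rooted automorphism of $(H,x_1)$, so Lemma~\ref{lemma:rcactus-involution} gives an involution of~$H$, a contradiction. For the absence of reflections, if $\sigma$ induced a reflection then $\sigma^2$ would fix $V(C)$ pointwise, so by injectivity $\sigma^2 = \id$, making $\sigma$ itself an involution.

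Set $G = \Aut(H)$. By the previous paragraph, $G$ is cyclic and its image in $\Aut(C)$ is generated by a rotation of some step size~$r$ with $r \mid \ell$, and the $G$-orbits on $V(C)$ are the arithmetic progressions $\{x_j, x_{j+r}, x_{j+2r}, \dots\}$. The second claim of the lemma becomes $r \nmid 2$, so I would rule out $r \in \{1,2\}$. For each $j \in [\ell]$, let $H_j$ be the connected component of $H - E(C)$ containing~$x_j$; the $V(H_j)$'s partition $V(H)$ and every edge of~$H$ lies either inside some~$H_j$ or on~$C$. Since rotation by~$r$ is in~$G$, there is some $\sigma \in G$ restricting to that rotation on~$C$, and $\sigma$ induces rooted isomorphisms $(H_j,x_j) \to (H_{j+r},x_{j+r})$ for every~$j$. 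When $r \in \{1,2\}$ one checks that $r \mid 2(j-1)$ for every~$j$, so iterating gives $(H_j,x_j) \cong (H_{2-j},x_{2-j})$ for every~$j$. I would then construct an involution $\tau$ of~$H$ as follows: on~$C$, let $\tau$ act as the reflection $x_j \mapsto x_{2-j}$; for each off-diagonal pair of indices $\{j, 2-j\}$ with $j \not\equiv 2-j \pmod{\ell}$, choose any rooted isomorphism $\phi_j\colon H_j \to H_{2-j}$ and set $\tau|_{V(H_j)} = \phi_j$ and $\tau|_{V(H_{2-j})} = \phi_j^{-1}$; on each self-paired component (namely $H_1$, and additionally $H_{1+\ell/2}$ when $\ell$~is even) set $\tau|_{V(H_j)} = \id$. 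Because the $V(H_j)$'s are disjoint and every inter-component edge lies on~$C$, the resulting $\tau$ is a graph automorphism; it satisfies $\tau^2 = \id$ by construction, and it acts on~$C$ as a non-trivial reflection (since $\ell \ge 3$), so it is a non-trivial involution of~$H$, contradicting the hypothesis. Hence $r \ge 3$, giving $r \nmid 2$, so $x_j$ and~$x_{j+2}$ lie in different $G$-orbits.

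The main obstacle is verifying that the constructed $\tau$ really is an order-$2$ graph automorphism of~$H$. The delicate points are that the pairing $j \leftrightarrow 2-j$ combined with $\phi_{2-j} := \phi_j^{-1}$ really yields $\tau^2 = \id$ on each off-diagonal pair; that taking $\phi_j = \id$ on each self-paired component is compatible with the reflection (which fixes the corresponding~$x_j$); and that edge-preservation reduces to properties of the individual~$\phi_j$'s together with the observation that inter-$H_j$ edges are exactly the edges of~$C$ — all of which exploit the cactus structure that makes the components of $H - E(C)$ vertex-disjoint.
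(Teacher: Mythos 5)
Your proposal is correct and follows essentially the same strategy as the paper: invoke Lemma~\ref{lemma:vec-auto} to obtain an $\Aut(H)$-invariant vertex, edge, or cycle, rule out the first two cases via Lemma~\ref{lemma:rcactus-involution}, observe that the (injective) restriction to the cycle lands in the rotation subgroup, and derive a contradiction from a rotation of step~$1$ or~$2$ by assembling an explicit involution that flips the cycle and matches up the isomorphic rooted components $(H_j,x_j)$. Your write-up is if anything slightly more careful than the paper's --- notably in spelling out, via the $\sigma^2$ argument, why a swap of the two endpoints in the edge case and why a reflection in the cycle case each force an involution.
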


\begin{proof}
Note that $H$ is not a tree, since involution-free trees are asymmetric.
By Lemma~\ref{lemma:vec-auto},
there is a set $S\subseteq V[H]$ such that $H[S]$ is a vertex, an edge, or a cycle
and, for every $\pi \in \Aut(H)$, $\pi(S)=S$.

We start by showing that $H[S]$ is a cycle.
If instead $H[S]$ is a vertex~$x$ then $H$ has a non-trivial automorphism~$\pi$ which is
also an automorphism of~$(H,x)$. By Lemma~\ref{lemma:rcactus-involution} there is
an involution of $(H,x)$, contradicting involution-freedom of~$H$.
Similarly, if $H[S]$ is an edge~$(x,y)$ 
then $H$ has no automorphism that swaps~$x$ and~$y$ since this would give an involution.
This means that $H$ has a non-trivial automorphism that fixes~$x$. Once again, this 
is an automorphism of~$(H,x)$, which  contradicts the fact that~$H$ is involution-free.
So, let $H[S]$ be the cycle~$C=x_1  \dots x_{\ell} x_1$.
Note that the automorphism group of~$H$ fixes~$S=V(C)$ setwise
so the restriction of $\Aut(H)$ to~$S$ is a subgroup of the dihedral group acting on~$C$.
Since it does not contain a involution, it is a subgroup of the cyclic group
generated by the permutation~$g=(x_1 x_2 \dots x_{\ell})$.

So, for every non-trivial automorphism~$\pi$ of~$H$, the
restriction of~$\pi$ to~$V(C)$ is $g^{d_\pi}$ for 
some natural number~$d_\pi$.
To show that $g^{d_\pi}$ is a non-trivial rotation of~$C$,
we will show that $d_\pi\neq 0$.
For $i\in  [\ell]$, let $H_i$ be the connected component of $H-E(C)$ containing~$x_i$.
Suppose, for contradiction, that $d_\pi=0$.
Then there is an $i\in[\ell]$ such that the restriction of~$\pi$ to
$(H_i,x_i)$ is non-trivial. By Lemma~\ref{lemma:rcactus-involution}, $(H_i,x_i)$ has an involution~$\pi'\!$.
Therefore, $H$~has an involution, which agrees with $\pi'$ on $H_i$
and fixes every vertex outside~$H_i$.  This gives a contradiction.

Finally, suppose for contradiction  that $x_i$ and~$x_j$ are in the
same orbit, where $x_i,x_j\in C$ and $\dist_H(x_i,x_j)=2$.
Then there is an automorphism~$\pi$ of~$H$ with $\pi(x_i)=x_j$.
So $d_\pi=2$.  
But then $(H_{i'},x_{i'})$ and $(H_{j'},x_{j'})$ are isomorphic whenever $i'$ and~$j'$ have the same parity, 
so $H$ has
an involution which flips the cycle~$C$, contradicting the fact that 
$H$ is involution-free.
\end{proof}

\begin{definition}\label{def:homstar}
Say that a homomorphism $\pi$ from~$H$ to~$H$ (an endomorphism of~$H$)
is \emph{orbit-preserving} if $\pi(v)\in \Orb_H(v)$ for every $v\in
V(H)$.
Let $\Hom^\star(H,H)$ be the set of orbit preserving endomorphisms
of~$H$.
\end{definition}

The following lemma shows that the orbit-preserving endomorphisms of
an involution-free cactus graph are exactly its automorphisms.  Note
that this is not true for all cactus graphs.  For example, any even
cycle $x_1\dots x_{2\ell}x_1$ has an endomorphism mapping all
odd-numbered vertices to~$x_1$ and even-numbered vertices to~$x_2$.
This is orbit-preserving (any cycle has only one orbit) but not an
automorphism.

\begin{lemma}\label{lemma:cactus-pin-whole}
For any
  involution-free cactus graph $H$, $\Aut(H)=\Hom^\star(H,H)$.
\end{lemma}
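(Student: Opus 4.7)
The plan is to verify $\Aut(H)\subseteq \Hom^\star(H,H)$ directly from the definitions, and then prove the reverse inclusion. Let $\pi\in\Hom^\star(H,H)$. If $H$ has only the trivial automorphism, every orbit is a singleton and $\pi=\id$. Otherwise I would invoke Lemma~\ref{lemma:fixed-cycle} to obtain a cycle $C=x_1\dots x_\ell x_1$ such that $\Aut(H)$ restricted to $V(C)$ equals the rotation subgroup $\langle g^d\rangle$ of $\langle g\rangle$, where $g=(x_1\,x_2\,\dots\,x_\ell)$. Combining the distance-$2$ orbit separation of Lemma~\ref{lemma:fixed-cycle} with the observation that $\ell=3$ would make the cyclically isomorphic $H_i$'s swappable and produce an involution of $H$, one deduces $\ell\geq 6$ and $d\geq 3$. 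For $i\in[\ell]$ let $H_i$ be the component of $H-E(C)$ containing $x_i$; the vertex sets $V(H_i)$ are pairwise disjoint.

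The first substantive step is to pin down $\pi|_{V(C)}$. Since $\Orb_H(x_i)\subseteq V(C)$, $\pi$ maps $V(C)$ into itself, and since a cactus graph contains no chord of any cycle, cycle-adjacent vertices must go to cycle-adjacent vertices: writing $\pi(x_i)=x_{f(i)}$, this gives $f(i+1)=f(i)\pm 1$ modulo $\ell$. A sign change between two consecutive steps would force $\pi(x_{i-1})=\pi(x_{i+1})$, hence $\Orb_H(x_{i-1})=\Orb_H(x_{i+1})$, contradicting the distance-$2$ condition. So $f$ is either a rotation or a reflection, and orbit-preservation requires $f(i)\equiv i\pmod{d}$ for every $i$; this rules out reflections when $d\geq 3$ by a short arithmetic check. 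Therefore $f=g^c$ with $d\mid c$, so there exists $\sigma\in\Aut(H)$ with $\sigma|_{V(C)}=g^c$. Replacing $\pi$ by $\pi':=\sigma^{-1}\circ\pi$ reduces the task to showing that any orbit-preserving endomorphism of $H$ which fixes $V(C)$ pointwise must be the identity.

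To carry out this final step, fix $y\in V(H_i)\setminus\{x_i\}$ and choose $\tau\in\Aut(H)$ with $\tau(y)=\pi'(y)$. Then $\tau$ acts on $C$ as some rotation $g^m$ with $d\mid m$ and sends $H_i$ isomorphically onto $H_{i+m}$, so $\pi'(y)\in V(H_{i+m})$ and $\dist_{H_{i+m}}(x_{i+m},\pi'(y))=\dist_{H_i}(x_i,y)$. The key observation is that in a cactus graph the shortest $x_i$--$z$ walk for any $z\in V(H_i)$ stays inside $H_i$, so $\dist_H(x_i,y)=\dist_{H_i}(x_i,y)$; if $m\not\equiv 0\pmod{\ell}$ then any walk from $x_i=\pi'(x_i)$ to $\pi'(y)$ must traverse at least $\dist_H(x_i,x_{i+m})\geq 1$ cycle edges before entering $H_{i+m}$, contradicting the homomorphism bound $\dist_H(\pi'(x_i),\pi'(y))\leq\dist_H(x_i,y)$. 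Thus $\pi'(y)\in V(H_i)$; since the $V(H_j)$ are pairwise disjoint, this forces $\tau|_{V(C)}=\id$, so $\tau$ is a root-fixing automorphism of $(H,x_1)$. By Lemma~\ref{lemma:rcactus-involution} and the involution-freedom of $H$, $\tau=\id$, so $\pi'(y)=y$ and hence $\pi=\sigma\in\Aut(H)$.

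The main obstacle is the middle paragraph: carefully controlling $\pi$ on $V(C)$ by combining the chord-free structure of cactus graphs, the distance-$2$ orbit separation, and the mod-$d$ arithmetic of orbit-preservation, so that $\pi|_{V(C)}$ necessarily lies inside $\Aut(H)$ restricted to $V(C)$ and can therefore be undone by a genuine automorphism of $H$.
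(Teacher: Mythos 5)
Your proof is correct, but it takes a genuinely different route from the paper's. The paper argues by contradiction: it supposes there is a $\phi\in\Hom^\star(H,H)$ that is not an automorphism (hence not injective), locates vertices $x\neq y$ with $\phi(x)=\phi(y)$ in components $H_i$, $H_{i'}$ hanging off the fixed cycle $C$, post-composes with an automorphism to produce $\phi'$ with $\phi'(x_{i'})=x_{i'}$ but also $\phi'(x_i)=x_{i'}$ for some $i\neq i'$, and then ``walks around the cycle'' using the distance-$2$ orbit-separation to show inductively that $\phi'$ must fix every $x_j$, a contradiction. You instead argue directly: using the chord-freeness of cactus graphs and the orbit congruence $f(i)\equiv i\pmod d$ you show that $\pi|_{V(C)}$ is a rotation that is realised by an actual automorphism $\sigma$, reduce to the case $\pi'=\sigma^{-1}\circ\pi$ fixes $V(C)$ pointwise, and then use the ``components of $H-E(C)$ are attached only at their cycle vertex'' distance estimate together with Lemma~\ref{lemma:rcactus-involution} to force $\pi'=\id$. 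A notable difference is that your final step invokes Lemma~\ref{lemma:rcactus-involution} (a non-trivial root-fixing automorphism would yield an involution), whereas the paper reaches its contradiction purely by the cycle-walking argument and never needs that lemma here. Your approach is more constructive --- it exhibits $\pi$ as the automorphism $\sigma$ --- while the paper's is shorter in the endgame; both rest on Lemma~\ref{lemma:fixed-cycle} as the key structural input.
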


\begin{proof}
It is immediate from  Definition~\ref{def:homstar} that $\Aut(H)\subseteq\Hom^\star(H,H)$. 

If $H$ is asymmetric, then the lemma follows from the fact that 
$\Aut(H)$
and $\Hom^\star(H,H)$ are both trivial.

So suppose that $H$ has a non-trivial automorphism.  Note that this
implies that $H$~is not a tree, since every involution-free tree is
asymmetric.
Assume for contradiction that there is a  $\phi\in\Hom^\star(H,H)$ 
that is not an automorphism of~$H$.
The homomorphism~$\phi$ cannot be a permutation  of $V(H)$, since a
bijective homomorphism is an isomorphism.
So there are vertices $x$, $y$, and~$z$ in $V(H)$ with 
$x\neq y$ and
$\phi(x)=\phi(y)=z$. 

Define~$C=x_1  \dots x_{\ell} x_1$ 
as in Lemma~\ref{lemma:fixed-cycle}.
For $i\in  [\ell]$,
let $H_i$ denote the connected component of $H-E(C)$ containing~$x_i$.
Each~$(H_i,x_i)$ is asymmetric since 
every non-trivial automorphism of~$H$ 
induces a non-trivial rotation of~$C$.
(If $(H_i,x_i)$ had a non-trivial automorphism then this could be extended to a non-trivial
automorphism of~$H$ that induces the trivial rotation of~$C$.)
Suppose that
$x\in V(H_i)$, $y\in V(H_{i'})$ and
$z\in V(H_{i''})$. Then $i'\neq i$ since
the definition of $\Hom^\star(H,H)$  
ensures that both~$x$ and~$y$ are in the orbit of~$z$ in $\Aut(H)$
but the only vertex in $V(H_i)$ that is in the orbit of~$x$ is $x$~itself.
  
There is an automorphism~$\pi$ of~$H$ with $\pi(x)=z$,
because $z\in \Orb_H(x)$.
$\pi$ induces a   rotation of~$C$,
$\pi(x_i)=x_{i''}$ so 
$\dist_H(x_i,x)=\dist_H(x_{i''},z)$.
Then, since $\phi(x)=z$ and $\phi$ is orbit-preserving and preserves edges of~$H$,
$\phi(x_i)=x_{i''}$.
Similarly, $\phi(x_{i'})=x_{i''}$.
Let $\pi'$ be an automorphism of~$H$ with $\pi'(x_{i''})=x_{i'}$
and consider the homomorphism 
$\phi' \in \Hom^\star(H,H)$ formed by applying~$\phi$ and then $\pi'\!$.
The homomorphism~$\phi'$ satisfies
$\phi'(x_i)=x_{i'}$ and $\phi'(x_{i'})=x_{i'}$.

Since $\phi'\in \Hom^\star(H,H)$ it maps every vertex of~$H$ to an element of its own orbit.
By Lemma~\ref{lemma:fixed-cycle} the orbit of every vertex in~$C$ is contained within~$C$. Thus, $\phi'(C)=C$.
To simplify the notation,
assume without loss of generality (by relabelling the vertices around~$C$ if necessary) that $i'=1$.
Then $1<i \leq\ell$ since $i\neq i'\!$.
Also, $\phi'(x_1)=x_1$.
Since $(x_1,x_2)$ is an edge, 
$\phi'(x_2)$ must be a neighbour of~$\phi'(x_1)=x_1$.
But $x_\ell$ is not in the orbit of~$x_2$ by Lemma~\ref{lemma:fixed-cycle} so
 $\phi'(x_2)=x_2$.
Similarly, $\phi'(x_j)=x_j$ for every $j\in  [\ell]$.
Specifically, $\phi'(x_i)=x_i$.
This contradicts $\phi'(x_i)=x_1$ since $i\neq 1$.
 \end{proof}

\subsection{Reduction from \paris{}} 
 
In the following definition, ``adding a new path~$P$ from $x$ to~$y$'' in a graph $G$ means forming a graph $G\cup P$
where $V(G)\cap V(P) = \{x,y\}$.

\begin{definition}
\label{def:G-Gamma}
Let $\Gamma=(\beta,s,t,O,i,K,k,w)$ be a hardness gadget in a graph~$H$ and let $G$ be any graph.  We
construct the graph $G_\Gamma$ as follows.  
Let $V''=\{v_e \mid e \in E(G)\}$ and
begin with the graph $G'=(V'\!,E')$
where
 $V' = V(G) \cup V(H) \cup V''$ (these three sets are assumed to be
disjoint) and $E'=E(H)$.
To~$G'\!$, add the following:
\begin{itemize}
\item for every vertex $x\in V(G)$, the edge $(x,s)$;
\item for every edge $e=(x,y)\in E(G)$, the edges $(x,v_e)$ and $(y,v_e)$;
\item for every edge $e\in E(G)$, a new $\beta$-path $P_{t,e}$ from $t$ to~$v_e$; and
\item for every vertex $x\in V(G)$ and every $u\in K$, a new $k(u)$-path $P_{x,u}$ from $x$ to~$w(u)$.
\end{itemize}
In~$G_\Gamma$,
we refer to vertices that are in~$V(G)$ as \emph{$G$-vertices} and
those in~$V(H)$ as \emph{$H$-vertices}. 
Figure~\ref{fig:completeness}
illustrates the construction.
\end{definition}

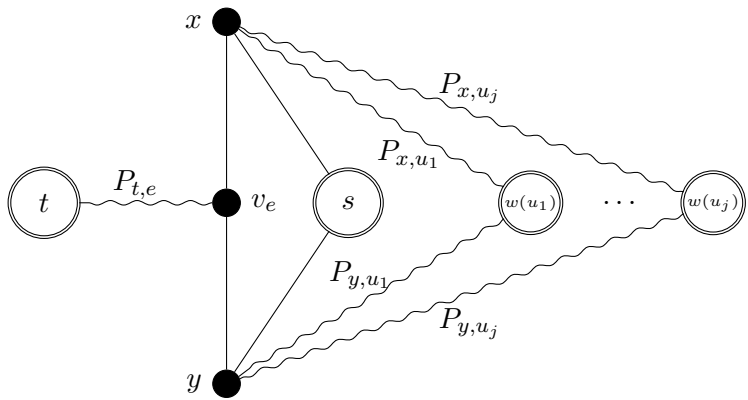
\begin{figure}[t]
\begin{center}
\begin{tikzpicture}[scale=.8]
        \tikzstyle{vertex}=[fill=black, draw=black, circle]
	\tikzstyle{pinned}=[draw=black,double, minimum size=1mm ,circle]

\node[pinned] (t)  at (-1,0) [inner sep=6.5pt] {$t$};
\node[vertex] (m)  at (2,0) [label=0:$v_{e}$] {};
\node[vertex] (x) at (2,3) [label=180:$x$] {};
\node[vertex] (y) at (2,-3)[label=180:$y$] {};
\node[pinned] (s)  at (4,0) [inner sep=6.5pt] {$s$};
\node[pinned] (w1) at (7,0) [inner sep=1pt] {\tiny{$w(u_1)$}};
\node (d) at (8.5,0) {$\dots$};
\node[pinned] (w2) at (10,0) [inner sep=1pt] {\tiny{$w(u_j)$}};

\node (px1) at (5,0.8) {$P_{x,u_1}$};
\node (px2) at (6,1.9) {$P_{x,u_j}$};
\node (py1) at (4.2,-1.2) {$P_{y,u_1}$};
\node (py2) at (6,-2) {$P_{y,u_j}$};

\node (ptxy) at (.5,0.3) {$P_{t,e}$};

\foreach \x in {w1,w2}{\foreach \y in {x,y} \draw[decorate, decoration={snake,amplitude=0.3mm}]
(\x)--(\y);};
\draw (x)--(s)--(y);
\draw (x)--(m)--(y);
\draw[decorate, decoration={snake,amplitude=0.3mm}] (m)--(t);
\end{tikzpicture}
\caption{The construction of $G_\Gamma$. Here $e=(x,y)\in E(G)$ so $x$ and $y$ are
$G$-vertices in $G_\Gamma$.  
The figure illustrates the induced subgraph of~$G_\Gamma$
corresponding to the edge $(x,y)\in E(G)$,
assuming $K=\{u_1,\ldots,u_j\}$. 
$H$-vertices are drawn with double circles; they will
be pinned in the proof of Theorem~\ref{thm:gadget-hard}.\label{fig:completeness}}
\end{center}
\end{figure}
 
Our construction of~$G_\Gamma$ is more complex than the one used  
for trees, because our hardness gadgets are more general than the
corresponding structures   in trees and because we must
deal with graphs~$H$ that are involution-free but still have
non-trivial automorphisms.  To see the problem of non-trivial
automorphisms, consider an involution-free cactus graph~$H$ that
contains a hardness gadget~$\Gamma\!$.  Suppose that $H$~has an
automorphism~$\pi$ that moves~$\Gamma\!$.  For our reduction, we wish
to pin one vertex to the $s$-vertex of~$\Gamma$ and another to the
$t$-vertex.  However, we cannot do this: we can only pin to the orbits
of these vertices, which include $\pi(s)$ and~$\pi(t)$, respectively.
We must avoid counting ``inconsistent'' homomorphisms that, for example,
map the first vertex to~$s$ and the second to~$\pi(t)$ because we do
not know how many such homomorphisms there are.  Including the copy
of~$H$ in~$G_\Gamma$ forbids such homomorphisms because, by pinning
each vertex of the copy of~$H$ to its own orbit, 
we only allow
homomorphisms whose restrictions to the copy of~$H$ are
orbit-preserving endomorphisms. By
Lemma~\ref{lemma:cactus-pin-whole}, these orbit-preserving endomorphisms
are automorphisms of~$H$.  
Hence, we
will count only those homomorphisms that map to $s$ and~$t$ and to $\pi(s)$
and~$\pi(t)$, and not any that map inconsistently between the copies
of $\Gamma\!$.

\begin{theorem}
\label{thm:gadget-hard}
$\parhcol{H}$ is $\parp$-complete for every involution-free cactus
graph~$H$ that contains a hardness gadget.
\end{theorem}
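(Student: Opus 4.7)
The plan is to reduce from $\paris(1,1)$, which is $\parp$-complete by Observation~\ref{obs:paris-complexity}. Given a graph $G$, form $G_\Gamma$ as in Definition~\ref{def:G-Gamma}, and define a pinning function $\pin$ that sends each $H$-vertex $h$ in the embedded copy of $H$ to $\Orb_H(h)$ and leaves every other vertex unconstrained. Since $H$ is fixed, this is an $r$-restrictive pinning for $r=|V(H)|$, so by Theorem~\ref{thm:pinning} it suffices to compute ${\left|\HomPin(G_\Gamma,H,\pin)\right|}\imod 2$ using $\parhcol{H}$ as an oracle. I will show that this quantity is congruent to $|\calI(G)|$ modulo~$2$.

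For the key structural reduction, observe that any $\phi\in\HomPin(G_\Gamma,H,\pin)$, when restricted to the embedded copy of~$H$, is an orbit-preserving endomorphism and hence, by Lemma~\ref{lemma:cactus-pin-whole}, an element of $\Aut(H)$. Grouping homomorphisms by this automorphism~$\pi$, and noting that applying $\pi$ to all images of the remaining vertices bijects between the extensions for $\pi$ and those for $\id$, the total count equals $|\Aut(H)|\cdot N$, where $N$ is the number of extensions when the copy of $H$ is mapped identically. Because $H$ is involution-free, $\Aut(H)$ contains no element of order~$2$, so by Cauchy's theorem $|\Aut(H)|$ is odd, and it remains to compute $N\imod 2$.

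To evaluate $N$, I will analyse each $G$-vertex, each $v_e$, and each attached path. A $G$-vertex $x$ must map to a neighbour of $s$, so to some element of $O\cup\{i\}\cup K$. If $\phi(x)=u\in K$, then the path $P_{x,u}$ contributes the number of $k(u)$-walks from $u$ to $w(u)$, which is even by condition~\ref{hgad4} of Definition~\ref{def:hgad}, so such homomorphisms vanish modulo~$2$. If instead $\phi(x)\in O\cup\{i\}$, then for every $u'\in K$ the path $P_{x,u'}$ contributes an odd factor, again by condition~\ref{hgad4}, so each such path contributes a factor of $1$ modulo~$2$. Thus modulo~$2$ we may restrict to maps $\phi\colon V(G)\to O\cup\{i\}$. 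For each edge $e=(x,y)\in E(G)$, the contribution of $v_e$ together with $P_{t,e}$ is $\sum_{z}\indicator_{z\sim\phi(x)}\indicator_{z\sim\phi(y)}\cdot N_\beta(z,t)$, where $N_\beta(z,t)$ is the number of $\beta$-walks from $z$ to $t$. By condition~\ref{hgad2}, if at least one of $\phi(x),\phi(y)$ lies in $O$, this sum is odd (with $s$ the unique contributing $z$); by condition~\ref{hgad3}, if $\phi(x)=\phi(y)=i$, the sum equals the number of $(1+\beta)$-walks from $i$ to $t$, which is even.

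Putting this together, $N\equiv\sum_{J\in\calI(G)}|O|^{|V(G)\setminus J|}\imod 2$, and since $|O|$ is odd by condition~\ref{hgad1}, this equals $|\calI(G)|\imod 2$. Combining with the odd factor $|\Aut(H)|$ from the previous step gives ${\left|\HomPin(G_\Gamma,H,\pin)\right|}\equiv|\calI(G)|\imod 2$, completing the reduction. The main technical obstacle is the bookkeeping in the third paragraph: one must simultaneously handle the pinning-induced automorphism~$\pi$ and verify that each of the four gadget conditions performs exactly its intended role in filtering contributions to the mod-$2$ sum. The involution-freedom of $H$ enters only through Lemma~\ref{lemma:cactus-pin-whole} and Cauchy's theorem, which together ensure that the automorphism factor is harmless.
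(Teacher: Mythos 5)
Your proof is correct and follows essentially the same strategy as the paper: pin the embedded copy of~$H$ to orbits, invoke Lemma~\ref{lemma:cactus-pin-whole} and Cauchy's theorem to reduce to extensions of the identity, then use conditions~\ref{hgad4}, \ref{hgad2}, and~\ref{hgad3} in turn to filter down to a mod-$2$ count of independent sets. The only cosmetic discrepancy is that you announce a reduction from $\paris(1,1)$ but in fact arrive at $Z_{1,|O|}(G)\bmod 2$, exactly as in the paper's reduction from $\paris(1,|O|)$; this is harmless since the two are identical modulo~$2$ when $|O|$ is odd.
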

\begin{proof}  
Let $\Gamma=(\beta,s,t,O,i,K,k,w)$ be a hardness gadget in~$H$.  
Let $r=|V(H)|$.
We will give a polynomial-time Turing reduction from
$\paris(1, |O|)$ to $\rpinnedparhcol$.  
By the definition of hardness gadgets, $|O|$ is odd, so $\paris(1, |O|)$
is $\parp$-complete by Observation~\ref{obs:paris-complexity}.  
The result  will then follow from Theorem~\ref{thm:pinning}, which reduces
the problem \rpinnedparhcol{} to $\parhcol{H}$.

Let $G$ be an input to $\paris(1, |O|)$
and let $G_\Gamma$ be the graph defined in Definition~\ref{def:G-Gamma}.
Let $\pin$ be the $r$-restrictive pinning function $\pin \colon V(G_\Gamma) \to 2^{V(H)}$
with $\pin(v)=\Orb_H(v)$ for every $v\in V(H)$
and $\pin(w)=V(H)$ for every $w\in V(G_\Gamma) \setminus V(H)$.
The vertices that are restricted by the pinning function~$\pin$ are
drawn with double circles in Figure~\ref{fig:completeness}.
 We will establish the reduction 
 from $\paris(1, |O|)$ to $\rpinnedparhcol$
 by showing  that
 \begin{equation}
 \label{eq:desired}
  Z_{1,|O|}(G) \equiv {\left|\HomPin(G_\Gamma,H,\pin)\right|} \imod 2.
  \end{equation}

The pinning function~$\pin$ pins
every $H$-vertex in $G_\Gamma$   to its own orbit.
Therefore,   Lemma~\ref{lemma:cactus-pin-whole}  
shows that
the restriction of~$\phi$ to~$V(H)$ 
(which we denote $\phi|_{V(H)}$)
is an automorphism of $H$.
For $\pi \in \Aut(H)$, let 
$\Phi_\pi = \{ \phi \in \HomPin(G_\Gamma,H,\pin) \mid \phi|_{V(H)}= \pi\}$.
For any automorphisms~$\pi$ and~$\pi'$
of~$H$,
$|\Phi_\pi|=|\Phi_{\pi'}|$.
Thus, 
$${\left|\HomPin(G_\Gamma,H,\pin)\right|} \equiv
\left|\Aut(H)\right| \cdot \left| \Phi_\id\right|
\imod 2,$$
where $\id$ denotes the identity permutation. 
Since  $H$ is involution-free, 
Cauchy's Group Theorem guarantees that
$\left|\Aut(H)\right|$ is odd~\cite{McKay}.
Thus, 
$${\left|\HomPin(G_\Gamma,H,\pin)\right|} \equiv
  \left| \Phi_\id\right|
\imod 2.$$

For every $\phi\in \Phi_\id$,
$\phi(s)=s$ so, 
since $\Gamma_H(s) =  O \cup \{i\} \cup K$,
 every $G$-vertex~$v$ satisfies
$\phi(v)\in O\cup \{i\} \cup K$. 
Let $\Phi'_\id = \{\phi \in \Phi_\id \mid 
\forall v\in V(G), \phi(v)\in O\cup \{i\}
\}$. 
Consider the decomposition
$$\Phi_\id = \bigcup_{\rho: V(G) \rightarrow V(H)} \{ \phi\in \Phi_\id \mid \phi|_{V(G)}=\rho\}.$$
If $\rho(v)=u\in K$ for some $v\in V(G)$,
then 
$\left|\{ \phi\in \Phi_\id \mid \phi|_{V(G)}=\rho\}\right|$
is even.  This is because the definition of the hardness gadget means
there are an even number of $k(u)$-walks from $w(u)$ to~$u$ in~$H$ so
there are an even number of homomorphisms $\rho'\colon P_{v,w(u)}\to
H$ with $\rho'(v)=u$ and $\rho'(w(u))=w(u)$.
Thus, 
$\left| \Phi_\id \right| \equiv \left| \Phi'_\id \right| \imod 2$
so 
  $${\left|\HomPin(G_\Gamma,H,\pin)\right|} \equiv
  \left| \Phi'_\id\right|
\imod 2.$$

Finally, let $\Phi = \{ \phi \in \Phi'_\id \mid
\forall e=(x,y)\in E, \mbox{$\phi(x)\in O$ or
$\phi(y)\in O$}\}$. Once again, the definition of the hardness gadget ensures that
$| \Phi'_\id| \equiv |\Phi| \imod 2$
because, the even number of $(1+\beta)$-walks from $i$ to~$t$ in~$H$
guarantees an even number of homomorphisms with $\phi(x)=\phi(y)=i$.
So we have shown that 
${\left|\HomPin(G_\Gamma,H,\pin)\right|} \equiv
  \left| \Phi \right|
\imod 2$.
We will conclude the proof of~({\ref{eq:desired}}) by showing that $\left| \Phi \right| \equiv Z_{1,|O|}(G) \imod 2$.

Let $\Psi$ be the
set of functions $\pi \colon V(G) \to O\cup\{i\}$
such that $\pi^{-1}(i)$ is an independent set of~$G$.
Now 
$Z_{1,|O|}(G)  
\equiv |\Psi| \imod 2$ so it will suffice to show
$|\Psi| \equiv |\Phi| \imod 2$.
To do this, note that
$\Phi = \bigcup_{ \pi \in \Psi } \{ \phi \in \Phi_\id \mid \phi|_{V(G)}=\pi\}$.
So it suffices to show that, for every $\pi \in \Psi$,
$|\{ \phi \in \Phi_\id \mid \phi|_{V(G)}=\pi\}|$ is odd.
This follows from the definition of the hardness gadget.
Item~\ref{hgad2} in the definition ensures that for each edge~$e=(x,y)$,
there are an odd number of ways to extend~$\pi$
to $v_e$ and $P_{t,e}$.
Item~\ref{hgad4} ensures that there are an odd number of ways
to extend~$\pi$ to the internal vertices of the paths $P_{x,u}$
and $P_{y,u}$ for $u\in K$.
\end{proof}

We can now prove our main result.

{\renewcommand{\thetheorem}{\ref{thm:main}}
\begin{theorem}
Let $H$ be a simple graph in which every edge belongs to at most one cycle.
If the involution-free reduction of~$H$ has at 
most one vertex then $\parhcol{H}$ is solvable in polynomial time.
Otherwise, $\parhcol{H}$ is complete for $\parp$ with respect to polynomial-time Turing reductions.
\end{theorem}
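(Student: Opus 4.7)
The plan is to combine the polynomial-time tractability arguments (for the case when the involution-free reduction is trivial) with the machinery built up in the preceding sections (for the hard case), using \Lemma\ref{lem:same} to bridge between $H$ and its involution-free reduction~$H^\star$.

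First I would handle the tractable case. If the involution-free reduction~$H^\star$ of~$H$ has at most one vertex, then because $H$ is simple (so~$H^\star$ is also simple, being an induced subgraph of a sequence of fixed-point sets), either $H^\star$~is empty or it is a single vertex without a self-loop. In either case $|\Hom(G,H^\star)|\bmod 2$ is trivial to compute from~$G$ (it is~$1$ iff $G$~has no edges, and iff $G$~is empty in the first subcase). By iteratively applying \Lemma\ref{lem:same} along any sequence of involutions realising $H\rightarrow^* H^\star$, and invoking \Lemma\ref{lem:unique} to justify that the reduction is well-defined, we obtain ${|\Hom(G,H)|}\equiv {|\Hom(G,H^\star)|}\imod 2$, so $\parhcol{H}\in\FP$.

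Next, suppose $H^\star$ has at least two vertices; I would argue hardness as follows. Each component of~$H^\star$ is an induced subgraph of a component of~$H$ and so is itself a (simple) cactus graph. Also, any involution of a component of~$H^\star$ could be extended to an involution of~$H^\star$ by fixing all other vertices, so involution-freedom of~$H^\star$ forces every component of~$H^\star$ to be involution-free. Since $H^\star$ has no involution, it has at most one isolated vertex (else swapping two isolated vertices gives an involution), so having at least two vertices forces some component~$H^\star_1$ to have at least two vertices. Then $H^\star_1$ is a connected, involution-free cactus graph with $|V(H^\star_1)|\geq 2$, so by \Theorem\ref{thm:hardness-gadget} it contains a hardness gadget, and by \Theorem\ref{thm:gadget-hard} the problem $\parhcol{H^\star_1}$ is $\parp$-complete. \Lemma\ref{lem:components} then lifts this to $\parp$-hardness of $\parhcol{H^\star}$, and \Lemma\ref{lem:same} (applied iteratively along the reduction chain $H\rightarrow^* H^\star$) gives a polynomial-time Turing reduction between $\parhcol{H^\star}$ and $\parhcol{H}$, so $\parhcol{H}$ is $\parp$-hard. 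Membership in~$\parp$ is immediate since counting graph homomorphisms to a fixed graph is in~$\numP$.

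Essentially no step requires new ideas: the hardness gadget construction (\Theorem\ref{thm:hardness-gadget}) and the reduction from \paris{} through pinning (\Theorem\ref{thm:gadget-hard}) together do all the real work. The only points that require a moment's care are verifying that the cactus property and involution-freedom both descend from~$H^\star$ to its components, and noting that there must be a non-trivial component (as opposed to two isolated vertices, which would themselves witness an involution). So the ``hard part'' of this final theorem is bookkeeping rather than substance: all the genuinely difficult graph-theoretic work has already been discharged by Theorems~\ref{thm:hardness-gadget} and~\ref{thm:gadget-hard} and by the pinning machinery of \Theorem\ref{thm:pinning}.
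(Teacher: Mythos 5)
Your proposal is correct and follows essentially the same route as the paper's own proof: handle the trivial case via Lemma~\ref{lem:same}, and in the non-trivial case pick a component of the involution-free reduction with at least two vertices, apply Theorems~\ref{thm:hardness-gadget} and~\ref{thm:gadget-hard}, and lift the hardness back to $H$ via Lemma~\ref{lem:components} and Lemma~\ref{lem:same}. You supply a bit more explicit justification than the paper does (e.g.\ that involution-freedom passes to components, and that a non-trivial component must exist because two isolated vertices would yield an involution), but the structure of the argument is the same.
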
}
\begin{proof}
Let $H'$ be the involution-free reduction of~$H$.
If $H'$ has at most one vertex then $\parhcol{H'}$ is trivially solvable in polynomial time.
By Lemma~\ref{lem:same}, every graph~$G$ satisfies
 \begin{equation}
 \label{eq:usetwice}
 {\left| \Hom(G,H) \right|} \equiv {\left| \Hom(G,H') \right|} \imod 2
 \end{equation} 
so $\parhcol{H}$ is also solvable in polynomial time.

If $H'$ has more than one vertex then some component $H_1$ of~$H'$ has more than one vertex
(since $H'$ is involution-free). Also, $H_1$ is involution-free.
Since $H_1$ is an induced subgraph of~$H$,
it is a cactus graph. By Theorems~\ref{thm:hardness-gadget} and~\ref{thm:gadget-hard},
$\parhcol{H_1}$ is $\parp$-hard. By Lemma~\ref{lem:components},
$\parhcol{H'}$ is $\parp$-hard. But 
\eqref{eq:usetwice} gives a reduction from $\parhcol{H'}$ to $\parhcol{H}$, so $\parhcol{H}$ is also $\parp$-hard.
\end{proof}

\bibliographystyle{plain}
\bibliography{\jobname}

\begin{thebibliography}{10}

\bibitem{cactusalg}
B.~Ben-Moshe, B.~K. Bhattacharya, Q.~Shi, and A.~Tamir.
\newblock Efficient algorithms for center problems in cactus networks.
\newblock {\em Theor. Comput. Sci.}, 378(3):237--252, 2007.

\bibitem{wireless}
B.~Ben-Moshe, A.~Dvir, M.~Segal, and A.~Tamir.
\newblock Centdian computation in cactus graphs.
\newblock {\em J. Graph Algorithms Appl.}, 16(2):199--224, 2012.

\bibitem{BG}
A.~A. Bulatov and M.~Grohe.
\newblock The complexity of partition functions.
\newblock {\em Theor. Comput. Sci.}, 348(2--3):148--186, 2005.

\bibitem{CL}
J.-Y. Cai, X.~Chen, and P.~Lu.
\newblock Graph homomorphisms with complex values: a dichotomy theorem.
\newblock In {\em Proc. ICALP (1)}, pages 275--286, 2010.

\bibitem{CH}
N.~Creignou and M.~Hermann.
\newblock Complexity of generalized satisfiability counting problems.
\newblock {\em Inform. Comput.}, 125(1):1--12, 1996.

\bibitem{wbool}
M.~E. Dyer, L.~A. Goldberg, and M.~Jerrum.
\newblock The complexity of weighted {B}oolean \#{CSP}.
\newblock {\em SIAM J. Comput.}, 38(5):1970--1986, 2009.

\bibitem{DG}
M.~E. Dyer and C.~S. Greenhill.
\newblock The complexity of counting graph homomorphisms.
\newblock {\em Random Struct. Algorithms}, 17(3--4):260--289, 2000.

\bibitem{Fab08}
J.~Faben.
\newblock The complexity of counting solutions to generalised satisfiability
  problems modulo~$k$.
\newblock {\em CoRR}, abs/0809.1836, 2008.

\bibitem{Faben}
J.~Faben.
\newblock {\em The Complexity of Modular Counting in Constraint Satisfaction
  Problems}.
\newblock PhD thesis, Queen Mary, University of London, 2012.

\bibitem{FJ}
J.~{Faben} and M.~{Jerrum}.
\newblock The complexity of parity graph homomorphism: an initial
  investigation.
\newblock {\em CoRR}, abs/1309.4033, 2013.

\bibitem{GGJT}
L.~A. Goldberg, M.~Grohe, M.~Jerrum, and M.~Thurley.
\newblock A complexity dichotomy for partition functions with mixed signs.
\newblock {\em SIAM J. Comput.}, 39(7):3336--3402, 2010.

\bibitem{GP86}
L.~M. Goldschlager and I.~Parberry.
\newblock On the construction of parallel computers from various bases of
  {B}oolean functions.
\newblock {\em Theor. Comput. Sci.}, 43:43--58, 1986.

\bibitem{GHLX11}
H.~Guo, S.~Huang, P.~Lu, and M.~Xia.
\newblock The complexity of weighted {B}oolean \#{CSP} modulo~$k$.
\newblock In {\em Proc. STACS}, pages 249--260, 2011.

\bibitem{GLV13}
H.~Guo, P.~Lu, and L.~G. Valiant.
\newblock The complexity of symmetric {B}oolean parity {H}olant problems.
\newblock {\em SIAM J. Comput.}, 42(1):324--356, 2013.

\bibitem{HU}
F.~Harary and G.~E. Uhlenbeck.
\newblock On the number of {H}usimi trees. {I}.
\newblock {\em Proc. Nat. Acad. Sci. U. S. A.}, 39:315--322, 1953.

\bibitem{HN}
P.~Hell and J.~\Nesetril.
\newblock On the complexity of {$H$}-coloring.
\newblock {\em J. Comb. Theory, Ser. B}, 48(1):92--110, 1990.

\bibitem{Jordan}
C.~Jordan.
\newblock Sur les assemblages de lignes.
\newblock {\em J. reine angew. Math.}, 70:185--190, 1869.

\bibitem{lovasz}
L.~Lov{\'a}sz.
\newblock Operations with structures.
\newblock {\em Acta Math. Acad. Sci. Hungar.}, 18:321--328, 1967.

\bibitem{McKay}
J.~H. McKay.
\newblock Another proof of {C}auchy's group theorem.
\newblock {\em The American Mathematical Monthly}, 66:119, 1959.

\bibitem{PZ83}
C.~H. Papadimitriou and S.~Zachos.
\newblock Two remarks on the power of counting.
\newblock In {\em Proceedings of the 6th GI-Conference on Theoretical Computer
  Science}, pages 269--276. Springer-Verlag, 1982.

\bibitem{genome}
B.~Paten, M.~Diekhans, D.~Earl, J.~St. John, J.~Ma, B.~B. Suh, and D.~Haussler.
\newblock Cactus graphs for genome comparisons.
\newblock {\em J. Comput. Biol.}, 18(3):469--481, 2011.

\bibitem{T91}
S.~Toda.
\newblock {PP} is as hard as the polynomial-time hierarchy.
\newblock {\em SIAM J. Comput.}, 20(5):865--877, 1991.

\bibitem{Val06}
L.~G. Valiant.
\newblock Accidental algorithms.
\newblock In {\em Proc. FOCS}, pages 509--517, 2006.

\end{thebibliography}

\end{document}